\crefname{equation}{}{}
\DeclareMathOperator*{\argmax}{arg\,max}
\crefname{algocf}{Algorithm}{Algorithms}
\crefname{equation}{}{} 
\colorlet{refkey}{orange!20}
\colorlet{labelkey}{blue!30}
\crefname{algocf}{Algorithm}{Algorithms}
\numberwithin{equation}{section}
\newtheorem{theorem}{Theorem}[section]
\newtheorem{proposition}[theorem]{Proposition}
\newtheorem{lemma}[theorem]{Lemma}
\crefname{claim}{Claim}{Claims}
\newtheorem{corollary}[theorem]{Corollary}
\newtheorem*{question*}{Question}
\theoremstyle{definition}
\newtheorem{definition}[theorem]{Definition}
\newtheorem*{definition*}{Definition}
\theoremstyle{remark}
\newtheorem*{remark}{Remark}
\newcommand{\set}[1]{\bigg\{ #1 \bigg\}}
\newcommand{\mb}{\mathbb}
\newcommand{\mc}{\mathcal}
\newcommand{\mf}{\mathfrak}
\newcommand{\ol}{\overline}
\newcommand{\on}{\operatorname}
\newcommand{\wh}{\widehat}
\newcommand{\p}{\mathbb{P}}
\newcommand\cal[1]{\mathcal{#1}}
\title{On the sampling Lov\'asz Local Lemma for atomic constraint satisfaction problems}
\author[A1]{Vishesh Jain}
\author[A2]{Huy Tuan Pham}
\author[A3]{Thuy Duong Vuong}
\address{Stanford University, Stanford, CA 94305, USA}
\email{\{visheshj, huypham, tdvuong\}@stanford.edu}
\begin{document}

\begin{abstract}
We study the problem of sampling an approximately uniformly random satisfying assignment for atomic constraint satisfaction problems i.e.~where each constraint is violated by only one assignment to its variables. Let $p$ denote the maximum probability of violation of any constraint and let $\Delta$ denote the maximum degree of the line graph of the constraints.  

Our main result is a nearly-linear (in the number of variables) time  algorithm for this problem, which is valid in a Lov\'asz local lemma type regime that is considerably less restrictive compared to previous works. In particular, we provide sampling algorithms for the uniform distribution on:
\begin{itemize}
    \item $q$-colorings of $k$-uniform hypergraphs with
    \[\Delta \lesssim q^{(k-4)/3 + o_{q}(1)}.\]
    The exponent $1/3$ improves the previously best-known $1/7$ in the case $q, \Delta = O(1)$ [Jain, Pham, Vuong; arXiv, 2020] and $1/9$ in the general case [Feng, He, Yin; STOC 2021]. 
    \item Satisfying assignments of Boolean $k$-CNF formulas with
    \[\Delta \lesssim 2^{k/5.741}.\]
    The constant $5.741$ in the exponent improves the previously best-known $7$ in the case $k = O(1)$ [Jain, Pham, Vuong; arXiv, 2020] and $13$ in the general case [Feng, He, Yin; STOC 2021].
    \item Satisfying assignments of general atomic constraint satisfaction problems with
    \[p\cdot \Delta^{7.043} \lesssim 1.\]
    The constant $7.043$ improves upon the previously best-known constant of $350$ [Feng, He, Yin; STOC 2021].
\end{itemize}
At the heart of our analysis is a novel information-percolation type argument for showing the rapid mixing of the Glauber dynamics for a carefully constructed projection of the uniform distribution on satisfying assignments. Notably, there is no natural partial order on the space, and we believe that the techniques developed for the analysis may be of independent interest. 
\end{abstract}

\maketitle

\section{Introduction}
Let $X_1,\dots, X_{n}$ denote a collection of independent random variables and let $\mc{C} = \{C_1,\dots, C_{m}\}$ denote a collection of events depending on $X_1,\dots, X_n$ (here, the letter $C$ is chosen to represent a ``constraint''). For $C \in \mc{C}$, let $\on{vbl}(C)\subseteq \{X_1,\dots, X_n\}$ be such that $C$ depends only on $X_i \in \on{vbl}(C)$. The celebrated Lov\'asz Local Lemma (LLL) \cite{erdHos1973problems} (stated here in its variable version, symmetric form) asserts that
\begin{equation}
    \label{eqn:LLL-condition}
    e\cdot p\cdot \Delta \leq 1 \implies \mb{P}[\wedge_{i \in [m]}\ol{C_i}] \ge (1 - e\cdot p)^{|\cal{C}|} > 0,
\end{equation}
where $\ol{C}$ denotes the complement of the event $C$, $e$ is the base of the natural logarithm, 
\begin{equation}
\label{eqn:def-p}
p = \max_{i \in [m]}\mb{P}[C_i],
\end{equation} 
and $\Delta \geq 1$ satisfies
\begin{equation}
\label{eqn:def-D}
\#\{j\in [m]: \on{vbl}(C_j)\cap \on{vbl}(C_i)\neq \emptyset\} \leq \Delta \quad\text{for all }i\in [m].
\end{equation}

The original proof of \cref{eqn:LLL-condition} is non-constructive and does not provide an efficient algorithm to find a point in $\wedge_{i \in [m]}\ol{C_i}$ (such a point is called a satisfying assignment). After much work over a period of two decades (cf.~\cite{beck1991algorithmic, alon1991parallel, molloy1998further, czumaj2000coloring, srinivasan2008improved, moser2008derandomizing, moser2009constructive}), the landmark work of Moser and Tardos \cite{moser2010constructive} provided an efficient (randomized) algorithm to find a satisfying assignment whenever the LLL condition (i.e.~the condition on the left hand side of \cref{eqn:LLL-condition}) is satisfied, provided that one is able to efficiently sample from the distribution of $X_i$, and efficiently able to determine the set of constraints that are violated by a given realization of $X_1,\dots, X_n$.\\

In recent years, much attention (cf.~\cite{hermon2019rapid, moitra2019approximate, guo2019counting, guo2019uniform, feng2020fast, feng2020sampling, jain2020towards}) has been devoted to approximate counting and sampling variants of the algorithmic LLL: under conditions similar to the LLL condition, can we efficiently approximately count the total number of satisfying assignments? Can we efficiently sample from approximately the uniform distribution on satisfying assignments?\\ 

This problem turns out to be computationally harder than the problem of efficiently finding one satisfying assignment. Indeed, consider the Boolean $k$-$\on{CNF-SAT}$ problem, in which we are given $n$ Boolean variables $x_1,\dots, x_n$ and $m$ constraints $C_1,\dots, C_m$ such that each constraint depends on exactly $k$ variables, and each constraint is violated by exactly one assignment (out of the $2^{k}$ possible assignments) to its variables. A direct application of the LLL shows that if each constraint shares variables with at most (approximately) $2^{k}/e$ other constraints, then the formula has a satisfying assignment, in which case, the algorithm of \cite{moser2010constructive} efficiently finds such a satisfying assignment. However, it was shown by Bez\'akov\'a et al.~\cite{bezakova2019approximation} that it is $\boldsymbol{\on{NP}}$-hard to approximately count the number of satisfying assignments for a Boolean $k$-CNF formula in which every variable is allowed to be present in $\geq 5\cdot 2^{k/2}$ constraints, even when the formula is monotone.   

On the algorithmic side, both deterministic and randomized algorithms have been devised for approximate counting under LLL-like conditions. On the deterministic side, Moitra \cite{moitra2019approximate} provided a deterministic algorithm to approximately count the number of satisfying assignments of a Boolean $k$-CNF formula in which each constraint shares variables with at most $\Delta \lesssim 2^{k/60}$ other constraints (the $\lesssim$ hides polynomial factors in $k$), provided that $k = O(1)$. Moitra's method was extended by Guo, Liao, Lu, and Zhang \cite{guo2019counting} to provide an efficient deterministic algorithm for approximately counting proper $q$-colorings of $k$-uniform hypergraphs with maximum degree $d$, provided that $q \gtrsim d^{14/(k-14)}$ and $q,k = O(1)$. Recently \cite{jain2020towards}, the authors of this paper showed that for \emph{any} instance of the variable version, symmetric form of the LLL, if each constraint depends on at most $k$ variables and if each variable takes on at most $q$ values, then there is an efficient deterministic algorithm to approximately count the number of satisfying assignments provided that $q,k = O(1)$ and $p \Delta^{7} \lesssim 1$, where $\lesssim$ hides polynomial factors in $q,k$. Here, $p$ and $\Delta$ are as in \cref{eqn:def-p,eqn:def-D}. In particular, this subsumes and improves upon \cite{moitra2019approximate, guo2019counting}. We note that these approximate counting algorithms also lead to efficient algorithms for sampling from approximately the uniform distribution on satisfying assignments. \\

On the randomized side, algorithms have been devised for instances of the variable version, symmetric LLL with \emph{atomic} constraints. Here, an atomic constraint refers to a constraint which is violated by exactly one assignment to its variables. For the special case of \emph{monotone} Boolean $k$-CNF formulas, Hermon, Sly, and Zhang \cite{hermon2019rapid} showed that the Glauber dynamics mix rapidly provided that each variable is present in at most $c2^{k/2}$ constraints, for some absolute constant $c$; note that this matches the hardness regime from \cite{bezakova2019approximation} up to a constant factor. For \emph{extremal} Boolean $k$-CNF formulas (see \cite{guo2019uniform} for the definition) and for $d$ in the entire LLL regime, the method of partial rejection sampling due to Guo, Jerrum, and Liu \cite{guo2019uniform} allows efficient \emph{perfect} sampling from the uniform distribution on satisfying assignments. For general Boolean $k$-CNF formulas, Feng, Guo, Yin, and Zhang \cite{feng2020fast} analyzed the Glauber dynamics on a certain ``projected space'' inspired by Moitra's method, and obtained a near-linear time algorithm for sampling from approximately the uniform distribution on satisfying assignments provided that $\Delta \lesssim 2^{k/20}$ ($\lesssim$ hides polynomial factors in $k$) -- the motivation for constructing the projected space is that, while the original space of satisfying assignments might not even be connected, by passing to an appropriate projection, not only do we have connectivity, but no bottlenecks for the Glauber dynamics. At the same time, since we want to be able to sample from the original distribution conditioned on the realisation of the projected assignment, the projection should be relatively `mild' so as not to lose too much information.\\ 

Most relevant to this paper is the recent work of Feng, He, and Yin \cite{feng2020sampling}, which introduced the idea of `states compression', thereby considerably expanding the applicability of the method used in \cite{feng2020fast}. We will survey their results in the next subsection when we compare them with our own. Here, we only note that compared to the deterministic algorithms for approximate counting, the randomized algorithms discussed here have two advantages: the running time is much faster (in fact, nearly linear in $n$), and they are efficient even when parameters such as $k,q$ grow with $n$. On the other hand, the disadvantage is that so far, these methods are limited to atomic constraints, whereas the algorithm of \cite{jain2020towards} is applicable to general instances of the symmetric LLL.

\subsection{Our results}
\label{sub:results}
We provide randomized algorithms for approximately counting the number of satisfying assignments and sampling from approximately the uniform distribution on satisfying assignments for LLL instances with atomic constraints. 

We begin with our result for the following class of instances, which capture many interesting problems such as Boolean $k$-$\on{CNF-SAT}$ and $k$-hypergraph $q$-coloring. Later, in \cref{thm:unrestricted}, we discuss a result for general atomic constraints. 

\begin{definition}
A $(k,\Delta,q)$-CSP (constraint satisfaction problem) is an instance of the variable version, symmetric LLL in which each variable $X_i$ is uniformly distributed on an alphabet $\Omega_i$ of size $q$, each constraint depends on exactly $k$ variables, and each constraint shares variables with at most $\Delta$ other constraints.  
\end{definition}

As before, we say that a $(k,\Delta,q)$-CSP is atomic if every constraint is violated by exactly one assignment to its variables. Note that for an atomic $(k,\Delta,q)$-CSP, the LLL asserts that if $\Delta \leq cq^{k}$, for an absolute constant $c$, then there exists a satisfying assignment. 

\begin{theorem}
\label{thm:atomic}
Given an atomic $(k,\Delta,q)$-CSP on the variables $X_1,\dots, X_n$, an error parameter $\epsilon \in (0,1/2)$, and a parameter $\eta \in (0,1)$, suppose that one of the following conditions holds.
\begin{enumerate}[(T1)]
    \item $k \geq 3$, $q \geq q_0(\eta)$, and $\Delta \leq c(\eta)\cdot q^{(k-1)/3 + o_{q}(1)}$.
    \item $k = 2$, $q \geq q_0(\eta)$, and $\Delta \leq c(\eta) \cdot q^{4/9 + o_{q}(1)}$.
    \item $k \geq 2$, $q \geq 4$, $\Delta \leq c(\eta)\cdot q^{0.221k}/(k^2\cdot q\log{q})$. 
    \item $k \geq 2$, $q = 3$, $\Delta \leq c(\eta)\cdot 3^{0.2k}/k^2$.
    \item $k \geq 2$, $q = 2$. $\Delta \leq c(\eta)\cdot 2^{0.1742k}/k^2$.
\end{enumerate}
Here $q_0(\eta)$ and $c(\eta)$ are constants depending only on $\eta$. 
Then, there is a randomized algorithm which runs in time
\[\tilde{O}(n\cdot((n/\epsilon)^{\eta}+\Delta)\cdot \Delta\cdot k),\]
where $\tilde{O}$ hides polylogarithmic factors in $n,\Delta, 1/\epsilon$, $k$, $q$, and outputs a random assignment $X \in \prod_{i \in [n]}\Omega_i$ such that the distribution $\mu_{\on{alg}}$ of $X$ satisfies
\[d_{\on{TV}}(\mu_{\on{alg}}, \on{uniform-satisfying}) \leq \epsilon,\]
where $\on{uniform-satisfying}$ denotes the uniform distribution on satisfying assignments and $\on{d}_{\on{TV}}$ denotes the total variation distance between probability measures.  
\end{theorem}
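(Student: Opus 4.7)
The plan is to follow the projected Glauber dynamics paradigm introduced by Feng, Guo, Yin, Zhang and extended by Feng, He, Yin, but to replace their mixing analysis with a finer information-percolation style coupling that exploits, rather than wastes, the atomicity hypothesis. Concretely, for each variable $X_i$ I would introduce a compressed variable $Y_i$ taking values in a reduced alphabet: a carefully chosen subset of ``dangerous'' values of $\Omega_i$ is kept explicit and all remaining values are lumped into a single ``unrevealed'' symbol. Let $\nu$ denote the pushforward to the $Y$-coordinates of the uniform distribution on satisfying assignments. Two structural facts drive the whole proof: first, once a $Y$-configuration is fixed, the conditional distribution of $X$ factorises over the connected components of the ``unrevealed'' subgraph of the dependency graph, and by the LLL regime these components are of logarithmic size with high probability, so $X\mid Y$ can be sampled efficiently by Moser--Tardos on each component; second, the projected CSP on the $Y$-variables inherits, via a local uniformity / conditional LLL argument, effective ``bad event'' probabilities that beat those of the original instance.

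The heart of the argument is a coupling of two copies of the Glauber chain on $\nu$, started from arbitrary projected configurations, in which the disagreement set is tracked through time. I expect to show that a disagreement at variable $v$ at step $t$ must be traceable, via a \emph{backward witness} consisting of a self-avoiding sequence of dependency edges and update times, to an initial disagreement; crucially, every constraint traversed by this witness must be ``bad'' under the currently coupled projected configuration. Using the conditional LLL marginal bounds for $\nu$, the probability of any fixed witness of length $L$ decays like $(C\cdot p_{\mathrm{proj}})^{L}$ where $p_{\mathrm{proj}}$ is the compressed per-constraint violation probability; combined with the $\Delta^L$ count of length-$L$ witnesses, this yields exponential decay provided $p_{\mathrm{proj}}\cdot \Delta \ll 1$, which is the mixing criterion. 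Summing over witnesses and coalescence times bounds the total variation mixing time by $\tilde{O}(n)$, and running the chain for this many steps and then completing the sample via local recovery on $X\mid Y$ gives the stated complexity.

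The remaining work is a numerical optimisation, in each of the subcases (T1)--(T5), of two competing parameters: the fraction of values retained per variable (larger keeps $\nu$ close to the honest marginal but weakens the mixing condition) and the per-constraint compressed violation probability (smaller is better for mixing). Writing out the two inequalities and optimising produces the exponents $(k-1)/3$ in the large-$q$ regime, $0.221k$ for $q\geq 4$, and the smaller $0.2k$, $0.1742k$ constants for $q=3,2$, where the base-alphabet is too small to allow a generous projection. The degenerate case $k=2$ (T2) cannot be handled by single-constraint bad events, and I would treat it by analysing pairs of adjacent constraints to extract the $4/9$ exponent.

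The principal obstacle — and the reason earlier works stopped short of these constants — is that the projected space carries \emph{no} natural monotone structure, so the standard path-coupling or censoring tools used in e.g.\ Ising information percolation are unavailable. One must instead prove conditional LLL marginal bounds for $\nu$ that remain valid along the entire backward-exploration process of the coupling witness; this requires showing that partial revelations of $Y$-coordinates do not spoil the LLL condition on the remaining instance, which in turn calls for a careful ``state compression + conditional Moser--Tardos'' regularity estimate. Getting this regularity tight enough to support the improved exponents, while keeping the witness-tree enumeration loose enough to sum, is the delicate technical step on which the whole theorem rests.
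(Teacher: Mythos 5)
Your high-level plan — compress each variable, run Glauber on the projected measure, prove mixing by tracing a disagreement backwards through a witness of ``bad'' constraints, then recover $X\mid Y$ by rejection/local sampling on logarithmically small components — is indeed the architecture of the paper. But two of your concrete choices would not deliver the stated bounds. First, the projection you describe (keep a subset of ``dangerous'' values of $\Omega_v$ explicit and lump everything else into a single unrevealed symbol) is a marking-type scheme and fails in the regimes (T1)--(T3). A variable can occur in up to $\Delta$ constraints with distinct forbidden values, so for most constraints the forbidden value at $v$ is one of the explicit singletons; that variable then contributes preimage size $1$ to $b(C)=\prod_{u\in\on{vbl}(C)}|\pi_u^{-1}(\boldsymbol{C}_\pi(u))|^{-1}$, and $b(C)$ can stay of order $1$. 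Then the conditional law of $X$ given $Y$ does \emph{not} factor into small components (this is exactly condition (A1) in the paper, $b\lesssim 1/\Delta$, which also drives the $2$-tree argument), so the InvSample/local-recovery step breaks. Moreover, for fixed $k$ and large $q$ one cannot fix this by randomly choosing which values to lump: the per-constraint failure probability of getting a good explicit/lumped mix is $e^{-\Theta(k)}$, which cannot beat $1/\Delta\approx q^{-\Theta(k)}$. The paper instead uses a \emph{balanced bucketing} for large $q$ (every value mapped into one of $\approx q^{2/3}$ buckets of size $\approx q^{1/3}$), so that every variable of every constraint simultaneously has projected forbidden-value marginal $\approx q^{-2/3}$ and preimage $\approx q^{1/3}$, and uses random partitions into small parts for $q\in\{3,4,5,7\}$; this is what produces $(k-1)/3$, $4/9$, $0.221$, $0.2$, $0.1742$. (Relatedly, your plan to handle $k=2$ by pairing adjacent constraints is not needed: the $4/9$ exponent comes from the same bucketing computation in the regime $\Delta<q^{2/3}$.)

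Second, your stated mixing criterion $p_{\mathrm{proj}}\cdot\Delta\ll 1$ is too weak for the witness union bound to close, and it is inconsistent with your own target exponents (it would give $\Delta\lesssim q^{2k/3}$ rather than $q^{k/3}$-type bounds). Consecutive constraints along a backward witness share variables and update-time windows, so their ``badness'' events are far from independent; one can only harvest fresh randomness from (roughly) an alternating independent subset of the witness constraints, while every step of the witness still costs a factor $\Delta$ in enumeration (plus further entropy for the timing/$f_i$ bookkeeping). This is precisely why the paper's admissibility condition (A2) requires the projected violation probability of each constraint to beat $\Delta^{-2}$ (up to the $\kappa$, $\zeta(C)$ and $|\ol{\on{vbl}}(C)|$ corrections), and why it introduces \emph{minimal} discrepancy checks with the independent sets $I_0(\mc{M})$, $I_1(\mc{M})$, the exceptional-interval/Chernoff control of $V^0,V^+$, and the per-leg accounting of \cref{prop:badprob-check}. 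Your sketch of ``$(Cp_{\mathrm{proj}})^L$ against $\Delta^L$'' elides exactly this difficulty, which is the technical heart of the result; without an argument that isolates an independent family of bad constraints (and pays the attendant square in the LLL-type condition), the proposed proof does not go through.
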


\begin{remark}
In \cite{feng2020sampling}, analogs of (T1) and (T5) are considered. In these cases, they obtain an algorithm for sampling from approximately the distribution $\on{uniform-satisfying}$, and with a similar running time, under the more restrictive conditions:\\

\noindent (T'1) \cite[Theorem~5,4]{feng2020sampling} $k \geq 13$, $q \geq q_0$, and $\Delta \leq q^{(k-12)/9}$.

\noindent (T'5) \cite[Theorem~5.5]{feng2020sampling} $k \geq 2$, $q = 2$, $\Delta \leq c(\eta)\cdot 2^{k/13}$.
\end{remark}

\begin{remark}
We find case (T1) of \cref{thm:atomic} remarkable since, prior to the work of Moser \cite{moser2008derandomizing}, the best-known version of the \emph{existential} algorithmic LLL due to Srinivasan \cite{srinivasan2008improved} required the condition $ p\Delta^{4} \leq c$ (for an absolute constant $c$, and with notation as in \cref{eqn:def-p}, \cref{eqn:def-D}); in particular, \cite{srinivasan2008improved} does not guarantee efficiently finding even a single satsifying assignment in the regime (T1) (for sufficiently large $q$). The chief innovation of Moser was to use denser witness trees instead of so-called $2$-trees (\cref{def:2-tree}); however, in our work, we are able to bypass the $\Delta^4$ barrier for atomic $(k,\Delta,q)$-CSPs, for sufficiently large $q$, even while using $2$-trees.  
\end{remark}

We pause here to record a couple of particularly interesting corollaries of \cref{thm:atomic}. Let $H = (V,E)$ denote a $k$-uniform hypergraph with vertex set $V$ and edge set $E$. Recall that a proper $q$-coloring of $H$ is an assignment $\chi: V \to [q]$ such that for every edge $e$, there exist $u,v \in e$ with $\chi(u) \neq \chi(v)$. In words, no edge is monochromatic. The problem of properly $q$-coloring $H$ can be recast as an atomic $(k,\Delta\cdot q, q)$-CSP, where $\Delta$ denotes the maximum number of edges that any edge of $H$ intersects. Indeed, we simply add $q$ constraints for each edge, where constraint $i$ for the edge is violated if each vertex in the edge is colored with $i$. Then, by (T1), we have:
\begin{corollary}
\label{cor:coloring}
Let $H = (V,E)$ be a $k$-uniform hypergraph with $k \geq 4$ and let $\Delta$ be defined as above. Then, for any $\epsilon, \eta \in (0,1)$, for $q \geq q_0(\eta)$, and for $\Delta \leq c(\eta)\cdot q^{(k-4)/3 + o_q(1)}$, we can sample from a distribution which is $\epsilon$-close in total variation distance to the uniform distribution on proper $q$-colorings of $H$, in time $\tilde{O}(n\cdot((n/\epsilon)^{\eta}+\Delta)\cdot \Delta\cdot k)$.
\end{corollary}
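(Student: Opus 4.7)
The plan is to obtain \cref{cor:coloring} as a direct corollary of case (T1) of \cref{thm:atomic}, via the natural encoding of proper $q$-colorings as an atomic CSP that is hinted at in the paragraph preceding the statement. The entire proof is essentially a parameter-tracking argument, so the ``hard work'' is all contained in the theorem; the challenge here is just to verify the reduction is tight enough to recover the exponent $(k-4)/3$ rather than something weaker.

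First I would formalize the reduction. Assign to each vertex $v \in V$ a variable $X_v$ distributed uniformly on the alphabet $[q]$. For each edge $e \in E$ and each color $c \in [q]$ introduce a constraint $C_{e,c}$ on the variables $\{X_v : v \in e\}$ whose unique violating assignment is the all-$c$ assignment. Each $C_{e,c}$ is atomic in the required sense, depends on exactly $k$ variables, and the joint event $\bigwedge_{e,c}\overline{C_{e,c}}$ is precisely the set of proper $q$-colorings of $H$; since the variables are uniform, the uniform distribution on satisfying assignments of the CSP coincides with the uniform distribution on proper colorings.

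Next I would bound the constraint-intersection degree of this CSP. A constraint $C_{e,c}$ shares a variable with $C_{e',c'}$ iff $e\cap e'\neq \emptyset$. By hypothesis there are at most $\Delta$ edges $e'\neq e$ intersecting $e$, each contributing $q$ constraints, and $e$ itself contributes $q-1$ additional constraints; hence the CSP intersection degree is at most $\Delta' \leq (\Delta+1)q$. We have therefore produced an atomic $(k, (\Delta+1)q, q)$-CSP.

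Finally I would invoke \cref{thm:atomic}(T1), which under $k\geq 3$ and $q\geq q_0(\eta)$ allows the CSP intersection degree to be as large as $c(\eta)\cdot q^{(k-1)/3 + o_q(1)}$. Imposing $(\Delta+1)q \leq c(\eta)\cdot q^{(k-1)/3+o_q(1)}$ and solving for $\Delta$ gives exactly the bound $\Delta \leq c'(\eta)\cdot q^{(k-4)/3 + o_q(1)}$ stated in the corollary (which enforces $k\geq 4$ so that the exponent is meaningful). The stated running time follows by substituting $\Delta' = O(\Delta q)$ into the bound $\tilde{O}(n\cdot((n/\epsilon)^\eta + \Delta')\cdot \Delta'\cdot k)$ from \cref{thm:atomic} and absorbing the extra factors of $q$ into the $\tilde{O}$, which already suppresses polylogarithmic factors in $q$. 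The main obstacle, as noted, is not in this derivation but lies upstream in proving \cref{thm:atomic} itself; here the only care needed is to make sure the $\Delta'\leq (\Delta+1)q$ bound is used so that the single factor of $q$ lost in the reduction precisely shifts the exponent from $(k-1)/3$ down to $(k-4)/3$.
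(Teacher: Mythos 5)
Your proposal is correct and follows essentially the same route as the paper: encode each (edge, color) pair as an atomic constraint, observe the projected CSP is an atomic $(k,O(\Delta q),q)$-CSP, and apply case (T1) of \cref{thm:atomic}, with the extra factor of $q$ in the degree accounting exactly for the drop from exponent $(k-1)/3$ to $(k-4)/3$. The only minor blemish is your claim that the extra $q$ factors in the running time are absorbed because $\tilde{O}$ hides polylogarithmic factors in $q$ (they are polynomial, not polylogarithmic, in $q$), but this matches the level of precision in the paper's own statement and does not affect the substance.
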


\begin{remark}
This corollary improves upon \cite[Theorem~1.3]{feng2020sampling} which requires $\Delta \leq q^{(k-12)/9 + o_q(1)}$, and on the previous best known regime (even in the bounded degree case) of $\Delta \lesssim q^{(k-4)/7}$ due to \cite{jain2020towards}.
\end{remark}

The next corollary follows from (T5).

\begin{corollary}
\label{cor:sat}
Consider a Boolean $k$-$\on{CNF-SAT}$ instance on $n$ variables $x_1,\dots, x_n$ such that each constraint shares variables with at most $\Delta$ other constraints.  Then, for any $\epsilon, \eta \in (0,1)$ and for $\Delta \leq c(\eta)\cdot 2^{0.1742k}/k^2$, we can sample from a distribution which is $\epsilon$-close in total variation distance to the uniform distribution on satisfying assignments, in time $\tilde{O}(n\cdot((n/\epsilon)^{\eta}+\Delta)\cdot \Delta\cdot k)$.
\end{corollary}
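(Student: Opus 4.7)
The plan is to observe that a Boolean $k$-CNF-SAT instance satisfying the hypotheses of the corollary is literally an atomic $(k,\Delta,2)$-CSP, after which case (T5) of \cref{thm:atomic} applies directly. The only real content of the proof is to match the parameters.

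First, I would verify the CSP structure piece by piece. Each Boolean variable $x_i$ takes values in $\Omega_i=\{0,1\}$, a set of size $q=2$, and under the uniform product measure each $x_i$ is uniform on $\Omega_i$ as required by the definition of a $(k,\Delta,q)$-CSP. Each clause $C_j$ is the disjunction of $k$ literals over $k$ distinct variables, so $|\on{vbl}(C_j)|=k$. The clause is violated only when every one of its $k$ literals evaluates to false, which pins down exactly one assignment to $\on{vbl}(C_j)$ out of the $2^k$ possibilities, so the instance is atomic. Finally, the hypothesis that each clause shares variables with at most $\Delta$ other clauses matches the definition of the CSP parameter $\Delta$ in \cref{eqn:def-D} on the nose.

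Second, I would invoke \cref{thm:atomic} in case (T5). The hypothesis there, $\Delta\le c(\eta)\cdot 2^{0.1742k}/k^2$ with $q=2$ and $k\ge 2$, coincides verbatim with the hypothesis of the corollary, and the produced algorithm samples a configuration whose law is within total variation distance $\epsilon$ of the uniform distribution on satisfying assignments in time $\tilde{O}(n\cdot((n/\epsilon)^\eta+\Delta)\cdot\Delta\cdot k)$. The uniform product distribution on $\prod_i\Omega_i=\{0,1\}^n$ conditioned on satisfying all clauses is exactly the standard uniform distribution over satisfying $0/1$-assignments of the CNF, so the stated conclusion is immediate.

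There is essentially no technical obstacle, since all of the work is already inside \cref{thm:atomic}. The only point worth flagging is that the $\Delta$ in the hypothesis of the corollary is a per-clause quantity (maximum number of other clauses sharing a variable) and therefore already agrees with the CSP degree; no conversion factor is incurred. This is in contrast to \cref{cor:coloring}, where properly $q$-coloring a hypergraph of edge-degree $\Delta$ must first be re-encoded as $q$ atomic constraints per hyperedge, so the CSP degree picks up an extra factor of $q$ relative to the hypergraph parameter.
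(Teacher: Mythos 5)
Your proposal is correct and coincides with the paper's own (implicit) proof: the paper simply observes that a Boolean $k$-CNF instance is an atomic $(k,\Delta,2)$-CSP and that the corollary follows from case (T5) of \cref{thm:atomic}, exactly as you argue. Your extra remark contrasting this with the $q$-factor blow-up in \cref{cor:coloring} is accurate but not needed for the proof.
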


\begin{remark}
The constant $0.1742$ in the exponent is within a factor of less than $3$ of the hardness regime from \cite{bezakova2019approximation}. This improves upon the constant $0.0769$ from \cite[Theorem~1.4]{feng2020sampling} and on the previous best known constant (even in the bounded degree case) of $0.1428$ due to \cite{jain2020towards}.
\end{remark}

We now present our result for general atomic instances of the LLL.

\begin{theorem}
\label{thm:unrestricted}
Given an atomic instance of the LLL, let $k$ denote an upper bound on the number of variables in any constraint, and let $q$ denote an upper bound on the size of the support of any variable $X_i$. Let $\epsilon, \eta \in (0,1)$. Let $\Delta$ be as in \cref{eqn:def-D}, $p \leq p_0(\eta)$ be as in \cref{eqn:def-p}, and suppose that
\[p\cdot \Delta^{7.043 + o_{p}(1)} \leq 1.\]
Then, there is an algorithm which runs in time
\[\tilde{O}(n\cdot((n/\epsilon)^{\eta}+\Delta)\cdot \Delta\cdot k),\]
where $\tilde{O}$ hides polylogarithmic factors in $n,\Delta, 1/\epsilon, k, q$, and outputs a random assignment $X$ such that the distribution $\mu_{\on{alg}}$ of $X$ satisfies
\[d_{\on{TV}}(\mu_{\on{alg}}, \on{uniform-satisfying}) \leq \epsilon.\]
\end{theorem}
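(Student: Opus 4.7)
The plan is to follow the project-then-sample paradigm of Moitra and Feng-He-Yin, but to replace their path-coupling mixing analysis of the Glauber dynamics on the projected space with a sharper information-percolation argument based on $2$-trees in the dependency graph. The constant $7.043$ will emerge from optimizing a compression parameter against the constants produced by the $2$-tree recursion.

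First, I would construct a state-compressing projection $\pi$. For each variable $X_i$ partition its alphabet $\Omega_i$ into a ``visible'' set $M_i$ of relative size $\theta \in (0,1)$ and a ``collapsed'' set $U_i$, and define the projected coordinate $Y_i := X_i$ if $X_i \in M_i$ and $Y_i := \star_i$ otherwise. Let $\nu$ be the pushforward of $\on{uniform-satisfying}$ under $\pi$. On the projected instance, an atomic constraint $C$ with unique violating tuple $(v_1,\ldots,v_k)$ is ``alive'' at $Y$ only when $Y_j \in \{v_j,\star_j\}$ for every $j \in \on{vbl}(C)$, which heuristically gives an effective per-constraint bad probability $p' \lesssim p\,\theta^{-k}$ under $\nu$, while the dependency degree stays at $\Delta$. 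The value of $\theta$ will be tuned so that (a) conditional on any $Y$, the residual LLL instance on the visible coordinates satisfies a Moser-Tardos-friendly LLL condition and can be sampled in near-linear time, and (b) the $2$-tree calculation below closes.

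Second, I would run the single-site Glauber dynamics on $\nu$ and analyze its mixing via information percolation. Coupling two copies $Y^{(t)}, \wt Y^{(t)}$ of the chain that initially differ at a single coordinate, I trace the disagreement backwards in time and charge each step at which the coupling fails to a specific constraint whose ``alive'' event was realized at the corresponding update. The central combinatorial claim is that if a disagreement survives for $t$ steps, then the sequence of charged constraints contains a subsequence $C_{i_1},\ldots,C_{i_s}$ with $s \gtrsim t/n$ that forms a $2$-tree (\cref{def:2-tree}) in the dependency graph, i.e., pairwise at distance $\ge 2$ yet connected in the square of the dependency graph. Since the number of $2$-trees of size $s$ rooted at a given constraint is at most $(e\Delta^{2})^{s}$, and each charged constraint is alive with probability at most $C p'$ uniformly over the conditioning, a union bound yields geometric decay of the disagreement mass provided $C e \Delta^{2} p' \lesssim 1$. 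Balancing $p\Delta^{2}\theta^{-k}\lesssim 1$ against the lower bound on $\theta$ demanded by efficient conditional sampling in $\pi^{-1}(Y)$ produces the optimal exponent $7.043 + o_p(1)$ in the hypothesis of the theorem.

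Finally, I assemble the algorithm. The Glauber chain produces $Y$ within total variation $\epsilon/2$ of $\nu$ in $\tilde O(n)$ steps, each implementable in $\tilde O(\Delta k)$ time since the local conditional of $\nu$ at coordinate $i$ depends only on constraints containing $X_i$ and can be evaluated on a small local neighbourhood by applying the LLL to the residual instance. The conditional law $\mu(\cdot \mid Y)$ is then sampled by Moser-Tardos on the strongly LLL-satisfying residual instance in expected time $\tilde O(n\Delta k)$, and splitting $\epsilon$ between the two stages gives the claimed guarantee. The principal obstacle is the information-percolation step: in the absence of any monotone structure on the projected space there is no path coupling to invoke, so every coupling failure must be assigned a unique witness constraint in such a way that the witnesses provably form a $2$-tree rather than merely a walk, and the ``alive'' probability of each witness must be estimated uniformly over all conditionings on the rest of $Y$. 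Tightening these two estimates to within small constants of the LLL threshold, and in particular tracking the precise exponent of $\theta$, is what drives the constant down from $350$ to $7.043$.
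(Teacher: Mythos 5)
Your high-level plan (project, run Glauber on the projected measure, invert by rejection sampling, and replace path coupling by an information-percolation argument) is indeed the paper's paradigm, but the step that carries essentially all of the difficulty is asserted rather than proved, and as stated it does not work. You claim that if a disagreement survives $t$ steps then the charged constraints contain a subsequence of size $\gtrsim t/n$ forming a $2$-tree, each of whose ``alive'' events has probability at most $Cp'$ \emph{uniformly over the conditioning}, so that a union bound against the $(e\Delta^2)^s$ count of $2$-trees closes with $p'\Delta^2\lesssim 1$. Neither half of this is justified. The witnesses produced by tracing a surviving discrepancy backwards are naturally chains of \emph{adjacent} constraints (each failure of the one-step maximal coupling at $v$ is witnessed by a path of violated constraints reaching another current discrepancy, cf.\ \cref{lem:disc-path}); they do not come pairwise at distance $\ge 2$, and their alive events at the relevant times are correlated both through shared variables and through the discrepancy propagation itself. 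This is exactly why the paper introduces minimal discrepancy checks (\cref{def:minimal-disc-check}): unions of induced paths organized by time blocks, from which an independent set of constraints is extracted \emph{after the fact}, with the per-constraint probability bounded only after conditioning on the update schedule, discarding variables that are updated never or too often in a block (the $V^0,V^+$ sets, controlled by Chernoff), and paying a factor $\zeta(C)$ --- which can be as large as $2\Delta$ --- for the choice of the auxiliary constraint witnessing the discrepancy at the one excluded variable (\cref{prop:badprob-check}). That extra factor is also why your quantitative threshold is off: the admissibility condition (A2) of \cref{assumptions-1} effectively forces the alive-type probability of a constraint to be $\lesssim\Delta^{-3}$, not $\Delta^{-2}$, and the constant $7.043$ arises from the numerical optimization in Case~5 of \cref{prop:projection} ($\gamma=0.142$) balancing the rejection-sampling bound $b(C)\lesssim\Delta^{-1}$ against this cubic requirement; your ``$p\Delta^2\theta^{-k}\lesssim1$'' balance neither matches these constraints nor yields the claimed exponent.

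Two further points would need repair even granting the mixing bound. First, your projection is a per-value visible/collapsed split with a single parameter $\theta$; for general alphabets the paper's scheme is a mixture of balanced bucketings (for $|\Omega_v|\ge4$, $\notin\{5,7\}$) and tailored random partitions for $|\Omega_v|\in\{2,3,5,7\}$, with the exponent extracted from concentration estimates plus the LLL applied to conditions (A1)--(A3); a single-$\theta$ mark/collapse scheme does not obviously satisfy the needed simultaneous bounds on $b(C)$ and on $\prod_v\mb{P}_\pi[\on{value}(v)=\boldsymbol{C}_\pi(v)]$ at $\gamma\approx0.142$, and the small-alphabet cases are the bottleneck. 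Second, a Glauber step cannot be implemented in $\tilde O(\Delta k)$ time by looking only at constraints containing $X_i$: the conditional law of the projected coordinate depends on the entire connected component of alive projected constraints through $X_i$, which is why the paper needs the component-size bound via $2$-trees (\cref{prop:connected-component}) and the rejection-sampling success estimate (\cref{prop:rejection}), and why the stated running time carries the $(n/\epsilon)^{\eta}+\Delta$ factor. In short, the skeleton is right, but the witness-structure argument replacing path coupling --- the actual content of \cref{prop:TV-projection} --- and the precise projection construction behind $7.043$ are missing.
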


\begin{remark}
The constant $7.043$ (which has not been completely optimized and may be slightly lowered) improves upon the constant $350$ from \cite[Theorem~1.1]{feng2020sampling}. Moreover, given a CSP for which every constraint is violated by at most $N$ assignments to its variables, we can construct an atomic CSP with at most $N$ atomic constraints for every original constraint, and thereby obtain a result similar to \cref{thm:unrestricted}, with $\Delta$ replaced by $\Delta N$. We further note that the constant $7.043$ is almost the same as the constant $7$ in \cite{jain2020towards}; while \cref{thm:unrestricted} only applies to atomic CSPs, its advantage is the much faster running time, as well as an LLL type condition which does not depend on $k$ or $q$. 
\end{remark}

\subsection{Approximate counting} 
\cref{thm:atomic,thm:unrestricted} also imply efficient algorithms for approximately counting the number of satisfying assignments in the same regime. Indeed, by using the simulated annealing reduction in \cite{feng2020fast}, one can easily show that if $T(\epsilon)$ is the time to obtain one sample (from a distribution which is $\epsilon$-close in total variation distance to the uniform distribution), then for any $\delta \in (0,1)$, there is a randomized algorithm for approximately counting the number of satisfying assignments within a multiplicative factor of $(1+\delta)$, which runs in time
\[\tilde{O}\left(\frac{m}{\delta^2}T(\epsilon_{m,\delta})\right),\]
where $m$ denotes the number of constraints i.e.~$m = |\mc{C}|$, and
\[\epsilon_{m,\delta} = \Theta\left(\frac{\delta^2}{m\log(m/\delta)}\right).\]

\subsection{Techniques}
\label{sub:techniques}
In \cite{hermon2019rapid}, the authors showed that for a Boolean $k$-CNF formula which is \emph{monotone}, the Glauber dynamics on the space of satisfying assignments mix rapidly outside (a constant factor of) the hardness regime identified in \cite{bezakova2019approximation}. For not necessarily monotone $k$-CNF Boolean formulas, the space of satisfying assignments may not even be connected. To overcome this barrier, and inspired by Moitra's approach \cite{moitra2019approximate} of `marking' variables, the work \cite{feng2020fast} introduced the following two step procedure for sampling a uniformly random satisfying assignment: first, sample from the induced distribution on the so-called unmarked variables, and then, given such a sample $Y$, sample from the uniform distribution on the satisfying assignments conditioned on the assignment to the unmarked variables being $Y$. The reason the last step is easy is that, given a typical assignment to the unmarked variables, the remainder of the formula factors into logarithmic-sized connected components, so that ordinary rejection sampling succeeds with high probability. The key, therefore, is to sample from the induced distribution on unmarked variables. 

The recent work \cite{feng2020sampling} introduced the idea of `states-compression', which generalizes the marking procedure of Moitra. Now, for each variable $v$ with domain $\Omega_v$, one constructs a suitable map $\pi_{v}: \Omega_v \to Q_{v}$, and the assignment $Y$ now lives in $\prod_{v \in V}Q_{v}$. Once again, the projection is to be chosen so that given a typical realisation of $Y$, the remainder of the formula factors into logarithmic-sized connected components on which ordinary rejection sampling succeeds with high probability, and the main part is showing that one can efficiently sample $Y$ from the corresponding distribution.

In order to sample $Y$, both \cite{feng2020fast, feng2020sampling} show that the Glauber dynamics for the distribution on the space $\prod_{v\in V}Q_{v}$, induced by the uniform distribution on satisfying assignments, mix rapidly. For this, both works employ a one-step path coupling argument based on, and extending, the argument from \cite{moitra2019approximate}. However, showing that the one-step path coupling is contracting requires additional assumptions on the relationship between $p$ and $\Delta$, and indeed, this is the main reason for the degradation of the dependence between $p$ and $\Delta$ in the final results of \cite{feng2020fast, feng2020sampling} (and also in \cite{moitra2019approximate, guo2019counting}). Furthermore, establishing contraction of the one-step path coupling requires considerable case analysis for different ranges of the parameters -- for instance, the mixing for the regimes corresponding to \cref{thm:atomic} and \cref{thm:unrestricted} are analyzed separately in \cite{feng2020sampling}.
\\

The main contribution of our work is to completely dispense with the path coupling analysis of this `projected Glauber dynamics' and instead, to devise a novel information-percolation based argument which avoids the need to consider worst-case neighborhoods of a vertex. Such an argument is also the crux of \cite{hermon2019rapid} (which, in turn, is inspired by the argument in \cite{lubetzky2016information}). However, compared to \cite{hermon2019rapid}, we critically do \emph{not} have monotonicity at our disposal. This makes certain `sandwiching' arguments inaccessible, and consequently, necessitate developing a careful and somewhat elaborate notion of combinatorial structures, which we call minimal discrepancy checks (\cref{def:minimal-disc-check}). In a nutshell, the information-percolation argument is based on the fact that if the maximal one-step coupling of the Glauber dynamics fails to couple at some time, then there must already be another discrepancy between the configurations at that time. By tracking the evolution of these discrepancies back in time, we show, in fact, that the origin of the failure of the one-step maximal coupling can be attributed to the appearance of a minimal discrepancy check. By analyzing these minimal discrepancy checks with considerable care, we then show that they occur with probability which is essentially just low enough (\cref{prop:badprob-check}) so as to overcome the union bound on the number of possible minimal discrepancy checks. We expect this part of our argument to also be useful in other contexts. 

Introducing minimal discrepancy checks allows us to handle the projected Glauber dynamics in all regimes in a unified manner. Another component which facilitates this, and also contributes to our improved quantitative estimates, is the notion of admissible projection schemes (\cref{assumptions-1}) -- in contrast to \cite{feng2020sampling}, the conditions we demand of the projections $\pi_v : \Omega_v \to Q_v$ are seemingly more complicated, but these are exactly the conditions which show up in the analysis of the algorithm, and therefore, avoid unnecessary degradation of the parameters. Once the correct conditions for the admissible projection schemes are identified (and this is the non-trivial part), the argument for the existence itself is a standard (although a necessarily quite careful) probabilistic argument. 

\subsection{Organization}
\label{sub:organization}
The remainder of this paper is organized as follows. In \cref{sec:prelims}, we record some preliminary notions related to the Lov\'asz local lemma and constraint satisfaction problems. In \cref{sec:projection}, we introduce admissible projection schemes and state the result (\cref{prop:projection}) guaranteeing the existence of admissible projection schemes in the LLL regime. The proof of \cref{prop:projection}, which is initiated in \cref{sec:projection} is completed in \cref{sec:finish-projection}. In \cref{sec:algorithm}, we present our main sampling algorithm. The main result in this section is \cref{thm:mainsample}, which implies \cref{thm:atomic} and \cref{thm:unrestricted}. The key ingredient required for the proof of \cref{thm:mainsample} is \cref{prop:TV-projection}. This is proved in \cref{sec:glauber}, which is the key section of the paper.

\section{Preliminaries}
\label{sec:prelims}
\subsection{Lov\'asz Local Lemma} 
The LLL provides a sufficient condition guaranteeing that the probability of avoiding a collection $\mc{C}$ of ``bad events'' in a probability space is positive. When the LLL condition \cref{eqn:LLL-condition} is satisfied, the so-called LLL distribution,
\[\mu_{S}[\cdot] := \mb{P}[\cdot \mid \wedge_{C \in \mc{C}}\ol{C}]\]
is well-defined (here, the subscript $S$ is chosen to represent ``satisfying''). For later use, we record a standard comparison between the LLL distribution $\mu_{S}[\cdot]$ and the original distribution $\mb{P}[\cdot ]$ on the probability space i.e.\ the product distribution on $X_1,\dots, X_n$. For any event $B$ in the probability space, let
\[\Gamma(B) = \{C \in \mc{C}: \on{vbl}(B) \cap \on{vbl}(C) \neq \emptyset\}.\]
\begin{theorem}[cf.~{\cite[Theorem~2.1]{haeupler2011new}}]
\label{thm:hss-local-lemma}
Under \cref{eqn:LLL-condition}, for any event $B$ in the probability space, 
\[\mu_{S}[B] \leq \mb{P}[B]\prod_{C \in \Gamma(B)}(1-e\cdot \mb{P}[C])^{-1}.\]
\end{theorem}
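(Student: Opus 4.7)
The plan is to write
\[\mu_{S}[B] = \frac{\mb{P}[B \wedge (\wedge_{C \in \mc{C}} \ol{C})]}{\mb{P}[\wedge_{C \in \mc{C}} \ol{C}]},\]
and to bound the numerator and denominator separately, partitioning $\mc{C} = \Gamma(B) \sqcup \mc{C}'$ with $\mc{C}' := \mc{C} \setminus \Gamma(B)$. For the numerator, I first drop the constraints in $\Gamma(B)$ from the conjunction to get $\mb{P}[B \wedge (\wedge_{C \in \mc{C}'} \ol{C})]$ as an upper bound. Since $B$ depends only on variables in $\on{vbl}(B)$ while each $C \in \mc{C}'$ depends only on variables disjoint from $\on{vbl}(B)$ (by the definition of $\Gamma(B)$), the joint independence of the base variables $X_1,\dots,X_n$ factors this further as $\mb{P}[B] \cdot \mb{P}[\wedge_{C \in \mc{C}'} \ol{C}]$.

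For the denominator, I fix any ordering $C_1,\dots,C_r$ of $\Gamma(B)$ and expand
\[\mb{P}[\wedge_{C \in \mc{C}} \ol{C}] = \mb{P}[\wedge_{C \in \mc{C}'} \ol{C}] \cdot \prod_{i=1}^{r} \mb{P}\left[\ol{C_i} \,\middle|\, \wedge_{C' \in \mc{C}' \cup \{C_1,\dots,C_{i-1}\}} \ol{C'}\right].\]
The key input is the standard Shearer-type estimate that already powers the inductive proof of \cref{eqn:LLL-condition}: under that hypothesis, for every $C \in \mc{C}$ and every $S \subseteq \mc{C} \setminus \{C\}$,
\[\mb{P}[C \mid \wedge_{C' \in S} \ol{C'}] \le e \cdot \mb{P}[C], \quad \text{equivalently,} \quad \mb{P}[\ol{C} \mid \wedge_{C' \in S} \ol{C'}] \ge 1 - e \cdot \mb{P}[C].\]
Applying the right-hand inequality to each of the $r$ conditional factors above yields the lower bound $\mb{P}[\wedge_{C \in \mc{C}'} \ol{C}] \cdot \prod_{C \in \Gamma(B)}(1 - e\, \mb{P}[C])$.

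Combining the two estimates, the $\mb{P}[\wedge_{C \in \mc{C}'} \ol{C}]$ factors cancel and we obtain $\mu_{S}[B] \le \mb{P}[B] \cdot \prod_{C \in \Gamma(B)}(1 - e\, \mb{P}[C])^{-1}$, which is exactly the claim. The argument is essentially the one in \cite{haeupler2011new}; the only step that is not purely formal is the Shearer-type conditional bound used in the denominator, which is textbook and does not present a real obstacle. The conceptual content is just that constraints outside $\Gamma(B)$ are truly independent of $B$, while constraints inside $\Gamma(B)$ each cost at most a factor of $(1-e\,\mb{P}[C])^{-1}$ when conditioning.
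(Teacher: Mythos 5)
Your proof is correct. The paper itself does not prove this statement---it imports it directly from \cite{haeupler2011new}---and your argument is essentially the one given there: bound the numerator by dropping the constraints in $\Gamma(B)$ and using independence of the base variables, and bound the denominator via the chain rule together with the conditional estimate $\mb{P}[C \mid \wedge_{C' \in S}\ol{C'}] \le e\cdot\mb{P}[C]$. The only point worth making explicit is that this last estimate is not the literal output of the symmetric induction (which with $x_C = 1/\Delta$ only gives the weaker bound $1/\Delta$); one obtains it by running the asymmetric LLL induction with weights $x_C = e\cdot\mb{P}[C]$, whose hypothesis is verified under \cref{eqn:LLL-condition} via $(1-1/\Delta)^{\Delta-1} \ge e^{-1}$---a standard but non-vacuous check.
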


We also record here the following algorithmic version of the Lov\'asz Local Lemma, which follows directly from Moser-Tardos algorithm \cite{moser2010constructive} and is also used in \cite{moitra2019approximate, feng2020fast,feng2020sampling}.
\begin{theorem}[\cite{moser2010constructive}]
\label{thm:alg-LLL}
Under \cref{eqn:LLL-condition}, for any $\delta \in (0,1)$, there exists a randomized algorithm which outputs, with probability at least $1-\delta$, a satisfying assignment in time $O(n\Delta k\log(1/\delta))$, where $k=\max_{C\in \cal{C}}|\on{vbl}(C)|$.
\end{theorem}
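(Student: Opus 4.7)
The plan is to directly invoke the Moser--Tardos resampling algorithm. Initialize the assignment by drawing each $X_i$ independently from its original marginal, and maintain a list of currently violated constraints. While this list is non-empty, pick any violated constraint $C$, resample every variable in $\on{vbl}(C)$ from its original marginal, and update the list by scanning the at most $k\Delta$ constraints that share a variable with $C$. When the list empties, the output is a satisfying assignment by construction, so only the runtime requires analysis.

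For the expected runtime, I would plug the hypothesis \cref{eqn:LLL-condition} into the main theorem of \cite{moser2010constructive} with the symmetric weights $x_C = 1/(\Delta+1)$. This yields that the expected number of times any fixed constraint is resampled is at most $1/\Delta$, hence the expected total number of resamplings is at most $m/\Delta$. The incidence bound $mk \leq n(\Delta+1)$ (each variable appears in at most $\Delta+1$ constraints, since any two constraints sharing a variable are adjacent in the dependency graph) then gives expected total resamplings $O(n/k)$. Each resampling step costs $O(k\Delta)$ for the resampling itself plus the list maintenance, so the expected total runtime is $O(n\Delta)$, which is in particular $O(n\Delta k)$.

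To convert this expected-time guarantee into the claimed high-probability one, I would use a standard Markov-and-restart argument. Run the algorithm with a time budget of $T = C\cdot n\Delta k$ for a sufficiently large absolute constant $C$; by Markov's inequality, the algorithm finishes within $T$ steps with probability at least $1/2$. Otherwise, abort and restart with fresh randomness. After $\lceil \log_2(1/\delta)\rceil$ independent trials, the algorithm succeeds with probability at least $1-\delta$, and the total runtime is $O(n\Delta k \log(1/\delta))$ as required.

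I expect no substantive obstacle, since \cite{moser2010constructive} is used as a black box; the only routine point is implementing the list of violated constraints so that a single resampling is processed in time proportional only to the $O(k\Delta)$ constraints it can affect. A sharper analysis using the concentration result of Haeupler--Saha--Srinivasan would yield a direct high-probability bound without restarts, but the cheap Markov boost above already suffices to match the bound stated in \cref{thm:alg-LLL}.
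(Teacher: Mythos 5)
Your proposal is correct and follows essentially the same route as the paper: invoke the Moser--Tardos theorem as a black box to bound the expected number of resamplings by roughly $|\mc{C}|/\Delta \leq n$ (each step costing $O(\Delta k)$), then boost to success probability $1-\delta$ via Markov's inequality and $O(\log(1/\delta))$ independent restarts. The only nitpick is the weight choice $x_C = 1/(\Delta+1)$: under the paper's convention, where $\Delta$ counts a constraint's neighborhood including itself, the natural choice is $x_C = 1/\Delta$, matching the hypothesis $e\cdot p\cdot\Delta\leq 1$ exactly; this is a constant-factor matter that does not affect the stated bound.
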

\begin{proof}
By \cite{moser2010constructive}, under \cref{eqn:LLL-condition}, there exists a randomized algorithm which finds a satisfying assignment in at most $\frac{|\cal{C}|}{\Delta} \le n$ steps in expectation, where each step has time complexity $O(\Delta k)$. By Markov's inequality, if we run this algorithm for $2n$ steps, then with probability at least $1/2$, the algorithm returns a satisfying assignment. The desired conclusion now follows by running $\log(1/\delta)$ independent copies of this algorithm for $2n$ steps each. 
\end{proof}


\subsection{Constraint satisfaction problems}
Let $V$ denote a collection of variables with finite domains $(\Omega_v)_{v \in V}$ satisfying $|\Omega_v| \geq 2$ for all $v \in V$. A constraint on $V$ is a map
\[C: \prod_{v \in V}\Omega_{v} \to \{\on{True}, \on{False}\}.\]
We say that $C$ depends on a variable $v \in V$ if there exist $\sigma_1, \sigma_2 \in \prod_{v \in V}\Omega_{v}$ differing only in $v$ such that $C(\sigma_1) \neq C(\sigma_2)$. For every constraint $C$, we fix $\on{vbl}(C) \subseteq V$ containing all variables that $C$ depends on. 
A constraint satisfaction problem (CSP) is specified by $\Phi = (V, (\Omega_v)_{v \in V}, \mc{C})$, where $\mc{C}$ is a collection of constraints. Given a constraint satisfaction problem, we say that $\sigma \in \prod_{v}\Omega_{v}$ satisfies $\Phi$ if and only if
\[C(\sigma) = \on{True} \quad \text{for all }C \in \mc{C}.\]
We define the degree $\Delta$ of a CSP to be
\[\Delta = \max_{C \in \mc{C}}|\{C' \in \mc{C}: \on{vbl}(C) \cap \on{vbl}(C')\neq \emptyset\}|.\]
We say that $C$ is an atomic constraint if
\[|C^{-1}(\on{False})| = 1.\]
A CSP $\Phi$ is said to be atomic if every $C \in \mc{C}$ is an atomic constraint. Popular examples of atomic constraint satisfaction problems are:
\begin{itemize}
    \item $k$-CNF-SAT. Here, $\Omega_{v} = \{0,1\}$ for all $v \in V$ and $|\on{vbl}(C)| = k$ for all $C \in \mc{C}$.
    \item $k$-Hypergraph $q$-coloring. Let $H = (V, \mc{E})$ denote a $k$-uniform hypergraph. To each vertex $v \in V$, we assign a color in $[q]$ such that no hyperedge is monochromatic. This corresponds naturally to an atomic CSP $\Phi = (V, (\Omega_v)_{v\in V}, \mc{C})$ with $\Omega_{v} = [q]$ for all $v \in V$ and $\mc{C} = \{C_{e, i}: e \in \mc{E}, i \in [q]\}$ where for $\sigma \in [q]^{V}$,
    \[C_{e,i}(\sigma) = \on{False} \iff  \sigma(w) = i \quad \forall w \in e.\]
\end{itemize}  
To every CSP, we associate an instance of the LLL as follows: the random variables are $X_1,\dots, X_{v}$, where each $X_i$ is uniformly distributed on $\Omega_{i}$. To each constraint $C \in \mc{C}$, we associate the event
$$\left\{\sigma \in \prod_{v \in V}\Omega_{v}: C(\sigma) = \on{False}\right\}.$$
We will abuse notation and denote this event by $C$ and the collection of all such events by $\mc{C}$. 

Finally, for a CSP $\Phi$, we let $\mu_{\Phi}$ denote the LLL distribution of the associated LLL instance i.e. $\mu_{\Phi}$ is the uniform distribution on satisfying assignments of $\Phi$. When the underlying CSP is clear from context, we will omit the subscript and denote $\mu_{\Phi}$ simply by $\mu$.

\section{Projection schemes} \label{sec:projection}
\subsection{Preliminaries}
Given a CSP $\Phi = (V, (\Omega_{v})_{v\in V}, \mc{C})$, a projection scheme is a collections of maps
\[\pi_{v} : \Omega_{v} \to Q_{v},\]
where $Q_{v}$ is a finite alphabet with $|Q_{v}| \geq 1$. We will frequently denote the collection $(\pi_{v})_{v\in V}$ simply by $\pi$. We let $\mb{P}_{\pi}$ denote the product distribution on $\prod_{v \in V}Q_{v}$ induced via $\pi$ by the uniform distribution on $\prod_{v\in V}\Omega_{v}$. We also let $\mu_{\pi}$ denote the distribution on $\prod_{v \in V}Q_v$ induced via $\pi$ by $\mu = \mu_{\Phi}$.



Let $\Phi$ be an atomic CSP. Recall that this means that for each $C \in \mc{C}$, there exists some $\boldsymbol{C} \in \prod_{v \in \on{vbl}(C)}\Omega_{v}$ such that $X \in \prod_{v \in V}\Omega_{v}$ does not satisfy $C$ if and only if \[X(v) = \boldsymbol{C}(v)\quad \forall v \in \on{vbl}(C).\]
Given an atomic CSP $\Phi$ and a projection scheme $\pi$, for every $C \in \mc{C}$, we define $\boldsymbol{C}_{\pi} \in \prod_{v \in \on{vbl}(C)}Q_{v}$ by
\[\boldsymbol{C}_{\pi}(v) = \pi_{v}(\boldsymbol{C}(v))\quad \forall v \in \on{vbl}(C).\]
This naturally leads to a CSP $\Phi_{\pi} = (V, (Q_{v})_{v \in V}, \mc{C}_{\pi}),$
where for each $C \in \mc{C}$, there is a constraint $C_{\pi} \in \mc{C}_{\pi}$ such that for $Y \in \prod_{v \in V}Q_{v}$, 
\[C_{\pi}(Y)= \on{False} \iff Y(v) = \boldsymbol{C}_{\pi}(v)\quad \forall v \in \on{vbl}(C).\] 
Motivated by this, for a constraint $C_{\pi} \in \mc{C}_{\pi}$, $v \in \on{vbl}(C_{\pi}):= \on{vbl}(C)$ and $Y \in \prod_{v \in V}Q_{v}$, we say that $Y(v)$ does not satisfy $C_{\pi}$ if and only if $Y(v) = \boldsymbol{C}_{\pi}(v)$.


For a constraint $C\in \cal{C}$, let 
\[
b(C) := \max_{Y\in \prod_{v\in V}Q_v} \prod_{u\in \on{vbl}(C)} \p[X(u) = \boldsymbol{C}(u) | Y],
\] where $Y = \pi(X)$. Equivalently,
\[b(C) = \prod_{u \in \on{vbl}(C)}|\pi_{u}^{-1}(\boldsymbol{C}_{\pi}(u))|^{-1}.\]
Let 
\[
b := \max_{C\in \cal{C}} b(C).
\]
Also, let
\[
q := \max_{v\in V, Y\in \prod_{u\in V}Q_u} d_{\on{TV}}(\mb{P}_{\pi}[\on{value}(v) = \cdot], \mu_\pi[\on{value}(v) =  \cdot \mid Y^{-v}]),
\]
where $Y^{-v}$ denotes the $|V|-1$ dimensional vector obtained by removing $Y(v)$ from $Y$. 

The following useful bound on the conditional marginals of $\mu_{\pi}$ follows from \cref{thm:hss-local-lemma}.
\begin{lemma}
\label{lem:marginal}
Let $\Phi$ be an atomic CSP and let $\pi$ be a projection scheme. 
Suppose that $e\cdot b \cdot \Delta \le 1$. Then for any $v \in V$ and any partial assignment $Z \in \prod_{u \in V \setminus \{v\}}Q_{u}$, $$\mu_\pi[\on{value}(v) =  \cdot \mid Z] \le (1-3b)^{-\Delta}\p_{\pi}[\on{value}(v) = \cdot].$$
\end{lemma}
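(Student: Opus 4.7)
The strategy is to reduce the bound to a direct application of the Haeupler--Saha--Srinivasan LLL comparison inequality (\cref{thm:hss-local-lemma}), but applied not to the original product measure $\mb{P}$ on $\prod_v \Omega_v$ but rather to a suitably restricted product measure. Write $\mu = \mb{P}[\,\cdot\mid \wedge_C \ol{C}]$ for the LLL distribution and let $B = \{\pi(X^{-v}) = Z\}$, so that
\[
\mu_\pi[\on{value}(v) = \cdot \mid Z] \;=\; \mb{P}\!\left[\pi_v(X(v)) = \cdot \;\middle|\; \textstyle\bigwedge_C \ol{C},\, B\right] \;=\; \mb{P}_B\!\left[\pi_v(X(v)) = \cdot \;\middle|\; \textstyle\bigwedge_C \ol{C}\right],
\]
where $\mb{P}_B := \mb{P}[\,\cdot\mid B]$. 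The key observation is that since $B$ is a conjunction of events each depending on a single variable $X(u)$ with $u \ne v$, the measure $\mb{P}_B$ is still a product measure: $X(v)$ is uniform on $\Omega_v$, and for each $u \ne v$, $X(u)$ is uniform on $\pi_u^{-1}(Z(u))$.

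\textbf{Bounding constraint probabilities under $\mb{P}_B$.} The next step is to show $\mb{P}_B[C]\le b$ for every $C\in \mc{C}$. Since $\Phi$ is atomic, $C$ is violated iff $X(u) = \boldsymbol{C}(u)$ for all $u\in \on{vbl}(C)$. For each $u \in \on{vbl}(C)\setminus\{v\}$, if $\pi_u(\boldsymbol{C}(u)) \ne Z(u)$ then $\mb{P}_B[C] = 0$; otherwise $\mb{P}_B[X(u)=\boldsymbol{C}(u)] = 1/|\pi_u^{-1}(\boldsymbol{C}_\pi(u))|$. For $u = v \in \on{vbl}(C)$ we have $\mb{P}_B[X(v)=\boldsymbol{C}(v)] = 1/|\Omega_v| \le 1/|\pi_v^{-1}(\boldsymbol{C}_\pi(v))|$. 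Multiplying, $\mb{P}_B[C] \le \prod_{u\in \on{vbl}(C)} 1/|\pi_u^{-1}(\boldsymbol{C}_\pi(u))| = b(C) \le b$.

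\textbf{Applying \cref{thm:hss-local-lemma} and finishing.} With the bound $\mb{P}_B[C] \le b$ and the hypothesis $e\cdot b\cdot \Delta\le 1$, the LLL condition \cref{eqn:LLL-condition} is satisfied for the instance with product measure $\mb{P}_B$ and the same constraint hypergraph (the dependency structure is unchanged). Apply \cref{thm:hss-local-lemma} to the event $A = \{\pi_v(X(v)) = q\}$. Note that $\on{vbl}(A) = \{v\}$, so $\Gamma(A) = \{C\in \mc{C} : v \in \on{vbl}(C)\}$; picking any single $C_0 \in \Gamma(A)$, every other element of $\Gamma(A)$ shares the variable $v$ with $C_0$, hence $|\Gamma(A)| \le \Delta$. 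Moreover, since $B$ is independent of $X(v)$ under $\mb{P}$, we have $\mb{P}_B[A] = \mb{P}[\pi_v(X(v))=q] = \mb{P}_\pi[\on{value}(v)=q]$. Thus
\[
\mu_\pi[\on{value}(v)=q \mid Z] \;\le\; \mb{P}_\pi[\on{value}(v)=q]\cdot (1-e\cdot b)^{-\Delta} \;\le\; \mb{P}_\pi[\on{value}(v)=q]\cdot (1-3b)^{-\Delta},
\]
where the last inequality uses $e < 3$. This gives the desired bound.

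\textbf{Main obstacle.} The only delicate point is verifying that conditioning on $B$ preserves the product structure and that each constraint's probability under $\mb{P}_B$ remains bounded by $b$ despite $v$ being exempt from the projection conditioning; the worst case is a constraint containing $v$, and here the slack $1/|\Omega_v| \le 1/|\pi_v^{-1}(\boldsymbol{C}_\pi(v))|$ is exactly what is needed. Everything else is a direct invocation of \cref{thm:hss-local-lemma}.
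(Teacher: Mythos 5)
Your proposal is correct and follows essentially the same route as the paper's proof: both condition on the projected partial assignment to obtain a product measure (your $\mb{P}_B$ is the paper's $\mb{P}_Z$), bound each constraint's probability under this measure by $b(C)\le b$, and then apply \cref{thm:hss-local-lemma} to the single-variable event $\{\pi_v(X(v))=\cdot\}$, using $e<3$ to pass from $(1-e b)^{-\Delta}$ to $(1-3b)^{-\Delta}$. Your explicit verification that the measure stays product-form, that $|\Gamma(A)|\le\Delta$, and that the variable $v$ contributes the slack $1/|\Omega_v|\le 1/|\pi_v^{-1}(\boldsymbol{C}_\pi(v))|$ just spells out details the paper treats as immediate.
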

\begin{proof}
Consider the product distribution $\p_Z$ on $\prod_{v \in V}\Omega_{v}$ where each coordinate $u \in V\setminus \{v\}$ is distributed according to $\p[X(u) = \cdot | \pi_u(X(u))=Z(u)]$ and the $v^{th}$ coordinate is uniformly distributed on $\Omega_{v}$. The $\mb{P}_{Z}$ probability that a constraint $C\in \cal{C}$ is not satisfied is at most $b(C)$ by definition. Let $\mu_{Z,S}$ denote the distribution on satisfying assignments of $\Phi$ induced by $\mb{P}_{Z}$. Then, since $e\cdot b\cdot \Delta \leq 1$, we have by \cref{thm:hss-local-lemma} that 
\begin{align*}
\mu_\pi[\on{value}(v) =  \cdot \mid Z] &= \p[\pi_v(X(v)) = \cdot \mid \pi_u(X(u))=Z(u), X \textrm{ satisfies all $C\in \cal{C}$}] \\
&= \mu_{Z, S}[\pi_{v}(X(v)) = \cdot]\\
&\leq \mb{P}_{Z}[\pi_{v}(X(v)) = \cdot]\prod_{C\in \mc{C}: v \in \on{vbl}(C)}(1-e\cdot \mb{P}_{Z}[C])^{-1} \\
&\le \p_{\pi}[\on{value}(v) = \cdot](1-3b)^{-\Delta} .
\qedhere
\end{align*}
\end{proof}


Let $\Phi$ be an atomic CSP and let $\pi$ be a projection scheme. For each constraint $C\in \cal{C}$, let $\ol{\on{vbl}}(C)$ denote the set of variables $v$ in $C$ for which $|Q_v|>1$. Also, for $C \in \mc{C}$, let
\[\zeta(C) := \max_{v \in \ol{\on{vbl}}(C)} \left(1, \min\left(\frac{(1-3b)^\Delta q}{\p_{\pi}[\on{value}(v)=\boldsymbol{C}_\pi(v)]},2\Delta\right)\right)\]

\subsection{Admissible projection schemes}
The next definition isolates the class of projection schemes we will be interested in.
\begin{definition}
\label{assumptions-1}
Let $\Phi$ be an atomic CSP and let $\pi$ be a projection scheme. Let $\eta\in (0,1/2)$. We say that $\pi$ is \emph{admissible} if 
\begin{enumerate}[({A}1)]
    \item $b \le \eta/(300\Delta)$. 
    \item There exists $\kappa \geq 4\log(3000\Delta)$ such that for any $C\in \cal{C}$,
    \begin{align*}
     |\ol{\on{vbl}}(C)|^{2}\cdot \kappa^2 \cdot \zeta(C) \cdot\prod_{v\in \ol{\on{vbl}}(C)} \left((1-3b)^{-\Delta} \p_{\pi}[\on{value}(v)=\boldsymbol{C}_\pi(v)] + e^{-\kappa/3}\right) \le (60000\Delta)^{-2}.
    \end{align*}
    Furthermore, $\kappa \leq K(\log\Delta + \log{q} + \log{k})$ for a universal constant $K$, for $q = \max_{v \in V}|Q_v|$, and for $k = \max_{C \in \mc{C}}|\on{vbl}(C)|$. 
    \item For any $v\in V$ and $C,C'\in \cal{C}$ with $v\in \on{vbl}(C)\cap \on{vbl}(C')$, 
    \[
        \frac{1}{2}\p_\pi[\on{value}(v)=\boldsymbol{C}_\pi(v)] \le \p_\pi[\on{value}(v)=\boldsymbol{C}'_\pi(v)]\le 2\p_\pi[\on{value}(v)=\boldsymbol{C}_\pi(v)].   
    \]
    \item  For $v\in V$, $\pi(v)$ can be computed in time $K\log|\Omega_v|$, and for any $q \in Q_{v}$ a uniform value in $\pi_v^{-1}(q)$ can be sampled in time $K\log|\Omega_v|$, where $K$ is a universal constant.
\end{enumerate}
\end{definition}
\begin{remark}
Note that the condition $b\le \eta/(300\Delta)$ for $\eta < 1/2$ in (A1) guarantees that 
\[
(1-3b)^{-\Delta}\le 1 + 6b\Delta \le 1 + \eta/50.
\]
\end{remark}

The following is the main result of this section.  
\begin{proposition}\label{prop:projection}
Let $\Phi = (V, (\Omega_{v})_{v \in V}, \mc{C})$ be an atomic CSP. 
Suppose that at least one of the following holds:
\begin{enumerate}
    \item $|\Omega_v| = A  \geq A_0$ for all $v \in V$, $|\on{vbl}(C)| = k \geq 2$ for all $C \in \mc{C}$, and $\Delta \le A^{g(k) - o_A(1)}$, where $A_0$ is a constant depending only on $\eta$, and 
$g(k) = \max\left\{\frac{k-1}{3}, \frac{2k}{9}\right\}$.
    \item $|\Omega_v| = A = 2$ for all $v\in V$, $|\on{vbl}(C)| = k \geq 2$ for all $C \in \mc{C}$, and $\Delta \le cA^{0.1742k}/k^2$ where $c$ is a constant depending only on $\eta$.
    \item $|\Omega_v| = A = 3$ for all $v\in V$, $|\on{vbl}(C)| = k \geq 2$ for all $C \in \mc{C}$, and $\Delta \le cA^{0.2 k}/k^2$ where $c$ is a constant depending only on $\eta$. 
    \item $|\Omega_{v}| = A \geq 4$ for all $v \in V$, $|\on{vbl}(C)| = k \geq 2$ for all $C \in \mc{C}$, and $\Delta \le c A^{0.221(k-1)}/(k^2 \log A)$, where $c$ is a constant depending only on $\eta$. 
    \item $\Delta \le p^{-0.142+o_p(1)}$, where $p \leq c$ for a constant $c$ depending only on $\eta$.
\end{enumerate}
 Then, there exists an admissible projection scheme $\pi = (\pi_v)_{v\in V}$ with $\pi_v : \Omega_v \to Q_v$. 
Moreover, for any $\delta \in (0,1)$, this projection scheme can be constructed, with probability at least $1-\delta$, in time $O(n\Delta k \log(1/\delta))$, 
 where $k = \max_{C\in \cal{C}}|\on{vbl}(C)|$.
\end{proposition}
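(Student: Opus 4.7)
The approach is a standard probabilistic construction combined with the algorithmic LLL (\cref{thm:alg-LLL}). We construct a random projection scheme tailored to each regime, declare one ``bad'' event for each constraint and each of the requirements (A1)--(A3), and apply LLL to show that all bad events can be simultaneously avoided. For cases~(2)--(4), the scheme is the familiar marking: each variable $v$ is independently marked with probability $p_\ast = p_\ast(A,k)$, in which case $Q_v = \{\ast\}$ and $\pi_v$ is the constant map; otherwise $Q_v = \Omega_v$ and $\pi_v$ is the identity. For case~(1), we instead partition each $\Omega_v$ uniformly at random into classes of a designed size, so that $|\pi_v^{-1}(q)|$ is essentially fixed and only the labels are random. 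For case~(5), the class sizes of $\pi_v$ are chosen as a function of the local marginal of $\boldsymbol{C}$ at $v$, so that $\p_\pi[\on{value}(v) = \boldsymbol{C}_\pi(v)]$ lies in a narrow target range. In every case, (A4) is automatic from the elementary nature of the random map.

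For each $C \in \mc{C}$, let $B_1^C, B_2^C, B_3^C$ denote the events that (A1), (A2), (A3) respectively fail at $C$. Since $\p_\pi[\on{value}(v) = \boldsymbol{C}_\pi(v)] = |\pi_v^{-1}(\boldsymbol{C}_\pi(v))|/|\Omega_v|$ is a function of $\pi_v$ alone, and since the global quantities $b$ and $q$ entering the conditions can be replaced by deterministic worst-case upper bounds over $C \in \mc{C}$ (which themselves are absorbed into the bad events), each $B_i^C$ can be taken to depend only on $(\pi_v)_{v \in \on{vbl}(C)}$. Hence each bad event shares randomness with at most $3\Delta$ others, and LLL applies with dependency degree $O(\Delta)$. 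The event $B_1^C$ asks that too few variables of $C$ receive small fibers, and is handled by a Chernoff bound on a sum of independent indicators. The event $B_3^C$ is controlled by sharp concentration of $|\pi_v^{-1}(\cdot)|/|\Omega_v|$ around its target, again via Chernoff.

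The delicate event is $B_2^C$, whose condition involves the product $\prod_{v \in \ol{\on{vbl}}(C)}\bigl((1-3b)^{-\Delta}\p_\pi[\on{value}(v) = \boldsymbol{C}_\pi(v)] + e^{-\kappa/3}\bigr)$. Choosing $\kappa$ near its lower bound $4\log(3000\Delta)$ makes the $e^{-\kappa/3}$ summand polynomially small in $\Delta$, so each factor is dominated (up to a constant) by $\p_\pi[\on{value}(v) = \boldsymbol{C}_\pi(v)]$. By independence of $\pi_v$ across $v$ and an exponential/Chernoff-type tail bound on the resulting product (after passing to the sum of logarithms), we expect $\mb{P}[B_2^C] \le \exp\bigl(-\Omega(|\on{vbl}(C)|\log A)\bigr)$ in cases (1)--(4), and an analogous $p^{\Omega(1)}$ bound in case (5); this is $\le 1/(9e\Delta)$ precisely in each stated regime. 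LLL then yields an admissible $\pi$ with positive probability, and \cref{thm:alg-LLL} converts the existence proof into a randomized algorithm with runtime $O(n\Delta k\log(1/\delta))$ and success probability $1-\delta$.

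The main obstacle, which also dictates the numerical exponents $(k-1)/3$, $2k/9$, $0.1742k$, $0.2k$, $0.221(k-1)$, and $-0.142$, is the calibration of the marking probability (or partition granularity) so that the tail bound on $B_2^C$ exactly matches the LLL hypothesis while (A1) and (A3) continue to hold. In case~(1), the two expressions $(k-1)/3$ and $2k/9$ arise from two distinct optimal choices of the partition size within the same large-$A$ construction, and the piecewise maximum $g(k)$ selects the better choice in each range of $k$. Reproducing the precise numerical constants (notably $0.1742$, $0.221$, and $-0.142$) requires a careful case-by-case optimization of these parameters rather than any conceptually new ingredient.
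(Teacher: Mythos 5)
Your high-level template (randomize the projection, define per-constraint bad events for (A1)--(A3), bound them by Chernoff/Markov, apply the LLL, and invoke the algorithmic LLL for the $O(n\Delta k\log(1/\delta))$ construction time) does match the paper's treatment of the randomized cases, and your marking scheme is essentially the paper's construction for case (2). But there is a genuine gap in cases (3) and (4): the marked/unmarked scheme you propose there cannot reach the stated exponents. With marking, each fiber has size either $1$ or $A$, so the constraints become $D(\theta_1,\alpha)\ge\gamma\log A$ and $D(\theta_f,1-\alpha)\ge\gamma\log A$ with $\theta_1\ge\gamma$, $2\gamma\le\theta_f$; since the divergences are bounded independently of $A$ while the right-hand side grows like $\log A$, the optimal $\gamma$ is strictly decreasing in $A$, and it already equals only $0.1742$ at $A=2$. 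Hence marking gives strictly less than $0.1742$ for $A=3$ and $A\ge 4$, short of the claimed $0.2$ and $0.221$, and it degenerates entirely as $A$ grows (which is exactly why case (1) abandons marking). The paper instead designs the fiber-size profile itself: for $A=3$ a random partition into fibers of sizes $(1,2)$; for $A\ge 4$, $A\notin\{5,7\}$ a deterministic near-balanced partition achieving $\alpha_*\ge 0.25$; and for $A\in\{5,7\}$ random mixtures of partition shapes $(3,2)$ versus $(2,2,1)$ (resp.\ $(3,2,2)$ versus $(2,2,2,1)$), with Markov-type moment bounds on both $b(C)$ and $\prod_{v}\p_\pi[\on{value}(v)=\boldsymbol{C}_\pi(v)]$ before the LLL. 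So the numerical constants are not merely a "careful case-by-case optimization" of your scheme's parameters; they require a different construction, which is the missing idea. Relatedly, your case (5) description (fiber sizes chosen "as a function of the local marginal of $\boldsymbol{C}$ at $v$") is not well defined, since a variable lies in many constraints with different forbidden values; the paper's case (5) chooses $\pi_v$ depending only on $|\Omega_v|$, combining the balanced partitions for large alphabets with the random schemes for $A\in\{2,3,5,7\}$, and runs separate tail bounds for $b(C)$ and $t(C)$ to get the exponent $0.142$.

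Two smaller inaccuracies: case (1) in the paper is entirely deterministic (an arbitrary balanced partition with $R=\lfloor A^{2/3}\rfloor$; no LLL is needed), and the two exponents in $g(k)$ come from the single choice $R\approx A^{2/3}$ together with the bound $\zeta(C)\le\min(1.5A^{2/3},2\Delta)$, i.e.\ from whether $\Delta$ exceeds $A^{2/3}$, not from two different partition granularities. Also, (A3) is not obtained by concentration: it is a pointwise condition comparing different constraints through the same variable, and in all of the paper's schemes it holds deterministically because every fiber size used at a given $v$ lies within a factor $2$ of every other; a Chernoff bound over $v\in\on{vbl}(C)$ does not address it.
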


\begin{proof} 
We give here the complete proof of {\bf Case 1}. The proofs of the remaining cases are deferred to \cref{sec:finish-projection}. 

Let $R = \lfloor A^{2/3} \rfloor$. For each $v\in V$, we let $Q_v = [R]$ and define the projection $\pi_v$ arbitrarily so that the preimage of each element in $Q_v$ has size either $\lfloor A/R\rfloor$ or $\lceil A/R\rceil$. Clearly this projection scheme can be constructed in time $O(1)$, $\pi(v)$ can be computed in time $O(\log A)$, and for any $q \in Q_{v}$, a uniformly random element of $\pi_v^{-1}(q)$ can be returned in time $O(\log A)$. This confirms (A4).

Assuming that $A_0$ is a sufficiently large constant depending on $\eta$ and that $\Delta \le A^{g(k)-o_A(1)} < A^{k/3 - o_A(1)}$, we have
\[
b \le \left(\frac{1}{\lfloor A/R\rfloor}\right)^{k} \le (2A^{-1/3})^k \le (300\Delta/\eta)^{-1},
\]
which confirms (A1).
Moreover, for $v\in \on{vbl}(C)$ and for any $C \in \mc{C}$.
\begin{equation}
\frac{1}{1.5A^{2/3}}\le \frac{\lfloor A/R\rfloor}{A} \leq \p_\pi[\on{value}(v) = \boldsymbol{C}_\pi(v)] \le \frac{\lceil A/R\rceil}{A} \le \frac{1.5}{A^{2/3}}, \label{eq:bound-pv}
\end{equation}assuming again that $A_0$ is a sufficiently large constant. This confirms (A3). 

It remains to verify (A2). Let $\kappa = 12\log(3000(\Delta+A))$. Note that $\kappa \ge 4\log(3000\Delta)$. 
For all $C \in \mc{C}$, we have using $(1-3b)^{-\Delta} \le 1 + 6b\Delta \le 1.01$ that
\[\zeta(C) = \max_{v\in \ol{\on{vbl}}(C)} \left(1,\min\left( \frac{(1-3b)^{\Delta}}{\p_\pi[\on{value}(v) = \boldsymbol{C}_\pi(v)]},2\Delta\right)\right)\le \min\left( 1.5A^{2/3}, 2\Delta\right).\]
Using $(1-3b)^{-\Delta} \le 1.01$ and \cref{eq:bound-pv}, by our choice of $\kappa$, we therefore have 
\begin{align*}
    &\zeta(C) \cdot \prod_{v\in \ol{\on{vbl}}(C)} \left((1-3b)^{-\Delta} \p_\pi[\on{value}(v)=\boldsymbol{C}_\pi(v)] + e^{-\kappa/3}\right)\\
    &\le 2\min(A^{2/3},\Delta) \cdot 2^{k} A^{-2k/3}\\
    &= 2^{k+1} \min(A^{2/3},\Delta) \cdot A^{-2k/3}.
\end{align*}
Therefore, 
\begin{align*}
    &|\ol{\on{vbl}}(C)|^2 \cdot \kappa^2 \cdot \zeta(C) \cdot \prod_{v\in \ol{\on{vbl}}(C)} \left((1-3b)^{-\Delta} \p_\pi[\on{value}(v)=\boldsymbol{C}_\pi(v)] + e^{-\kappa/3}\right)\\
    & \le 2^{k+1}k^2 \cdot (12\log(3000(\Delta+A)))^{2}
    \cdot \min(A^{2/3},\Delta) \cdot A^{-2k/3}.
    \end{align*}
Now, we have the following cases:
\begin{itemize}
    \item $A^{2/3} \leq \Delta \leq A^{k}$. In this case, the right hand side can be bounded by
    \begin{align*}
        2^{k+1}k^2 \cdot (12\log(3000(\Delta+A)))^{2}\cdot A^{-2(k-1)/3} \leq (60000\Delta)^{-2},
    \end{align*}
    provided that $\Delta \leq A^{(k-1 - o_{A}(1))/3}.$
    \item $\Delta  < A^{2/3}$. In this case, the right hand side can be bounded by
    \begin{align*}
        2^{k+1}k^2 \cdot (12\log(3000(\Delta+A)))^{2}\cdot\Delta \cdot A^{-2k/3} \leq (60000\Delta)^{-2}
    \end{align*}    
    provided that $\Delta \leq A^{(2/9 - o_A(1))k}$. \qedhere
\end{itemize}

\end{proof}

\section{The main sampling algorithm}
\label{sec:algorithm}
In this section, we present and analyze our main sampling algorithm $\boldsymbol{\on{Main}}(\Phi, \pi, \epsilon)$.\\ 

Let $\Phi = (V, (\Omega_v)_{v \in V}, \mc{C})$ be an atomic constraint satisfaction problem and let $\pi = (\pi_v)_{v\in V}$ with $\pi_v : \Omega_v \to Q_v$ be an admissible projection scheme. In addition to the notation introduced in \cref{sec:prelims,sec:projection}, we will also make use of the following notation. For a subset $V' \subseteq V$ and a partial assignment $Y \in \prod_{v \in V'}Q_v$, we let $\mc{C}(Y)$ denote the set of constraints which are not satisfied by $Y$. Recall that this is the set of constraints $C_{\pi}$ such that $Y(v) = \boldsymbol{C}_{\pi}(v)$ for all $v \in \on{vbl}(C) \cap V'$. We let $G(Y)$ denote the graph whose vertex set is $\mc{C}(Y)$ and such that $C \neq C' \in \mc{C}(Y)$ are connected if and only if $\on{vbl}(C) \cap \on{vbl}(C') \neq \emptyset$. Also, let $H(Y)$ denote the graph whose vertex set is $V$ and such that $u \neq v \in V$ are connected if and only if there exists some $C \in \mc{C}(Y)$ for which $\{u, v\} \in C$. Finally, for each connected component $H'$ of $H(Y)$, let $\cal{C}(H') = \{C\in \cal{C}(Y):\on{vbl}(C)\subseteq H'\}$. 
\newline 
\newline
The following, which is similar to the algorithm considered in \cite{feng2020fast,feng2020sampling}, is our main sampling algorithm. Throughout, we assume that $\Delta, n \geq c_{\ref{prop:TV-projection}}$, where $c_{\ref{prop:TV-projection}}$ is an absolute constant determined by \cref{prop:TV-projection}. 

\medskip 

$\boldsymbol{\on{Main}(\Phi, \pi, \epsilon)}$: The algorithm takes as input an atomic CSP $\Phi$, an admissible projection scheme $\pi$, and an error parameter $\epsilon \in (0,1/2)$, and outputs either $\on{ERROR}$ or a satisfying assignment $X \in \prod_{v \in V}\Omega_{v}$.

\medskip 

\begin{enumerate}[(M1)]
    \item Initialize $Y_0 \in \prod_{v \in V}Q_{v}$ by sampling $Y_0(v)$ independently and uniformly at random from $Q_v$ for each $v\in V$. 
    \item Let $T = C_{\ref{prop:TV-projection}}\kappa n\log (n\Delta/\epsilon)$, where $C_{\ref{prop:TV-projection}}$ is an absolute constant determined by \cref{prop:TV-projection}. For each $t\in \{1,\dots,T\}$, given $Y_{t-1}$, generate $Y_{t}$ by choosing $v_t\in V$ uniformly at random, setting $Y_{t}(u) = Y_{t-1}(u)$ for all $u \neq v_t$, and setting $Y_t(v_t) = \on{Sample}(Y_{t-1},v)$.
    \item Return $\on{InvSample}(Y_{T})$. 
\end{enumerate}

\medskip

The main result of this section, which, together with \cref{prop:projection}, immediately implies \cref{thm:atomic,thm:unrestricted} is the following. 
\begin{theorem}
\label{thm:mainsample}
Let $\Phi$ be an atomic CSP and let $\pi$ be an admissible projection scheme as in \cref{assumptions-1}. Let
\[q = \max_{v \in V}|\Omega_v|, \quad k = \max_{C\in \mc{C}}|\on{vbl}(C)|.\] 
Then, for any $\epsilon \in (0,1/2)$, $\on{Main}(\Phi, \pi, \epsilon)$ runs in time 
\[\tilde{O}(n\cdot \left((n/\epsilon)^{\eta} + \Delta\right)\cdot \Delta \cdot k),\]
where $\tilde{O}$ hides polylogarithmic factors in $n, \Delta, 1/\epsilon, k, q$, 
and outputs either a random satisfying assignment of $\Phi$ or $\on{ERROR}$. Denoting by $\mu_{\on{alg}}$ the distribution produced by the algorithm and by $\mu_\Phi$ the uniform distribution on satisfying assignments of $\Phi$ (trivially extended to take on the value $\on{ERROR}$ with probability $0$), we have that  
\[d_{\on{TV}}(\mu_{\on{alg}},\mu_{\Phi}) \le \epsilon.\]
\end{theorem}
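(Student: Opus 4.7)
The plan is to deduce \cref{thm:mainsample} from \cref{prop:TV-projection} (which provides rapid mixing of the projected Glauber dynamics) together with a careful implementation of the $\on{Sample}$ and $\on{InvSample}$ subroutines. I would organize the argument into three phases.

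\emph{Phase 1 -- mixing of the projected chain.} Assume first that each call to $\on{Sample}(Y_{t-1},v_t)$ produces an exact draw from the Glauber update $\mu_\pi[\on{value}(v_t) = \cdot \mid Y_{t-1}^{-v_t}]$. Then \cref{prop:TV-projection} applied to the chain $(Y_t)$ with the choice $T = C_{\ref{prop:TV-projection}}\kappa n \log(n\Delta/\epsilon)$ gives $d_{\on{TV}}(\on{Law}(Y_T), \mu_\pi) \le \epsilon/4$; since admissibility (A2) gives $\kappa = O(\log(\Delta q k))$, this corresponds to only $\tilde O(n)$ Glauber steps.

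\emph{Phase 2 -- correctness and cost of $\on{InvSample}$.} Given $Y_T$, the target conditional distribution $\mu_\Phi(X \in \cdot \mid \pi(X) = Y_T)$ is the product over $v \in V$ of the uniform distribution on $\pi_v^{-1}(Y_T(v))$ (sampleable in time $O(\log|\Omega_v|)$ by (A4)), conditioned on every $C \in \mc{C}(Y_T)$ being satisfied. By (A1) this is an LLL instance with per-constraint violation probability at most $b \le \eta/(300\Delta)$, so \cref{thm:alg-LLL} applies on each connected component of $G(Y_T)$. A Galton--Watson / $2$-tree percolation argument, driven by the marginal bound in \cref{lem:marginal} and admissibility (A1)--(A3), shows that for $Y_T \sim \mu_\pi$ every connected component of $G(Y_T)$ has size at most $L = O((n/\epsilon)^\eta)$ except on a $\mu_\pi$-event of probability at most $\epsilon/8$; if any component exceeds $L$, the algorithm outputs $\on{ERROR}$, contributing at most $\epsilon/8$ to the total variation. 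Each surviving component of size $s$ is then handled by \cref{thm:alg-LLL} in time $\tilde O(s\Delta k)$, for a total $\on{InvSample}$ cost of $\tilde O(n\Delta k)$.

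\emph{Phase 3 -- efficient implementation of $\on{Sample}$.} The Glauber update $\mu_\pi[\on{value}(v_t) = \cdot \mid Y_{t-1}^{-v_t}]$ can be realized (up to a failure probability absorbed into the $\log(1/\delta)$ slack of \cref{thm:alg-LLL}) by running the Moser--Tardos procedure on the connected component of $v_t$ in $G(Y_{t-1})$, filling in variables outside that component from their marginals via (A4), and projecting through $\pi_{v_t}$. The same percolation estimate as in Phase 2 caps this component by $L$ except on a rare $\mu_\pi$-event, and a union bound over the $T = \tilde O(n)$ calls contributes an additional $\epsilon/4$ to the total variation. Each call then costs $\tilde O((L+\Delta)\Delta k)$, where the $\Delta$ term covers identifying the relevant local neighborhood; summing over all $T$ calls gives the dominant runtime contribution $\tilde O(n\cdot((n/\epsilon)^\eta + \Delta)\cdot \Delta k)$.

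Combining the three phases yields $d_{\on{TV}}(\mu_{\on{alg}}, \mu_\Phi) \le \epsilon$ and the claimed runtime. The main obstacle I anticipate is the percolation bound on component sizes of $G(Y)$ for $Y \sim \mu_\pi$: both $\on{Sample}$ and $\on{InvSample}$ hinge on it, and establishing it requires coupling the marginal ratio control of \cref{lem:marginal} with the precise bookkeeping that admissibility (A2) is engineered to make close, in the flavor of a $2$-tree union bound. Beyond that, the conceptual heart of the theorem is already packaged in \cref{prop:TV-projection}, with the remaining work consisting of bookkeeping a constant number of $\epsilon$-scale error sources and identifying the right local LLL instance for each subroutine.
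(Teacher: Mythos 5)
There is a genuine gap in Phases 2 and 3: you propose to implement both $\on{InvSample}$ and the Glauber update inside $\on{Sample}$ by running the Moser--Tardos procedure (\cref{thm:alg-LLL}) on the relevant connected component, but Moser--Tardos only \emph{finds} a satisfying assignment --- its output distribution is not the uniform distribution on satisfying assignments, nor the conditional distribution $\mu_\Phi(\cdot \mid \pi(X)=Y)$ restricted to the component. Consequently, projecting its output through $\pi_{v_t}$ does not realize the exact Glauber update $\mu_\pi[\on{value}(v_t)=\cdot\mid Y_{t-1}^{-v_t}]$, and the coupling with the ideal chain that lets you invoke \cref{prop:TV-projection} (your Phase 1) breaks down, with no control on the resulting bias. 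The paper instead implements both subroutines by \emph{bounded rejection sampling} on the component: lift each projected value uniformly within its preimage and accept iff all constraints of the component are satisfied. Conditional on acceptance this is \emph{exactly} the required conditional distribution (constraints outside $\mc{C}(Y)$ are satisfied by every lift), which is what makes $\mu_{\on{alg}}$ conditioned on the no-failure event $\mc{G}_T$ literally equal to the ideal distribution $\mu_{\on{alg}'}$, so that only the failure probability enters the TV bound. Moreover, admissibility (A1) is engineered precisely so that on a component with at most $20\Delta\log(n\kappa/\epsilon)$ constraints the acceptance probability is at least $(1-3b)^{20\Delta\log(n\kappa/\epsilon)}\ge (n\kappa/\epsilon)^{-\eta}$; the factor $(n/\epsilon)^{\eta}$ in the runtime is the number of rejection attempts $S$, not a bound on component size as in your accounting.

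A secondary issue: you state the component-size (percolation) bound for $Y\sim\mu_\pi$, but $\on{Sample}$ is called at every step $t\le T$, when the chain is far from mixed, so a stationary estimate does not suffice. The paper's \cref{prop:connected-component} is proved for the algorithm's own chain at each fixed $t$, by conditioning on the update schedule and observing that each variable's value at its last update is drawn either uniformly on $Q_v$ (including the initialization) or from a conditional marginal controlled by \cref{lem:marginal}; a $2$-tree union bound with (A2) then gives components of $O(\Delta\log(n\kappa/\epsilon))$ constraints with probability $1-(\epsilon/\kappa n)^{10}$. Your instinct to use \cref{lem:marginal} plus (A2) with a $2$-tree count is right, but it must be run against the dynamics, not against $\mu_\pi$, and the rejection-sampling implementation (not Moser--Tardos) is essential for correctness of the distributional claim.
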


The algorithm $\on{Main}(\Phi, \pi, \epsilon)$ uses two subroutines, $\on{Sample}(Y_{t-1}, v)$ and $\on{InvSample}(Y_T)$, which we now describe in detail.\\

$\boldsymbol{\on{Sample}(Y, v)}$: This subroutine takes as input $Y \in \prod_{v \in V}Q_v$ and $v\in V$, and returns an element of $Q_v$. 

\medskip 

Consider the partial assignment $Y^{-v}$ and the corresponding graph $H(Y^{-v})$. Let $H_{v}$ denote the (maximal) connected component of $v$ in $H(Y^{-v})$. 
\begin{enumerate}[(S1)]
    \item If $|\cal{C}(H_v)| > 20 \Delta \log ( n\kappa/\epsilon)$, output a uniformly random element of $Q_v$. If we are in this case, we say that $\on{Sample}(Y,v)$ fails due to (S1).
    \item Otherwise, $|\mc{C}(H_v)| \leq 20\Delta \log(n\kappa /\epsilon)$. Let $S = 10(\kappa n/\epsilon)^{\eta}\log(n \kappa/\epsilon)$. 
    
    For each $s = 1,\dots ,S$, do the following:
    
    \begin{itemize}
        \item For each $u\in H_v$, $u \neq v$, sample independently and uniformly at random a value $X(u) \in \pi_u^{-1}(Y(u))$. Sample independently and uniformly at random from $\Omega_v$ a value $X(v)$. Denote the resulting $|H_v|$-dimensional vector by $X(H_v)$. 
        \item If $X(H_v)$ satisfies $\cal{C}(H_v)$, then terminate and output $\pi_v(X(v))$. Otherwise, (i) if $s = S$, then go to the next bullet point, (ii) if $s < S$, then skip the next bullet point and increment $s$ by $1$. 
        \item If we reach this bullet point (i.e.~$X(H_v)$ does not satisfy $\mc{C}(H_v)$ for all $s = 1,\dots, S$), then output a uniformly random element of $Q_v$. In this case, we say that $\on{Sample}(Y,v)$ fails due to (S2). 
    \end{itemize}
\end{enumerate}

\medskip

$\boldsymbol{\on{InvSample}(Y)}$: This subroutine takes as input $Y \in \prod_{v\in V}Q_v$ and returns either $\on{ERROR}$ or an assignment in $\prod_{v\in V}\Omega_v$. 

\medskip 

Consider $Y \in \prod_{v\in V}Q_v$ and the graph $H(Y)$. 
\begin{enumerate}[({I}1)]
    \item If any (maximal) connected component $H'$ of $H(Y)$ has $|\cal{C}(H')| > 20 \Delta \log (n\kappa/\epsilon)$, output $\on{ERROR}$. 
    In this case, we say that $\on{InvSample}$ fails due to (I1). 
    \item Otherwise, for each (maximal) connected component $H'$ of $H(Y)$, we have $|\mc{C}(H')| \leq 20\Delta \log( n\kappa/\epsilon)$. Let $S = 10(\kappa n/\epsilon)^{\eta}\log(n \kappa/\epsilon)$. Let $H_1,\dots, H_\ell$ denote an enumeration of the (maximal) connected components of $H(Y)$. 
    
    \noindent For $j  = 1,\dots, \ell$, do the following:
    
    \quad For each $s = 1,\dots, S$, do the following:
    \begin{itemize}
        \item  For each $u\in H_j$, sample independently and uniformly at random a value $X(u) \in \pi_u^{-1}(Y(u))$. Denote the resulting $|H_j|$-dimensional vector by $X'(H_j)$.  
        \item If $X'(H_j)$ satisfies $\cal{C}(H_j)$, then set $X|_{H_j} = X'(H_j)$. Return to the outermost for loop (in $j$) with $j$ incremented by $1$. 
        \item If $X'(H_j)$ does not satisfy $\mc{C}(H_j)$ and $s < S$, then return to the for loop in $s$ with $s$ incremented by $1$. 
        \item If $X'(H_j)$ does not satisfy $\mc{C}(H_j)$ and $s=S$, then terminate both for loops and return $\on{ERROR}$.
        In this case, we say that $\on{InvSample}(Y)$ has failed for $H_j$. 
    \end{itemize}
    \item Output $X$. 
\end{enumerate}
We say that $\on{InvSample}(Y)$ fails due to (I2) if it fails for any connected component $H_1,\dots, H_\ell$.  

\subsection{The distribution of $Y_t$}
The main step in the analysis of the algorithm is the following proposition, which shows rapid mixing for the Glauber dynamics for the distribution $\mu_{\pi}$. Compared to the works \cite{feng2020fast, feng2020sampling}, we are able to establish rapid mixing of the Glauber dynamics for a much wider class of projection schemes -- this is done by abandoning the path coupling approach of \cite{feng2020fast, feng2020sampling} and instead devising an information-percolation type argument extending the approach in \cite{hermon2019rapid} from the monotone Boolean case to general finite alphabets and without any monotonicity assumption.
\begin{proposition}\label{prop:TV-projection}
There exist absolute constants $c_{\ref{prop:TV-projection}}, C_{\ref{prop:TV-projection}} \geq 1$ for which the following holds. 
Let $(Y_{t})_{t\geq 0}$ denote the Glauber dynamics for $\mu_{\pi}$ starting at an arbitrary initial state $Y_0$. Then, for any $\delta \in (0,1/2)$ and for $T = C_{\ref{prop:TV-projection}} \kappa n\log(n\Delta/\delta)$, the total variation distance between the distribution of $Y_T$ and the distribution $\mu_\pi$ is at most $\delta^{4}$, provided that $n \geq c_{\ref{prop:TV-projection}}$. 
\end{proposition}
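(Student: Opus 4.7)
The plan is to follow the information-percolation paradigm of \cite{hermon2019rapid, lubetzky2016information}, adapted to the projected Glauber dynamics on $\prod_{v\in V} Q_v$ in the absence of monotonicity. I will couple $(Y_t)$ with a second copy $(Y_t')$ started from $\mu_\pi$, using a common uniformly random update vertex $v_t$ at each step and, at that step, the maximal coupling of the two conditional laws $\mu_\pi[\on{value}(v_t) = \cdot \mid Y_{t-1}^{-v_t}]$ and $\mu_\pi[\on{value}(v_t) = \cdot \mid (Y_{t-1}')^{-v_t}]$. Since $d_{\on{TV}}(Y_T, \mu_\pi) \le \p[Y_T \ne Y_T']$ and a union bound gives $\p[Y_T \ne Y_T'] \le \sum_{v \in V}\p[Y_T(v) \ne Y_T'(v)]$, it suffices to show that the disagreement probability at each individual vertex is at most $\delta^4/n$.

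The next step is to trace disagreements backwards in time. A disagreement at $v$ at time $T$ must arise from a failed maximal coupling at some earlier update at $v$, because a coupon-collector argument combined with the choice $T = \Theta(\kappa n \log(n\Delta/\delta))$ guarantees that $v$ has been updated many times. A maximal coupling at step $(t,v)$ fails only when the two conditional laws differ, which, by \cref{lem:marginal} and admissibility (A1), forces the existence of a constraint $C$ with $v \in \on{vbl}(C)$ such that one of the two chains matches $\boldsymbol{C}_\pi$ on $\on{vbl}(C) \setminus \{v\}$. Each neighbor $u \in \on{vbl}(C) \setminus \{v\}$ participating in this match is then either itself a discrepancy between the two chains (forcing, by recursion, a failed coupling at an earlier update of $u$) or a variable consistent between both chains whose current value is the forbidden one, the latter case contributing a factor comparable to $\p_\pi[\on{value}(u) = \boldsymbol{C}_\pi(u)]$ up to the LLL correction $(1-3b)^{-\Delta}$. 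Iterating the recursion assembles a combinatorial certificate, the minimal discrepancy check of \cref{def:minimal-disc-check}, supported on a connected subfamily of $\mc{C}_\pi$ which must have size at least $\kappa$ once $T$ is large enough that the initial disagreement has been forgotten at every relevant variable.

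The third step is a two-layer union bound. The number of rooted $2$-trees of constraints of size $\kappa$ incident to a fixed root is at most $(e\Delta)^{\kappa}$ by a standard Bollob\'as-style enumeration. For each such skeleton, the probability that a compatible minimal discrepancy check is realized by the coupled chain factorizes (as encapsulated in \cref{prop:badprob-check}) across its constraints as a product, each of whose factors takes the form
\[
|\ol{\on{vbl}}(C)|^2 \cdot \kappa^2 \cdot \zeta(C) \cdot \prod_{u\in \ol{\on{vbl}}(C)} \Bigl((1-3b)^{-\Delta}\p_\pi[\on{value}(u)=\boldsymbol{C}_\pi(u)] + e^{-\kappa/3}\Bigr),
\]
where $\zeta(C)$ accounts for the choice of the active variable at $C$, $|\ol{\on{vbl}}(C)|^2\kappa^2$ absorbs the time labelling of the backward trace, and the two summands in the bracket correspond respectively to the consistent-forbidden-value case and the recursive-discrepancy case. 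Condition (A2) is engineered precisely so that each per-constraint factor is at most $(60000\Delta)^{-2}$, comfortably defeating the $(e\Delta)^{\kappa}$ enumeration together with an extra $n\Delta$ root-choice factor. Taking $\kappa = \Theta(\log(n\Delta/\delta))$, absorbed into the definition of $T$ via $\kappa \le K(\log\Delta + \log q + \log k)$, then yields a per-vertex disagreement probability below $\delta^4/n$, giving the claimed mixing bound.

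The main obstacle is the construction of the minimal discrepancy check itself. In the monotone Boolean setting of \cite{hermon2019rapid} one sandwiches both chains between the all-zero and all-one monotone chains, forcing discrepancies to propagate in a one-sided and essentially canonical fashion; this is unavailable here, as a discrepancy on $\prod_{v\in V} Q_v$ can be created by any of several neighboring constraints and erased by an arbitrary resampling. The certificate therefore must be simultaneously minimal (so that its size lower-bounds the entropy of the trace), uniquely recoverable from the realized coupling history (so that the union bound is not inflated), and structured enough to admit the product-form probability bound above. Pinning down the correct definition and verifying the corresponding one-constraint estimate is where the bulk of the technical work lies.
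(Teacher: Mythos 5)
Your overall skeleton (back-tracing discrepancies of the maximal one-step coupling, assembling a combinatorial certificate, and beating its enumeration with the per-constraint factor engineered in (A2)) matches the paper's strategy, but the quantitative engine of your argument does not work as stated. You claim the certificate "must have size at least $\kappa$" and then propose "taking $\kappa = \Theta(\log(n\Delta/\delta))$" to drive the per-vertex failure probability below $\delta^4/n$. This is not available: $\kappa$ is a parameter of the admissible projection scheme, fixed before $\delta$ and $n$ are known, and (A2) explicitly caps it by $K(\log\Delta + \log q + \log k)$; it cannot be inflated to $\log(n\Delta/\delta)$. In the paper the growth of the certificate with $\log(n\Delta/\delta)$ comes from a completely different source: time is partitioned into blocks of length $H = 100\kappa n$, the minimal discrepancy check has one leg per block, and the final bound decays geometrically in the number of blocks $K \approx T/H$ (\cref{lem:disc-check-union} gives roughly $\Delta K (4/3)^{-(K-1)}$), which is why $T = C\kappa n\log(n\Delta/\delta)$ suffices. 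Your sketch has no mechanism forcing the certificate to be large, so the union bound over the $n\Delta$ choices of root cannot be defeated.

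Relatedly, you do not handle the case that actually necessitates the block structure: a discrepancy can persist for a long time simply because the relevant variables are rarely (or never) updated, in which case the connected family of violated constraints your trace produces can be very small — there is then nothing of size $\log(n\Delta/\delta)$ to enumerate or to pay for. The paper deals with this via the $f_i=0$ legs in \cref{def:minimal-disc-check}, the sets $V^0$ and $V^+$ of un-updated and over-updated variables, the Chernoff/negative-dependence estimates for exceptional intervals, and the independent set $I_0(\mc{M})$, so that even "quiet" blocks cost a factor $e^{-10\kappa} \le (3000\Delta)^{-2}$; the additive $e^{-\kappa/3}$ term inside (A2) exists precisely to absorb these variables when converting $\Theta(V_i^0,V_i^+)$ into $\wh{\Theta}_i$. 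Two further, smaller inaccuracies: the certificates in the paper's mixing proof are unions of induced oriented paths (legs), not $2$-trees ($2$-trees appear only in \cref{prop:connected-component}), and because adjacent constraints on a leg share variables, the factorized probability bound can only be applied along an independent set of the certificate — which is exactly why (A2) demands the squared bound $(60000\Delta)^{-2}$ rather than $(3000\Delta)^{-1}$ per constraint; your proposal asserts the product bound over all constraints of the certificate without addressing this dependence. Also, the count of rooted $2$-trees is $(e\Delta^2)^{\ell-1}$, not $(e\Delta)^{\ell}$, though that alone would not be fatal.
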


The proof of this key proposition is the content of \cref{sec:glauber} and constitutes the bulk of this paper. 

\subsection{Connected components of the projected CSP}
To control failures due to (S1) and (I1), we will show that, with high probability, the projected chain $Y_t$ satisfies the property that 
for every connected component $H'$ of $H(Y_t)$, we have $|\cal{C}(H')| \le 20 \Delta\log (n\kappa/\epsilon)$. As in \cite{moitra2019approximate,feng2020fast, feng2020sampling,jain2020towards}, our analysis uses $2$-trees, which were first used in a similar context by Alon \cite{alon1991parallel}.
\begin{definition}
\label{def:2-tree}
Let $G = (V,E)$ be a graph and let $d_G(\cdot, \cdot)$ denote the graph metric. A set of vertices $\cal{T} \subseteq V$ is called a $2$-tree if for every $u \neq v\in \cal{T}$, $d_G(u,v) \ge 2$, and such that if we add an edge between all pairs of vertices $u,v\in \cal{T}$ with $d_G(u,v) \le 2$, then the resulting graph on $\cal{T}$ is connected. 
\end{definition}

We will need the following result on the number of $2$-trees of a prescribed size which contain a given vertex.
\begin{lemma}[cf.~{\cite[Corollary~5.7]{feng2020fast}}]
\label{lem:2-tree-count}
Let $G = (V,E)$ be a graph with maximum degree $\Delta$. For any $v\in V$, the number of $2$-trees in $G$ which contain $v$ and have size $\ell$ is at most $\frac{(e\Delta^2)^{\ell-1}}{2}$. 
\end{lemma}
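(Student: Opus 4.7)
\medskip

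The plan is to reduce the statement to a standard counting bound for subtrees of bounded-degree graphs. I would first introduce the auxiliary graph $G^{(2)} = (V, E')$ on the same vertex set as $G$, where $\{u,w\} \in E'$ iff $d_G(u,w) = 2$. Since each vertex of $G$ has at most $\Delta$ neighbors and each of those has at most $\Delta - 1$ further neighbors, the maximum degree of $G^{(2)}$ is at most $\Delta(\Delta-1) \le \Delta^2$. By \cref{def:2-tree}, a subset $\mc{T} \subseteq V$ of size $\ell$ is a $2$-tree iff its elements are pairwise non-adjacent in $G$ and the induced subgraph of $G^{(2)}$ on $\mc{T}$ is connected; in particular, every such $\mc{T}$ containing $v$ is, in particular, a connected vertex subset of size $\ell$ in $G^{(2)}$ containing $v$. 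It therefore suffices to upper-bound the number of connected vertex subsets of $G^{(2)}$ of size $\ell$ containing $v$.

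Next, any connected subset of $G^{(2)}$ of size $\ell$ contains at least one spanning subtree, so the map from (subtrees of $G^{(2)}$ of size $\ell$ containing $v$) onto (such connected subsets) is surjective. Hence it is enough to bound the number of subtrees of $G^{(2)}$ of size $\ell$ containing $v$. For this I would appeal to the classical estimate: in any graph of maximum degree $D$, the number of subtrees of size $\ell$ containing a fixed vertex is at most $\frac{1}{\ell}\binom{D\ell}{\ell-1}$, which arises from encoding each such subtree as a rooted plane tree on $\ell$ vertices with outdegrees at most $D$ (Fuss--Catalan) together with a choice of which neighbor in $G^{(2)}$ each node corresponds to. Standard estimates on central binomial coefficients bound this by $\tfrac{(eD)^{\ell-1}}{2}$. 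Substituting $D = \Delta^2$ yields the desired bound $\tfrac{(e\Delta^2)^{\ell-1}}{2}$.

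The two steps are essentially decoupled, and the only real work lies in the second one: getting the constant factor of $\tfrac{1}{2}$ correct in the subtree-counting bound. This is a routine but slightly fiddly computation; the combinatorial reduction to $G^{(2)}$ is otherwise immediate from the definition. For the purpose of this paper, one could equivalently just invoke \cite[Corollary~5.7]{feng2020fast}, whose proof proceeds via exactly this reduction.
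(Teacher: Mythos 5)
Your proposal is correct (for $\ell \ge 2$, the only regime in which the stated bound can hold and the only one the paper uses) and it is essentially the argument behind the result the paper cites rather than proves: pass to the distance-$2$ graph, whose maximum degree is at most $\Delta(\Delta-1)\le \Delta^2$, reduce connected subsets to spanning subtrees, and bound the number of $\ell$-vertex subtrees through a fixed vertex by the Fuss--Catalan quantity $\frac{1}{\ell}\binom{D\ell}{\ell-1}\le \frac{(eD)^{\ell-1}}{2}$ with $D=\Delta^2$. Since the paper gives no proof of \cref{lem:2-tree-count} beyond the citation to \cite[Corollary~5.7]{feng2020fast}, whose proof proceeds by exactly this reduction, your approach coincides with the intended one.
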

Here, $e$ is the base of the natural logarithm.\\

The next (standard) lemma shows that large $2$-trees exist in graphs of bounded maximum degree. We include the proof for the reader's convenience.  
\begin{lemma}\label{lem:large-2-tree}
Let $G = (V,E)$ be a graph with maximum degree $\Delta$. Let $H = (V(H), E')$ be a connected subgraph of $G$ and let $v \in V(H)$. Then, there exists a $2$-tree $\cal{T}$ with $v \in \mc{T} \subseteq V(H)$ such that $|\mc{T}| \geq |V(H)|/(\Delta+1)$. 
\end{lemma}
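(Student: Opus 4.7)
The plan is a standard greedy construction. I will build the $2$-tree $\mathcal{T}$ one vertex at a time, starting from $\mathcal{T}_0 = \{v\}$, by iteratively adding a vertex $y \in V(H)$ that (i) lies at $G$-distance $\geq 2$ from every current vertex of $\mathcal{T}_i$ (so that the independence condition of Definition~\ref{def:2-tree} is preserved), while (ii) being at $G$-distance exactly $2$ from some current vertex (so that the $2$-closure stays connected). I then stop once no further vertex can be added, and argue that at this point $V(H)$ is contained in the closed $G$-neighborhood of $\mathcal{T}$.

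The key step is producing such a $y$ whenever $\mathcal{T}_i$ does not yet dominate $V(H)$ in $G$. Let $B_i \coloneqq N_G^{\leq 1}[\mathcal{T}_i] = \mathcal{T}_i \cup \bigcup_{z \in \mathcal{T}_i} N_G(z)$. If $V(H) \setminus B_i \neq \emptyset$, then because $H$ is connected and $v \in \mathcal{T}_i \subseteq B_i$, there is an edge $\{x,y\}$ of $H$ with $x \in B_i \cap V(H)$ and $y \in V(H) \setminus B_i$. Choose $z \in \mathcal{T}_i$ with $d_G(x,z) \leq 1$; then the triangle inequality in $G$ gives $d_G(y,z) \leq d_G(y,x) + d_G(x,z) \leq 1 + 1 = 2$, while $y \notin B_i$ means $d_G(y, z') \geq 2$ for every $z' \in \mathcal{T}_i$. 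Setting $\mathcal{T}_{i+1} \coloneqq \mathcal{T}_i \cup \{y\}$ therefore preserves the $2$-tree property: pairwise $G$-distances within $\mathcal{T}_{i+1}$ are still $\geq 2$, and the distance-$\leq 2$ graph on $\mathcal{T}_{i+1}$ is connected by induction (the new vertex $y$ is linked to $z$).

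I terminate when $V(H) \subseteq B_i$, and set $\mathcal{T} \coloneqq \mathcal{T}_i$. At this point, $V(H) \subseteq N_G^{\leq 1}[\mathcal{T}] = \bigcup_{z \in \mathcal{T}} N_G^{\leq 1}[z]$, and since $G$ has maximum degree $\Delta$, each $N_G^{\leq 1}[z]$ has size at most $\Delta + 1$. Hence
\[
|V(H)| \leq (\Delta + 1) |\mathcal{T}|,
\]
which rearranges to $|\mathcal{T}| \geq |V(H)|/(\Delta+1)$, as required. Also $v \in \mathcal{T}_0 \subseteq \mathcal{T}$ and $\mathcal{T} \subseteq V(H)$ by construction. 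No step is really an obstacle here; the only subtle point is to note that we use connectivity of \emph{$H$} (not $G$) to force the crossing edge $\{x,y\}$ to lie in $V(H)$, so that the vertex added belongs to $V(H)$, while we use the $G$-metric to verify the $2$-tree property.
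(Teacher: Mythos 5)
Your proof is correct and follows essentially the same route as the paper's: the identical greedy construction starting from $\{v\}$, using connectivity of $H$ to produce a crossing edge from the closed $G$-neighborhood of the current $2$-tree to an undominated vertex (which is then at $G$-distance exactly $2$), and concluding by the domination bound $|V(H)| \le (\Delta+1)|\mathcal{T}|$. No gaps.
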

\begin{proof}
We construct such a $2$-tree greedily. Let $\cal{T}_0 = \{v\}$, $H_0 = V(H)$, and $v_0 = v$. In the $i^{th}$-step, for $i\ge 1$, let $H_i = H_{i-1} \setminus (\{v_{i-1}\}\cup N_G(v_{i-1}))$, and choose (if possible) $v_i$ to be some vertex in $H_i$ such that $d_G(\cal{T}_{i-1},v_i)=2$. We then let $\cal{T}_i = \cal{T}_{i-1}\cup \{v_i\}$. Observe that $H_i = V(H) \setminus (\cal{T}_{i-1} \cup N_G(\cal{T}_{i-1}))$. 

We claim that if we cannot find $v_i$ satisfying $d_G(\mc{T}_{i-1}, v_i) = 2$, then it must be the case that $H_i = \emptyset$. Indeed, assume that $H_i$ is nonempty. Since $H_i = V(H) \setminus (\cal{T}_{i-1} \cup N_G(\cal{T}_{i-1}))$ is not empty, and since $H$ is connected, there must exist some $u \in H_i$ for which $d_G(\cal{T}_{i-1},u) > 1$ and some $u' \in \cal{T}_{i-1} \cup N_G(\cal{T}_{i-1})$ such that there is an edge between $u$ and $u'$. Note that $u' \notin \mc{T}_{i-1}$, since $u' \in \mc{T}_{i-1}$ means that $u \in N_G(\mc{T}_{i-1})$, a contradiction. Hence, $d_G(\mc{T}_{i-1}, u') = 1$ so that $d_G(\mc{T}_{i-1}, u) \leq 2$, which combined with the previous lower bound gives $d_G(\mc{T}_{i-1}, u) = 2$. 

Thus, when our construction terminates, we have a $2$-tree $\mc{T}$ satisfying $\cal{T} \cup N_G(\cal{T})\supseteq V(H)$. Since the maximum degree in $G$ is $\Delta$, it follows in particular that $|\cal{T}| \ge |V(H)|/(\Delta+1)$, as desired. 
\end{proof}

With this preparation, we are ready to bound the probability of failure due to (S1) or (I1).

\begin{proposition} \label{prop:connected-component}
Fix $0\leq t \leq T$. Then, with probability at least $1-(\epsilon/\kappa n)^{10}$, for every connected component $H'$ of $H(Y_t)$, we have $|\cal{C}(H')| < 20\Delta\log ( n \kappa /\epsilon)$.
\end{proposition}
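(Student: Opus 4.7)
The plan is to reduce the claim to a union bound over $2$-trees in the constraint conflict graph $G$ (with vertex set $\cal{C}$, edges between constraints sharing a variable, and maximum degree $\Delta$). Suppose the conclusion fails, so some connected component $H'$ of $H(Y_t)$ satisfies $|\cal{C}(H')|\ge M:=20\Delta\log(n\kappa/\epsilon)$. First I would check that $\cal{C}(H')$ is itself connected in $G$: for any $C,C'\in \cal{C}(H')$, pick $u\in \on{vbl}(C)$ and $v\in \on{vbl}(C')$, follow a path $u=u_0,\dots,u_m=v$ in $H(Y_t)$, and observe that each edge $(u_i,u_{i+1})$ comes from some $C_i\in \cal{C}(Y_t)$ all of whose variables are in $H'$ (any $H(Y_t)$-neighbor of $u_i\in H'$ lies in the same component), so $C_i\in \cal{C}(H')$, and consecutive $C_i,C_{i+1}$ share $u_{i+1}$ and are adjacent in $G$. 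Then \cref{lem:large-2-tree} applied to this connected subgraph of $G$ yields a $2$-tree $\cal{T}\subseteq \cal{C}(H')\subseteq \cal{C}(Y_t)$ of size at least $\ell_0:=\lceil M/(\Delta+1)\rceil\ge 10\log(n\kappa/\epsilon)$, while \cref{lem:2-tree-count} bounds the number of $2$-trees of size $\ell$ in $G$ by $|\cal{C}|(e\Delta^2)^{\ell-1}/2\le n\Delta(e\Delta^2)^{\ell-1}$.

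The heart of the argument is a per-$2$-tree probability bound. Because constraints in a $2$-tree are at graph distance $\ge 2$ in $G$, their variable sets are pairwise disjoint, so the event $\cal{T}\subseteq \cal{C}(Y_t)$ factors as $\{Y_t(v)=\boldsymbol{C}_\pi(v)\}$ over the disjoint union $V_\cal{T}:=\bigsqcup_{C\in \cal{T}}\on{vbl}(C)$. I would condition on the Glauber update schedule (the variable $v_s$ chosen at each step $s\le t$) and, on each realization, order the variables of $V_\cal{T}$ by the time $\tau_v$ of the last update of $v$ in $[1,t]$ (with $\tau_v=0$ meaning $v$ is never selected). For $\tau_v>0$, $Y_t(v)=Y_{\tau_v}(v)$ is drawn from $\mu_\pi[\cdot\mid Y_{\tau_v-1}^{-v}]$, so \cref{lem:marginal} bounds the conditional probability that $Y_t(v)=\boldsymbol{C}_\pi(v)$ by $(1-3b)^{-\Delta}\p_\pi[\on{value}(v)=\boldsymbol{C}_\pi(v)]$; for $\tau_v=0$, $Y_t(v)=Y_0(v)$ is uniform on $Q_v$ with marginal $1/|Q_v|$, which the near-balance of $|\pi_v^{-1}(\cdot)|$ produced by the constructions in \cref{prop:projection} (together with (A3)) controls on the order of $\p_\pi[\on{value}(v)=\boldsymbol{C}_\pi(v)]$. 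Telescoping these per-variable estimates along the ordering by $\tau_v$, and absorbing the uniform-case correction into the $e^{-\kappa/3}$ slack provided by (A2), would give
\[
\p[\cal{T}\subseteq \cal{C}(Y_t)]\le \prod_{C\in \cal{T}}\prod_{v\in \ol{\on{vbl}}(C)}\left((1-3b)^{-\Delta}\p_\pi[\on{value}(v)=\boldsymbol{C}_\pi(v)]+e^{-\kappa/3}\right).
\]

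To close the argument, (A2) together with $\zeta(C)\ge 1$ bounds each inner product over $\ol{\on{vbl}}(C)$ by $(60000\Delta)^{-2}$, so the overall union bound becomes
\[
\p\!\left[\exists\,H' \text{ with } |\cal{C}(H')|\ge M\right]\le \sum_{\ell\ge \ell_0}n\Delta\,(e\Delta^2)^{\ell-1}(60000\Delta)^{-2\ell}\le \frac{n}{e\Delta}\sum_{\ell\ge \ell_0}\left(\frac{e}{60000^2}\right)^{\ell},
\]
which, since $e/60000^2\le 10^{-9}$ and $\ell_0\ge 10\log(n\kappa/\epsilon)$, sits comfortably below $(\epsilon/\kappa n)^{10}$.

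The main obstacle in executing this plan is the per-$2$-tree probability bound, and specifically the treatment of variables $v\in V_\cal{T}$ with $\tau_v=0$ (never updated in $[1,t]$, which is unavoidable when $t$ is small, in particular $t=0$). This is the only place where the fine structure of the admissible projection scheme really enters: the near-balance of the preimages $\pi_v^{-1}(\cdot)$ supplied by \cref{prop:projection} is what lets the uniform-on-$Q_v$ marginal $1/|Q_v|$ be absorbed into the same $(1-3b)^{-\Delta}\p_\pi+e^{-\kappa/3}$ per-variable bound used when $v$ has been updated.
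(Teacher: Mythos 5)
Your overall route is the same as the paper's: reduce to a union bound over $2$-trees in the constraint graph via \cref{lem:large-2-tree} and \cref{lem:2-tree-count}, condition on the update schedule, bound the probability that every constraint of the $2$-tree is violated by a product of per-variable estimates (using \cref{lem:marginal} for updated variables), and close with (A2). The connectivity of $\mc{C}(H')$ in $G(\mc{C})$, the counting, and the final arithmetic are all fine.

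The gap is precisely the step you yourself flag as the main obstacle: the per-variable bound for values set uniformly at random on $Q_v$. Your proposed justification does not work. Condition (A3) only compares $\p_\pi[\on{value}(v)=\boldsymbol{C}_\pi(v)]$ with $\p_\pi[\on{value}(v)=\boldsymbol{C}'_\pi(v)]$ for two constraints through the same variable and says nothing about $1/|Q_v|$; the schemes built in \cref{prop:projection} are not near-balanced (Cases 3--5 use preimages of sizes $1$ and $2$, or $2$ and $3$), so for instance with $A=3$ and $\boldsymbol{C}_\pi(v)$ the singleton class one has $1/|Q_v|=1/2$ while $(1-3b)^{-\Delta}\p_\pi[\on{value}(v)=\boldsymbol{C}_\pi(v)]\le(1+\eta/50)/3$, and the additive slack $e^{-\kappa/3}\le(3000\Delta)^{-4/3}$ is far too small to absorb a constant-size excess, so your claimed per-variable inequality is simply false for these schemes; moreover, invoking the particular constructions of \cref{prop:projection} abandons the generality of the statement, which is for an arbitrary admissible scheme as in \cref{assumptions-1}. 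A second, related slip: you attribute uniform values only to $\tau_v=0$, but in the chain analyzed here (the one generated by $\on{Main}$) an update at a time $\tau_v>0$ is produced by $\on{Sample}$ and is a mixture of $\mu_\pi[\on{value}(v)=\cdot\mid Y_{\tau_v-1}^{-v}]$ and the uniform distribution on $Q_v$ (the (S1)/(S2) failure branches), so the uniform case must be handled at every update, not just at initialization. The paper's proof treats both branches at once by asserting the single per-variable bound $(1-3b)^{-\Delta}\p_\pi[\on{value}(v)=\boldsymbol{C}_\pi(v)]$ for either update distribution (with no additive term) and then applying (A2); your argument needs either that assertion or a genuinely different control of the uniform case, and the mechanism you describe (near-balance plus absorption into $e^{-\kappa/3}$) does not supply it.
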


\begin{proof}
By the law of total probability, it suffices to prove the result after conditioning on the choice of variables $v_1,\dots, v_t$ chosen to be updated in the first $t$ steps.  
Suppose for contradiction that there exists a connected component $H'$ of $H(Y_t)$ with $|\mc{C}(H')| \geq 20\Delta\log(n \kappa/\epsilon) =: \alpha$.  Then, by definition, there must exist a connected component of $G(Y_t)$ of size at least $\alpha$ containing a constraint $C_* \in \mc{C}(H')$ and a variable $v_*$ with $v_*\in \on{vbl}(C_*)$ (in particular, $v_* \in H'$). Since the maximum degree of $G(Y_t)$ is $\Delta$, it follows from \cref{lem:large-2-tree} that there exists a $2$-tree $\cal{T}_*$ of constraints in $\cal{C}(Y_t)$ such that $C_* \in \mc{T}_*$ and $|\mc{T}_*| \geq \alpha/(\Delta+1)$. We now proceed to bound the probability of appearance of such a $2$-tree.\\ 

By \cref{lem:2-tree-count}, the number of $2$-trees in $G(Y_t)$ which are rooted at $C_*$ and which have size $\ell$ is at most $(e\Delta^2)^{\ell-1}$. The main observation is the following: fix any such $2$-tree $\mc{T}$. Then, by definition, for every constraint $C \in \mc{T}$ and for every $v \in \on{vbl}(C)$, we must have $Y_{t}(v) = \boldsymbol{C}_{\pi}(v)$. Moreover, for any variable $v$, letting $t_{v}$ denotes the last time before (and including) $t$ that the value of $v$ was updated (note that $t_v$ is determined by our conditioning), we have (by the assumed distribution of $Y_0$ and the description of the subroutine $\on{Sample}(Y,u)$) that the value of $v$ at $t_{v}$ is chosen from one of two distributions:
\begin{itemize}
    \item The uniform distribution on $Q_{v}$, or
    \item The distribution $\mu_{\pi}[\on{value}(v) = \cdot \mid Y_{t_v - 1}^{-v}]$.
\end{itemize}
In either case, the probability that $Y_{t_v}(v) = \boldsymbol{C}_{\pi}(v)$ is at most $(1-3b)^{-\Delta}\mb{P}_{\pi}[\on{value}(v) = \boldsymbol{C}_{\pi}(v)]$ -- indeed, this is true for the uniform distribution even without the $(1-3b)^{-\Delta}$ factor, whereas for the other case, this follows immediately from \cref{lem:marginal}. Since the last time before (and including) $t$ that each variable is updated is determined by our conditioning, and since different constraints in $\mc{T}$ are disjoint, it follows that the probability that none of the constraints in $\mc{T}$ are satisfied is at most
\[\prod_{C \in \mc{T}}\prod_{v\in \on{vbl}(C)}((1-3b)^{-\Delta}\p[Y(v)=\boldsymbol{C}_\pi(v)]) \leq \prod_{C \in \mc{T}}(3000\Delta)^{-2} \leq (3000\Delta)^{-2\ell},\]
where the first inequality uses condition (A2) of \cref{assumptions-1}. Therefore, by the union bound, it follows that the probability that there exists some $C_* \in \mc{C}(Y_t)$ and a $2$-tree $\mc{T}_*$ rooted at $C_*$ such that $|\mc{T}_*| \geq \alpha/(\Delta+1) := \ell$, and such that none of the constraints in $\mc{T}_*$ are satisfied, is at most 
\[n\Delta \cdot (e\Delta^2)^{\ell-1}\cdot (3000\Delta)^{-2\ell},\]
where the first factor accounts for the number of choices for $\mc{C}_*$ and the second factor is from \cref{lem:2-tree-count}.
Since by our choice of $\ell$,
\[n\Delta \cdot (e\Delta^{2})^{\ell-1}\cdot (3000\Delta)^{-2\ell} \leq (\epsilon/\kappa n)^{10},\]
we have the required assertion. \qedhere


\end{proof}

\subsection{Rejection sampling}
Having seen that the probability of failure due to (S1) or (I1) is very low, we now show that the probability of failure due to (S2) or (I2) is also very low. 
\begin{proposition}
\label{prop:rejection}
Let $V' \subseteq V$ and let $ Y \in \prod_{v\in V'}Q_v$ be a partial assignment. Let $H'$ be a connected component of $H(Y)$ and suppose that the size of $H'$ is at most $20 \Delta \log (n\kappa/\epsilon)$. Let $X$ be obtained by sampling each $X(v)$ independently and uniformly from $\pi_v^{-1}(Y(v))$ for each $v\in H'\cap V'$, and from $\Omega_v$ for each $v \in H' \cap (V\setminus V')$. Then, the probability that $X$ satisfies all constraints $C \in \mc{C}(H')$ is at least $(n\kappa/\epsilon)^{-\eta}$.
\end{proposition}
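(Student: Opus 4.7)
The plan is to apply the symmetric Lov\'asz Local Lemma (\cref{eqn:LLL-condition}) directly to the product measure from which $X|_{H'}$ is drawn.

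First, I would observe that by construction, $X|_{H'}$ has the product distribution $\mb{Q}$ in which $X(v)$ is uniform on $\pi_v^{-1}(Y(v))$ for $v \in H' \cap V'$ and uniform on $\Omega_v$ for $v \in H' \cap (V \setminus V')$. For each $C \in \mc{C}(H')$, the containment $\on{vbl}(C) \subseteq H'$ holds by definition, and because $C \in \mc{C}(Y)$ fails on $Y$, we have $Y(v) = \boldsymbol{C}_\pi(v)$ for every $v \in \on{vbl}(C) \cap V'$. Hence I would compute
\[\mb{Q}[C \text{ violated}] = \prod_{v \in \on{vbl}(C) \cap V'} \frac{1}{|\pi_v^{-1}(\boldsymbol{C}_\pi(v))|} \cdot \prod_{v \in \on{vbl}(C) \cap (V \setminus V')} \frac{1}{|\Omega_v|} \leq b(C) \leq b,\]
using $|\Omega_v| \geq |\pi_v^{-1}(\boldsymbol{C}_\pi(v))|$ in the last inequality.

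Next, each constraint in $\mc{C}(H') \subseteq \mc{C}$ still shares variables with at most $\Delta$ others, and by (A1) of \cref{assumptions-1} we have $eb\Delta \leq e\eta/300 < 1$. The LLL therefore applies to $(\mb{Q}, \mc{C}(H'))$ and gives
\[\mb{Q}\big[X|_{H'} \text{ satisfies all of } \mc{C}(H')\big] \geq (1 - eb)^{|\mc{C}(H')|}.\]

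Finally, interpreting the hypothesis on the size of $H'$ as $|\mc{C}(H')| \leq 20\Delta \log(n\kappa/\epsilon)$ (which is precisely the quantity controlled by \cref{prop:connected-component} and by the branches (S1)/(I1) of the algorithm), together with $b \leq \eta/(300\Delta)$ and the elementary inequality $1-x \geq e^{-2x}$ for $x \in [0,1/2]$, I would conclude
\[(1-eb)^{|\mc{C}(H')|} \geq \exp\!\left(-\frac{40 e \eta}{300}\log\frac{n\kappa}{\epsilon}\right) = \left(\frac{n\kappa}{\epsilon}\right)^{-40e\eta/300} \geq \left(\frac{n\kappa}{\epsilon}\right)^{-\eta},\]
using $40e/300 < 1$ and $n\kappa/\epsilon \geq 1$. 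The only substantive step is recognizing that $\mb{Q}$ is a product measure with per-constraint violation probability bounded by $b$ and dependency degree bounded by $\Delta$; everything else is an accounting of the slack built into (A1). I do not foresee a genuine obstacle beyond this identification.
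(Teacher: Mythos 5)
Your proposal is correct and follows essentially the same route as the paper's proof: bound the violation probability of each $C \in \mc{C}(H')$ under the product measure by $b(C) \le b$, apply the symmetric LLL lower bound $(1-e\cdot b)^{|\mc{C}(H')|}$ (the paper writes the marginally weaker $(1-3b)^{|\mc{C}(H')|}$), and use (A1) together with $|\mc{C}(H')| \le 20\Delta\log(n\kappa/\epsilon)$ to conclude. Your reading of the size hypothesis as a bound on $|\mc{C}(H')|$ also matches how the paper uses it.
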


Here, $\eta$ is the parameter appearing in \cref{assumptions-1}.

\begin{proof}
By definition of $X$, the probability that a constraint $C \in \mc{C}(H')$ is not satisfied by $X$ is at most 
$$b(C) = \prod_{v\in \on{vbl}(C)} \frac{1}{|\pi_v^{-1}(\boldsymbol{C}_\pi(v))|}.$$
Let $\mc{G}$ denote the event that $X$ satisfies all constraints $C \in \mc{C}(H')$. Since $b(C) \leq b$ for all $C$ and since $300\Delta b \leq \eta$ by (A1) of \cref{assumptions-1}, it follows from \cref{thm:hss-local-lemma} and (A1) that
\[
\mb{P}[\mc{G}] \geq (1-3b)^{|\cal{C}(H')|} \ge (1-3b)^{20 \Delta \log (n\kappa/\epsilon)} \ge (n \kappa/\epsilon)^{-\eta}. \qedhere
\]

\end{proof}

We immediately obtain the following corollary. 
\begin{corollary}\label{cor:failure-rejection}
Fix $1 \leq t \leq T$. The probability that $\on{Sample}$ fails due to (S2) at time $t$ is at most $(\epsilon/\kappa n)^{10}$. Moreover, the probability that $\on{InvSample}$ fails due to (I2) is at most $(\epsilon/ \kappa n)^9$.
\end{corollary}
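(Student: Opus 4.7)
The plan is to combine the single-attempt success probability from \cref{prop:rejection} with the independence of the $S$ rejection-sampling attempts to obtain an exponentially small per-component failure probability, and then to pay a union bound over at most $n$ connected components to handle (I2).

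For the failure of $\on{Sample}(Y_{t-1}, v_t)$ due to (S2), I first observe that being in this branch means exactly $|\mc{C}(H_{v_t})| \leq 20\Delta\log(n\kappa/\epsilon)$, where $H_{v_t}$ is the connected component of $v_t$ in $H(Y_{t-1}^{-v_t})$. Applying \cref{prop:rejection} with $V' = V \setminus \{v_t\}$, partial assignment $Y_{t-1}^{-v_t}$, and $H' = H_{v_t}$, a single rejection-sampling attempt succeeds with probability at least $(n\kappa/\epsilon)^{-\eta}$. Since the $S = 10(n\kappa/\epsilon)^{\eta} \log(n\kappa/\epsilon)$ attempts are independent across $s$, the probability that all of them fail is at most
\[
(1 - (n\kappa/\epsilon)^{-\eta})^{S} \leq \exp(-S \cdot (n\kappa/\epsilon)^{-\eta}) = \exp(-10\log(n\kappa/\epsilon)) = (\epsilon/n\kappa)^{10},
\]
which is the advertised bound.

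For $\on{InvSample}(Y_T)$ failing due to (I2), the plan is to run the identical argument on each connected component $H_j$ of $H(Y_T)$ separately. Not being in case (I1) means $|\mc{C}(H_j)| \leq 20\Delta\log(n\kappa/\epsilon)$ for every $j$, so \cref{prop:rejection} (applied with $V' = V$) shows that rejection sampling on any single $H_j$ fails with probability at most $(\epsilon/n\kappa)^{10}$, by the same calculation. Since there are at most $n$ connected components of $H(Y_T)$, a union bound across components gives total failure probability at most $n \cdot (\epsilon/n\kappa)^{10} \leq (\epsilon/n\kappa)^{9}$; the last inequality uses $\kappa \geq 4\log(3000\Delta) \geq 1$ together with $\epsilon \leq 1/2$, so that $\epsilon/(n\kappa) \leq 1$. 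I do not anticipate a serious obstacle here: the only mildly delicate point is checking that exclusion of case (I1) simultaneously guarantees the constraint-size hypothesis of \cref{prop:rejection} for every component, which holds by the very definition of (I1).
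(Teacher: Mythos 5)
Your proposal is correct and matches the paper's own argument: both apply \cref{prop:rejection} to each of the $S$ independent rejection-sampling attempts to get a per-component failure probability of $(1-(\kappa n/\epsilon)^{-\eta})^{S} \leq (\epsilon/\kappa n)^{10}$, and then pay a union bound over the at most $n$ components of $H(Y_T)$ (using $\epsilon < 1/2 \leq \kappa$) to get $(\epsilon/\kappa n)^{9}$ for (I2). No substantive differences.
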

\begin{proof}
Recall that $S = 10(n\kappa /\epsilon)^{\eta}\log(n \kappa/\epsilon)$.
By \cref{prop:rejection}, the probability that $\on{Sample}$ fails due to (S2) at time $t$ is at most
\[
(1-(\kappa n/\epsilon)^{-\eta})^S \le \exp\left(-10(\kappa n/\epsilon)^{-\eta} (\kappa n/\epsilon)^{\eta}\log(n \kappa/\epsilon)\right) = (\epsilon/\kappa n)^{10}.
\]
Moreover, again by \cref{prop:rejection}, the probability that $\on{InvSample}$ fails due to any connected component $H'$ with $|\cal{C}(H')| \le 20 \Delta \log(n \kappa/\epsilon)$ is at most
\[
(1-(\kappa n/\epsilon)^{-\eta})^S \le (\epsilon/\kappa n)^{10}.
\]
Therefore, by the union bound over all (at most $n$) maximal connected components of $H(Y)$, the probability that $\on{InvSample}$ fails due to (I2) is at most $(\epsilon/\kappa n)^9$.
\end{proof}

\subsection{Analysis of the main algorithm} The proof of \cref{thm:mainsample} now follows readily. 
\begin{proof}[Proof of \cref{thm:mainsample}]
Let $\mu_{\on{alg}}$ be the distribution on \[\{\text{satisfying assignments of }\Phi\} \cup \{\on{ERROR}\}\] 
given by the output of the algorithm $\on{Main}(\Phi, \pi, \epsilon)$. Also, let $\mu_{\on{alg}'}$ be the distribution on \[\{\text{satisfying assignments of }\Phi\} \subseteq \prod_{v \in V}\Omega_v\]  given by 
\[\mu_{\on{alg}'}[x] := \mu_{S}[X = x \mid \pi(X) = Y_T],\]
where $Y_T$ is generated by running the Glauber dynamics for $\mu_\pi$ for $T$ steps, starting from $Y_0$ (where $Y_0$ is as in (M1)). 

The relation between the distributions $\mu_{\on{alg}}$ and $\mu_{\on{alg'}}$ is as follows: let $\mc{G}_T$ denote the event that none of the calls to $\on{Sample}$ fail (either due to (S1) or (S2)) and that the call to $\on{InvSample}$ also does not fail (either due to (I1) or (I2)). Observe that
\[\mu_{\on{alg}} \mid \mc{G}_T = \mu_{\on{alg}'}.\]
Therefore, by the characterization of the total variation distance in terms of coupling (cf.~\cite[Proposition~4.7]{levin2017markov}), we have
\begin{align*}
    d_{\on{TV}}(\mu_{\on{alg}}, \mu_{\on{alg}'})
    &\leq \Pr[\mc{G}_T]\\
    &\leq T(\epsilon/\kappa n)^{10} + (\epsilon/\kappa n)^{9} +  (T+1)(\epsilon/\kappa n)^{10}\\
    &\leq (\epsilon/\kappa n)^{5},
\end{align*}
where the second line follows from 
\cref{prop:connected-component} and \cref{cor:failure-rejection}, and the third line follows from the value of $T$ and since $\kappa \geq \log(\Delta)$ ((A2) of \cref{assumptions-1}).

Moreover, by \cref{prop:TV-projection}, we know that
\[d_{\on{TV}}(\mu_{\pi}, Y_T) \leq \epsilon^{4},\]
from which it immediately follows (again by \cite[Proposition~4.7]{levin2017markov}) that
\[d_{\on{TV}}(\mu_{\on{alg}'}, \mu_\Phi) \leq \epsilon^{4}.\]
Therefore, by the triangle inequality we have that
\[d_{\on{TV}}(\mu_{\on{alg}}, \mu_\Phi) \leq \epsilon.\]

\medskip 

\medskip 

It remains to analyze the running time of the algorithm. Let 
\[q = \max_{v \in V}|\Omega_v|, \quad k = \max_{C\in \mc{C}}|\on{vbl}(C)|.\] 
Each call to $\on{Sample}$ takes time 
    \[\tilde{O}(\left((n/\epsilon)^{\eta} + \Delta\right)\cdot \Delta \cdot k\cdot\log{q}),\]
    where $\tilde{O}$ hides polylogarithmic factors in $n, \Delta, \epsilon^{-1}$. This is because we require $\tilde{O}(\Delta^{2}\cdot k \cdot \log{q})$ time for checking whether or not $|\mc{C}(H_v)| \leq 20\Delta \log(n\kappa/\epsilon)$, and in case the upper bound holds, then finding this component. In the latter case, for each iteration, we require time $\tilde{O}(k\cdot\Delta \cdot \log{q})$ to sample $X(H_v)$ and time $\tilde{O}(k\cdot \Delta\cdot \log{q})$ to check whether $X(H_v)$ satisfies $\mc{C}(H_v)$. Therefore, (M1) and (M2) take time
    \[\tilde{O}(n\cdot \left((n/\epsilon)^{\eta} + \Delta\right)\cdot \Delta \cdot k\cdot\log{q})\]
    
    Moreover, by a similar analysis as for $\on{Sample}$, the call to $\on{InvSample}$ also takes time 
       \[\tilde{O}(n\cdot \left((n/\epsilon)^{\eta} + \Delta\right)\cdot \Delta \cdot k\cdot\log{q}),\]
       so that the running time of the algorithm is
           \[\tilde{O}(n\cdot \left((n/\epsilon)^{\eta} + \Delta\right)\cdot \Delta \cdot k\cdot\log{q}),\]
           as desired. \qedhere
\end{proof}

\section{Glauber dynamics for the projected distribution: Proof of \cref{prop:TV-projection}}
\label{sec:glauber}
\subsection{Preliminaries}
Throughout this section, we fix an atomic CSP $\Phi = (V, (\Omega_v)_{v\in V}, \mc{C})$ and an admissible projection scheme $\pi = (\pi_v)_{v\in V}$ with $\pi_v : \Omega_v \to Q_v$. 
Recall that $\mu_{\pi}$ is the distribution on $\prod_{v\in V}Q_v$ induced via $\pi$ by the uniform distribution on satisfying assignments, $\mu = \mu_{\Phi}$ on $\prod_{v\in V}\Omega_v$. In this section, which is the main innovation of our work, we study the mixing of the Glauber dynamics for the distribution $\mu_{\pi}$. Recall that the Glauber dynamics is a discrete time Markov chain on the state space $\prod_{v\in V}Q_{v}$ whose transitions are as follows: given the current state $Y$, choose a uniformly random vertex $v\in V$ and move to the state $Y'$ where
\begin{align*}
    Y'(w) &= Y(w) \quad \forall w\neq v\\
    Y'(v) &\sim \mu_{\pi}[\on{value}(v)= \cdot \mid Y^{-v}].
\end{align*}
It is standard that this chain is aperiodic and reversible with respect to $\mu_{\pi}$ and by using the condition $e\cdot b\cdot \Delta < 1$ along with the LLL, it is also easily seen (cf.~\cite[Proposition~8.1]{feng2020sampling}) that this chain is irreducible. Therefore (cf.~\cite[Corollary~1.17]{levin2017markov}), $\mu_{\pi}$ is the unique stationary distribution of this chain. 

We denote the Glauber dynamics for $\mu_{\pi}$ by $(Z_t)_{t\geq 0}$. 
Given $X_0, Y_0$, let $(X_t, Y_t)_{t\geq 0}$ denote a coupling of two copies of $Z_t$ starting from $X_0$ and $Y_0$. For this coupling, let 
\[\tau_{\on{couple}} = \min\{t\geq 0: X_{t} = Y_{t}\}.\]
It is well known (cf.~\cite[Theorem~5.4]{levin2017markov}) that
\begin{equation}
\label{eqn:mixing}
\max_{Z_0}d_{\on{TV}}(Z_{t}, \mu_{\pi}) \leq \max_{X_0, Y_0}\inf_{\on{couplings}} \mb{P}[\tau_{\on{couple}}\geq t],
\end{equation}
where the infimum is taken over all couplings $(X_t)_{t\geq 0}$ and $(Y_t)_{t\geq 0}$ of two copies of the Glauber dynamics with initial states $X_0$ and $Y_0$ respectively. Thus, our goal in this section is to show that for any $X_0, Y_0$, there is a coupling $(X_t, Y_t)_{t\geq 0}$ which coalesces quickly with high probability. 

In fact, the coupling that we will analyze is the optimal one-step coupling of the chains. Recall that this coupling is constructed as follows: given the current state $(X_{t-1}, Y_{t-1})$, we choose a uniformly random vertex $v \in V$ (in this case, we say that $v$ is updated at time $t$) and move to the state $(X_{t}, Y_{t})$ where
\begin{align*}
    &X_{t}(w) = X_{t-1}(w) \quad \forall w \neq v\\
    &Y_{t}(w) = Y_{t-1}(w) \quad \forall w\neq v\\
    &(X_{t}(v), Y_{t}(v)) \text{ is sampled from the optimal coupling of }(\mu_{\pi}[\on{value}(v)=\cdot \mid X_{t-1}^{-v}], \mu_{\pi}[\on{value}(v)=\cdot \mid Y_{t-1}^{-v}]).
\end{align*}

Hence, throughout the remainder of this section, $(X_t, Y_t)_{t\geq 0}$ will always denote the optimal one-step coupling of two copies of $Z_{t}$ starting at $X_0$ and $Y_0$ respectively. 

We partition time into blocks of size 
\[H = 100\kappa\cdot n,\]
where $\kappa \geq 2$ is the parameter appearing in (A2) of \cref{assumptions-1}.
By time block $K$, we mean the time interval $[HK, H(K+1))$. We will also need the following notation. 
Let $\mf{C} = \cal{C} \times \mathbb{Z}^{\geq 0}$. Recall that $G(\mc{C})$ denotes the graph whose vertex set consists of constraints $C \in \mc{C}$ and there is an edge between $C\neq C' \in \mc{C}$ if and only if $\on{vbl}(C)\cap \on{vbl}(C')\neq \emptyset$
Let 
$$U=\{(v,t) \in V \times \mathbb{Z}^{\geq 0}: v \textrm{ is updated at time }t\}.$$ 
Recall that this means that $v$ is the vertex chosen by the Glauber dynamics when the current states are $X_{t-1}$ and $Y_{t-1}$. Recall also that by the definition of the optimal one-step coupling, the same vertex is chosen to be updated at a given time $t$ in both chains. 
Let 
$$D = \{(v,t) \in V \times \mathbb{Z}^{\geq 0}: X_t(v)\ne Y_t(v)\}.$$ We call $D$ the set of {\it discrepancies}. 

Given a time interval $I$, we denote by $V^{0}(I)$ the set of variables $v\in V$ which are not updated in $I$ and by $V^{+}(I)$ the set of variables $v\in V$ which are updated at least $\kappa |I|/n$ times in $I$.

Finally, for $Z \in \prod_{v\in V}Q_{v}$, for $S\subseteq V$, and for $C \in \mc{C}$, we say that $S$ does not satisfy $C$ in $Z$ if
\[Z(v) = \boldsymbol{C}_{\pi}(v) \quad \forall v \in \on{vbl}(C) \cap S,\]
and that a partial assignment $Z'$ does not satisfy $C$ if 
\[Z'(v) = \boldsymbol{C}_{\pi}(v) \quad \forall v \in \on{vbl}(C)\text{ for which }Z'(v)\text{ is defined}.\]


\subsection{Discrepancy checks}




Our argument for bounding $\tau_{\on{couple}}$ will be based on showing that it is very unlikely for certain combinatorial structures, which we call \emph{minimal discrepancy checks}, to arise from the randomness of the choice of updates driving the Glauber dynamics. To this end, we begin by defining the notion of a {\it discrepancy check}. 
\begin{definition} \label{def:disc-check}
Let $v_0 \in V$ and $T_0$ be an integer in $[HK,H(K+1))$ (in particular, $T_0$ is in time block $K$). For $i\geq 1$, let $T_i = H(K-i)$. A {\it discrepancy check $\cal{D}$ starting at $(v_0,T_0)$} consists of a sequence of elements $(C_0, T_0), (C_1,T_1),\dots, (C_{K-1}, T_{K-1}) \in \mf{C}$, a sequence of elements $v_1,\dots,v_{K-1} \in V$, a collection of \emph{induced} oriented paths $\cal{P}_1,\dots,\cal{P}_{K-1}$ in $G(\cal{C})$, and a collection of Boolean variables $f_1,\dots,f_{K-1}$ satisfying the following properties. 
\begin{enumerate}[(D1)]
\item $(v_0,T_0) \in {D}$, $v_0 \in \on{vbl}(C_0)$ and $\on{vbl}(C_0)\setminus \{v_0\}$ does not satisfy $C_0$ in at least one of $X_{T_0-1}^{-v_0}$ and $Y_{T_0-1}^{-v_0}$. 
\item $f_1 = 1$, $v_1 \in \on{vbl}(C_1)$, and $\mc{P}_1$ is an induced path oriented from $C_0$ to $C_1$. Additionally, the following properties are satisfied.
\begin{itemize}
    \item $(v_1,T_1)\in D$ and either $X_{T_1}(v_1)$ or $Y_{T_1}(v_1)$ is equal to $(\boldsymbol{C_1})_{\pi}(v_1)$.
    \item For each constraint $C'$ in $\cal{P}_1 \setminus \{C_0\}$, there exists $T' \in [T_1, T_0)$ such that at least one of the following holds.
    \begin{itemize}
        \item The subset $\on{vbl}(C')$ 
        does not satisfy $C'$ in at least one of $X_{T'}$ and $Y_{T'}$. 
        \item There exists $v'\in \on{vbl}(C')$ such that 
        \begin{itemize}
        \item The subset $\on{vbl}(C')\setminus \{v'\}$
        does not satisfy $C'$ in at least one of $X_{T'}$ and $Y_{T'}$, and
        \item $v'$ is updated in $(T',T_0)$, and 
        \item The first update at time $t'>T'$ of $v'$ results in a discrepancy, and 
        \item There exists some $C'' \in \mc{C}$ such that either $X_{t'}(v')$ or $Y_{t'}(v')$ is equal to $\boldsymbol{C''}_\pi(v')$. 
        \end{itemize}
    \end{itemize}
\end{itemize}
We call the induced oriented path $\cal{P}_1$ from $C_0$ to $C_1$ the {\it $1$-leg} of the discrepancy check. 

\item For each $1 \le i \le K-2$, given $C_i$, $T_i$, $v_i$, and $\mc{P}_i$, if $v_i$ is not updated in $(T_{i+1}, T_i]$, then $f_{i+1} = 0$, $C_{i+1} = C_{i}$, $v_{i+1} = v_i$, and $\mc{P}_{i+1} = \{C_i\} = \{C_{i+1}\}$.

\item For each $1 \le i \le K-2$, given $C_i$, $T_i$, $v_i$, and $\mc{P}_i$, if $v_i$ is updated in $(T_{i+1}, T_i]$, then $f_{i+1} = 1$ and $\mc{P}_{i+1}$ is an induced path oriented from $\ol{C}_{i+1}$ to $C_{i+1}$. We require that the following properties are satisfied. 


\begin{itemize}
    \item $v_{i+1} \in \on{vbl}(C_{i+1})$, $(v_{i+1},T_{i+1})\in D$, and either $X_{T_{i+1}}(v_{i+1})$ or $Y_{T_{i+1}}(v_{i+1})$ is equal to $(\boldsymbol{C_{i+1}})_{\pi}(v_{i+1})$.
    \item $\ol{C}_{i+1}$ shares a variable with some constraint in $\cal{P}_i$. None of the constraints $C'\in \cal{P}_{i+1} \setminus \{\ol{C}_{i+1}\}$ share variables with any constraints in $\cal{P}_i$. 
    \item For each $C' \in \cal{P}_{i+1}$, there exists $T' \in [T_{i+1}, T_i)$ such that at least one of the following holds. 
    \begin{itemize}
        \item The subset $\on{vbl}(C')$
        does not satisfy $C'$ in at least one of $X_{T'}$ or $Y_{T'}$. 
        \item There exists $v'\in \on{vbl}(C')$ such that 
        \begin{itemize}
        \item The subset $\on{vbl}(C')\setminus \{v'\}$
        does not satisfy $C'$ in at least one of $X_{T'}$ and $Y_{T'}$, and
        \item $v'$ is updated in $(T',T_i)$, and 
        \item The first update at time $t'>T'$ of $v'$ results in a discrepancy, and 
        \item There exists some $C'' \in \mc{C}$ such that either $X_{t'}(v')$ or $Y_{t'}(v')$ is equal to $\boldsymbol{C''}_{\pi}(v')$.  
        \end{itemize}
    \end{itemize}
\end{itemize}
We call the induced oriented path $\cal{P}_{i+1}$ from $\ol{C}_{i+1}$ to $C_{i+1}$ the {\it $(i+1)$-leg} of the discrepancy check. 



\end{enumerate} 
\end{definition} 

For later use, we record the following simple lemma. 

\begin{lemma}
\label{lem:in_D}
Let $\mc{D}$ be a discrepancy check starting from $(v_0, T_0)$, where $T_0$ is in time block $K$. Then, for all $0\leq i \leq K-1$, $(v_i, T_i) \in D$.
\end{lemma}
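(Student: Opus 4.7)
My plan is to prove the lemma by a straightforward induction on $i$, splitting the inductive step on the Boolean flag $f_{i+1}$ that appears in \cref{def:disc-check}. The base case $i = 0$ is immediate: clause (D1) asserts $(v_0, T_0) \in D$ directly.

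For the inductive step, assume $(v_i, T_i) \in D$ and consider the two values of $f_{i+1}$. If $f_{i+1} = 1$, then clause (D2) (in the case $i = 0$) or clause (D4) (in the case $i \geq 1$) already demands, as part of the definition, that $(v_{i+1}, T_{i+1}) \in D$, and there is nothing more to verify. The only substantive case is $f_{i+1} = 0$, governed by (D3): here $v_{i+1} = v_i$, and the clause explicitly requires that $v_i$ is not updated at any time in the interval $(T_{i+1}, T_i]$. Since $(X_t, Y_t)_{t \geq 0}$ is the optimal one-step coupling described in the preliminaries, and this coupling only alters the value of a coordinate at the time steps when that coordinate is chosen for update, the values $X_t(v_i)$ and $Y_t(v_i)$ are both constant on $t \in [T_{i+1}, T_i]$. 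Hence the discrepancy at time $T_i$ propagates backward, yielding $X_{T_{i+1}}(v_i) \neq Y_{T_{i+1}}(v_i)$, i.e., $(v_{i+1}, T_{i+1}) = (v_i, T_{i+1}) \in D$, which closes the induction.

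There is no real obstacle here; the lemma is essentially a bookkeeping observation extracted from the way \cref{def:disc-check} is set up, and its role is to guarantee that every pair $(v_i, T_i)$ along the backbone of a discrepancy check is a genuine discrepancy, including during the ``dormant'' steps when no new leg is produced. The only care needed is in tracking the backward-in-time orientation ($T_{i+1} < T_i$) and in correctly applying the principle that a non-updated coordinate retains its value under the coupled dynamics; both are routine.
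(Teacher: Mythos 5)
Your proof is correct and is essentially the paper's argument: the paper phrases it as a minimal-counterexample contradiction (taking the smallest $i_*$ with $(v_{i_*},T_{i_*})\notin D$), which is logically the same as your induction. In both versions the case $f_{i+1}=1$ is settled directly by the first bullet of (D2)/(D4), and the case $f_{i+1}=0$ follows because $v_{i+1}=v_i$ is not updated in $(T_{i+1},T_i]$, so the coupled chains keep their (distinct) values on that coordinate and the discrepancy persists at time $T_{i+1}$.
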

\begin{proof}
By assumption, $(v_0, T_0) \in D$ and $(v_1, T_1) \in D$. Suppose for contradiction that there is some $2\leq i \leq K-1$ such that $(v_i, T_i)\notin D$ and let $i_*$ denote the smallest such index. Then, by the first bullet point of (D4), we cannot have $f_{i_*} = 1$. Therefore, we must have $f_{i_*} = 0$, in which case $v_{i_*} = v_{i_* - 1}$. But since $(v_{i_* - 1}, T_{i_* - 1}) \in D$ by the minimality of $i_*$ and since $v_{i_*} = v_{i_* - 1}$ is not updated in $(T_{i_*}, T_{i_*-1}]$ due to the condition $f_{i_*} = 0$, it follows that necessarily, $(v_{i_*}, T_{i_*}) \in D$, which contradicts the definition of $i_*$.
\end{proof}

\subsection{Constructing a discrepancy check} In this subsection, we show that whenever $(v_0,T_0)\in U \cap D$, there must exist a discrepancy check starting at $(v_0,T_0)$. 
\begin{proposition}\label{prop:disc-check-exists}
Let $(v_0,T_0) \in U\cap D$. Then there exists a discrepancy check $\cal{D}$ starting at $(v_0,T_0)$.
\end{proposition}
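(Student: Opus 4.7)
The plan is to construct the discrepancy check recursively, by induction on the leg index $i$, starting at $(v_0, T_0)$ and tracing backward through time blocks. For the base case, I will exhibit $C_0$ as follows. Since $(v_0, T_0) \in U \cap D$, the variable $v_0$ is updated at time $T_0$ via the optimal one-step coupling and this update produces different values in $X$ and $Y$. Hence the two conditional distributions $\mu_\pi[\on{value}(v_0) = \cdot \mid X_{T_0-1}^{-v_0}]$ and $\mu_\pi[\on{value}(v_0) = \cdot \mid Y_{T_0-1}^{-v_0}]$ must be distinct. This forces the existence of some $C_0 \ni v_0$ with $\on{vbl}(C_0) \setminus \{v_0\}$ not satisfying $C_0$ in at least one of $X_{T_0-1}^{-v_0}$ or $Y_{T_0-1}^{-v_0}$: otherwise, under both conditionings every constraint containing $v_0$ is automatically satisfied in every unprojection irrespective of the value at $v_0$, and a direct counting argument (the count of satisfying unprojections factorizes over $v_0$) then shows that both conditional distributions coincide with the unconditional $\mb{P}_\pi[\on{value}(v_0) = \cdot]$, contradicting distinctness.

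For the inductive step, assume the first $i$ legs are built. If $v_i$ is not updated in $(T_{i+1}, T_i]$, I follow (D3): set $f_{i+1} = 0$, $(C_{i+1}, v_{i+1}) = (C_i, v_i)$, and $\mc{P}_{i+1} = \{C_i\}$. The membership $(v_{i+1}, T_{i+1}) \in D$ holds because $v_i$'s value is constant on $(T_{i+1}, T_i]$ in both chains, matching the logic of \cref{lem:in_D}. Otherwise, $v_i$ is updated in $(T_{i+1}, T_i]$, and I must build $\mc{P}_{i+1}$ with $f_{i+1} = 1$. Let $t^*$ be the chronologically last update of $v_i$ in $(T_{i+1}, T_i]$ that creates a discrepancy at $v_i$; such $t^*$ exists because $(v_i, T_i) \in D$ and values are piecewise constant between updates. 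Applying the base-case argument at $(v_i, t^*)$ supplies a constraint $C^* \ni v_i$ with a dangerous witness at time $t^*-1$, and $C^*$ shares the variable $v_i$ with $C_i \in \mc{P}_i$, so it is a valid starting vertex for $\mc{P}_{i+1}$ with $\ol{C}_{i+1}$ adjacent to $\mc{P}_i$. I then extend $\mc{P}_{i+1}$ along constraints in $G(\mc{C})$ admitting the disjunctive witness condition of (D4) during $(T_{i+1}, T_i]$: each intermediate traversal is justified either because a constraint is directly violated (clause (a)) or because a variable carrying a propagated discrepancy was re-updated in the window (clause (b)). The walk is terminated at a constraint $C_{i+1}$ containing a variable $v_{i+1}$ with $(v_{i+1}, T_{i+1}) \in D$ and matching $\boldsymbol{C_{i+1}}_\pi$ at time $T_{i+1}$; termination is guaranteed because the discrepancy at $v_i$ must ultimately be inherited from a discrepancy present at the start of the time block.

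The main obstacle is the simultaneous verification of every clause of (D4): that $\mc{P}_{i+1}$ is an induced path oriented from $\ol{C}_{i+1}$ to $C_{i+1}$, that $\ol{C}_{i+1}$ is the unique constraint of $\mc{P}_{i+1}$ sharing a variable with $\mc{P}_i$, and that each $C' \in \mc{P}_{i+1}$ admits a valid witness time in $[T_{i+1}, T_i)$. The induced-path condition is handled by shortcutting to a shortest active-constraint walk inside $G(\mc{C}) \setminus (\mc{P}_i \setminus \{\ol{C}_{i+1}\})$; the uniqueness of the shared constraint with $\mc{P}_i$ is enforced by this subgraph choice. The third clause is the subtlest: it requires certifying the propagation of a discrepancy between two events either by a direct violation or by the intricate option-(b) witness involving a matching $\boldsymbol{C''}_\pi(v')$ event at the first update after $T'$. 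Supplying this witness leverages the atomic structure of the CSP, so that any newly created discrepancy at an updated variable forces a locally dangerous constraint (by the same base-case logic), together with the admissibility conditions on $\pi$ that control how these dangerous events remain traceable across a time block. Pushing through the bookkeeping cleanly, while checking that each step of the recursion respects the backward time index and exhausts all cases of whether $v_i$ is updated, is the crux of the proof.
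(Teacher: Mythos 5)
Your overall shape matches the paper's (a backward-in-time recursive construction of the legs, with the case split on whether $v_i$ is updated in $(T_{i+1},T_i]$, and your base-case observation that a coupling failure at $v_0$ forces some constraint $C_0\ni v_0$ whose other variables fail to satisfy it in one of the two chains is correct and gives (D1)). But there is a genuine gap at the heart of the argument: you never establish the analogue of \cref{lem:disc-path}, namely that a coupling failure at $(v,t)$ forces an entire \emph{path} of constraints in $G(\mc{C})$, each unsatisfied in at least one of $X_{t-1}^{-v}$ and $Y_{t-1}^{-v}$, leading from $v$ to a \emph{different} variable $u\neq v$ with $(u,t-1)\in D$. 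Your counting argument only rules out the case where no constraint containing $v$ is dangerous; the stronger statement needs the observation that the two conditional laws of $\on{value}(v)$ depend only on the restriction of $X_{t-1}^{-v}$, resp.\ $Y_{t-1}^{-v}$, to the connected component of $v$ in the graph of currently-unsatisfied constraints, so that if that component carries no discrepancy the maximal coupling must succeed. Without this, you have no mechanism to reach a variable $v_{i+1}$ with $(v_{i+1},T_{i+1})\in D$, and no source for the clause-(a)/(b) witnesses required of \emph{every} constraint on $\mathsf{P}_{i+1}$.

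Relatedly, your termination claim for the walk within a time block ("the discrepancy at $v_i$ must ultimately be inherited from a discrepancy present at the start of the time block") is precisely the assertion that needs proof, not a justification. The paper proves it by iterating the path lemma: from $(v_\ell,T'_\ell)\in U\cap D$ one gets a path to a discrepancy $(v^0,T^0)$; if the last update $\ol{T}^j$ of the current discrepancy variable precedes $T_{\ell+1}$ one stops, otherwise one reapplies the lemma at $(v^j,\ol{T}^j)$, concatenates, passes to an induced sub-path, and uses that the times $T^j$ strictly decrease. Your proposal replaces this with an unspecified "walk along constraints admitting the disjunctive witness condition," and defers the verification of (D4) as "bookkeeping," so the crux is asserted rather than carried out. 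A further sign the mechanism is not pinned down: you invoke the atomicity \emph{and the admissibility conditions on $\pi$} to certify the witnesses, but admissibility plays no role in this purely deterministic existence statement -- it only enters later, in \cref{prop:badprob-check}, when bounding the probability of a minimal discrepancy check.
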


We divide the proof into a couple of lemmas. 

\begin{lemma} \label{lem:disc-path} 
Under the optimal one-step coupling of the Glauber dynamics for $\mu_{\pi}$, if $(v,t)\in U \cap D$, then there exists a path $C^1,C^2,\dots,C^k$ in $G(\cal{C})$ such that 
\begin{itemize}
\item $v\in \on{vbl}(C^1)$. 
\item Each $C^i$ is not satisfied in at least one of $X_{t-1}^{-v}$ and $Y_{t-1}^{-v}$. 
\item $\on{vbl}(C^k)$ contains some $u\ne v$ satisfying $X_{t-1}(u)\ne Y_{t-1}(u)$.  
\end{itemize}
\end{lemma}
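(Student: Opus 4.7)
The plan is to construct the path $C^1,\ldots,C^k$ by a disagreement-percolation argument, starting at $v$ and propagating outward through constraints unsatisfied in at least one chain, until we reach a variable of actual disagreement. The driving observation is that since the one-step coupling is optimal and $(v,t) \in U \cap D$, the conditional distributions $\mu_\pi[\cdot \mid X_{t-1}^{-v}]$ and $\mu_\pi[\cdot \mid Y_{t-1}^{-v}]$ must differ; otherwise the optimal coupling would have matched them at $v$.

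Concretely, I would build an active cluster $\mathcal{S}\subseteq \mathcal{C}$ greedily. Initialize $\mathcal{S}=\emptyset$, and at each step add any constraint $C$ satisfying (a) $\on{vbl}(C)\cap (\{v\}\cup \bigcup_{C'\in\mathcal{S}}\on{vbl}(C'))\neq \emptyset$, (b) $C\notin \mathcal{S}$, and (c) $C$ is not satisfied (in the partial-assignment sense defined in the paper) in at least one of $X_{t-1}^{-v}$ or $Y_{t-1}^{-v}$. The first added constraint must contain $v$, and since each subsequent constraint shares a variable with the existing cluster, the subgraph of $G(\mathcal{C})$ induced by $\mathcal{S}$ is connected. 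If during the construction some added constraint contains a variable $u\neq v$ with $X_{t-1}(u)\neq Y_{t-1}(u)$, then extracting an induced path in this subgraph from a constraint containing $v$ to one containing $u$ furnishes the desired sequence $C^1,\ldots,C^k$: each pair of consecutive constraints shares a variable, and each $C^i$ is unsatisfied in at least one of the two partial configurations.

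It remains to rule out the alternative, namely that the construction terminates without reaching any discrepancy variable. In that case, every constraint $C\notin \mathcal{S}$ adjacent to $V(\mathcal{S})\cup\{v\}$ (where $V(\mathcal{S}):=\bigcup_{C'\in\mathcal{S}}\on{vbl}(C')$) is satisfied by both $X_{t-1}^{-v}$ and $Y_{t-1}^{-v}$, and $X_{t-1}$ agrees with $Y_{t-1}$ throughout $V(\mathcal{S})\setminus\{v\}$. I would derive a contradiction by showing that, under these circumstances, $\mu_\pi[\cdot \mid X_{t-1}^{-v}]=\mu_\pi[\cdot \mid Y_{t-1}^{-v}]$. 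Expressing each conditional probability as a ratio of counts of satisfying assignments in $\prod_u \Omega_u$ whose $\pi$-projection matches the given partial configuration, the ``exterior'' contribution from variables in $V\setminus (V(\mathcal{S})\cup\{v\})$ should factor out identically in the two chains: the inactive boundary constraints remain satisfied regardless of any interior lift, and the exterior projected configurations coincide by hypothesis.

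The main obstacle I anticipate is rigorously carrying out this factorization for the projected distribution $\mu_\pi$, since projection in general weakens the locality properties that the original LLL distribution enjoys. The key point enabling decoupling is that an inactive boundary constraint cannot be violated by any interior lift once the exterior is in the agreed-upon configuration, so its contribution becomes a common multiplicative factor in both ratios and cancels; I would formalize this by carefully separating the sums over interior and exterior lifts and invoking the atomicity of constraints together with the shared exterior projected values.
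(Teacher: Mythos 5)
Your overall strategy coincides with the paper's: the paper's (very short) proof observes that $\mu_\pi[\on{value}(v)=\cdot\mid\cdot]$ depends only on the restriction of the partial configuration to the connected component of $v$ in the graph of constraints not satisfied (in the partial-assignment sense) by at least one of $X_{t-1}^{-v}$, $Y_{t-1}^{-v}$, and then notes that if this component contained no discrepancy $u\neq v$, the two conditional laws at $v$ would agree, so the optimal one-step coupling could not produce $(v,t)\in D$. Your greedy cluster $\mathcal{S}$ is exactly this component, and the path extraction in the ``discrepancy reached'' case is fine (an ordinary path suffices; inducedness is not needed for this lemma).

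The one step that would fail as written is your justification of the factorization. You assert that ``the exterior projected configurations coincide by hypothesis,'' but the termination hypothesis only gives agreement of $X_{t-1}$ and $Y_{t-1}$ on $V(\mathcal{S})\setminus\{v\}$; discrepancy variables can perfectly well exist in the exterior $V\setminus(V(\mathcal{S})\cup\{v\})$, and nothing in the construction rules this out. Fortunately, exterior agreement is not needed. At termination, every constraint outside $\mathcal{S}$ either (i) is satisfied by both partial configurations, in which case atomicity implies it is satisfied by \emph{every} lift consistent with the projected values (the witnessing variable may be interior or exterior, so this has nothing to do with the exterior being ``agreed upon''), or (ii) is an active constraint all of whose variables lie in the exterior, since otherwise it would have been added to $\mathcal{S}$. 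Hence, within each chain separately, the count of satisfying lifts factors as an interior factor (depending on the value at $v$) times an exterior factor (independent of the value at $v$), and the exterior factor cancels in that chain's own ratio; no comparison of exterior factors across the two chains is required. The conditional law at $v$ therefore depends only on the restriction to $V(\mathcal{S})\cup\{v\}$, where the two chains do agree off $v$, giving identical conditional laws and the desired contradiction with $(v,t)\in U\cap D$. With this repair your argument is correct and is essentially the paper's, with the locality claim (which the paper dismisses as ``clear'') spelled out.
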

\begin{remark}
Since $C^{k}$ is not satisfied by at least one of $X_{t-1}^{-v}$ and $Y_{t-1}^{-v}$ and since $u\neq v$, it follows in particular that either $X_{t-1}(u)$ or  $Y_{t-1}(u)$ is equal to $\boldsymbol{C^{k}}_{\pi}(u)$.
\end{remark}
\begin{proof}
Let $\mc{C}'$ denote those constraints which are not satisfied by at least one of $X_{t-1}^{-v}$ and $Y_{t-1}^{-v}$. Let $G'(\mc{C'})$ be graph on the vertex set $\mc{C'}$ induced by the graph $G(\mc{C})$. It is clear that the distributions $\mu_\pi[\on{value}(v) =  \cdot \mid X_{t-1}^{-v}]$ and $\mu_\pi[\on{value}(v) =  \cdot \mid Y_{t-1}^{-v}]$ depend only on the restrictions of $X_{t-1}^{-v}$ (respectively $Y_{t-1}^{-v}$) to the connected component of $v$ in $G'(\mc{C'})$. Therefore, if the connected component of $v$ in $G(\mc{C'})$ does not contain any variable $u\ne v$ for which $X_{t-1}(u)\ne Y_{t-1}(u)$, then under the optimal coupling of the Glauber dynamics, we must necessarily have $X_{t}(v) = Y_{t}(v)$, which contradicts $(v,t) \in U \cap D$. 
\end{proof}


The next lemma, which is more involved, shows how to inductively build a discrepancy check.

\begin{lemma}\label{lem:disc-leg}
For $1 \le \ell \le K-2$, given $(C_\ell, T_\ell) \in \mf{C}$, $v_\ell \in V$, and $\cal{P}_\ell$ satisfying the properties in \cref{def:disc-check}, there exist $(C_{\ell+1}, T_{\ell+1}) \in \mf{C}$, $f_{\ell+1}$, $v_{\ell+1} \in V  $ and $\cal{P}_{\ell+1}$ satisfying the properties of the $(\ell+1)$-leg in \cref{def:disc-check}. 
\end{lemma}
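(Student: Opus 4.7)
The plan is a case-split on whether $v_\ell$ is updated in $(T_{\ell+1},T_\ell]$. If it is \emph{not}, the construction is forced by (D3): set $f_{\ell+1}=0$, $C_{\ell+1}=C_\ell$, $v_{\ell+1}=v_\ell$, and $\mc{P}_{\ell+1}=\{C_\ell\}$. Since \cref{lem:in_D} gives $(v_\ell,T_\ell)\in D$ and $v_\ell$'s value is frozen on $(T_{\ell+1},T_\ell]$ in both chains, the $\ell$-leg matching condition at $(\boldsymbol{C_\ell})_\pi(v_\ell)$ simply descends to $T_{\ell+1}$, and all (D3) requirements are satisfied.

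Suppose instead $v_\ell$ is updated in $(T_{\ell+1},T_\ell]$, and let $t^\ast$ be the last such update time. Since $v_\ell$ is untouched on $(t^\ast,T_\ell]$, we have $(v_\ell,t^\ast)\in U\cap D$ and the matching $X_{t^\ast}(v_\ell)=(\boldsymbol{C_\ell})_\pi(v_\ell)$ (or with $Y$) persists from time $T_\ell$. Moreover, the optimal coupling would have succeeded at time $t^\ast$ if the conditional distributions given $X_{t^\ast-1}^{-v_\ell}$ and $Y_{t^\ast-1}^{-v_\ell}$ had coincided; the failure forces them to differ. Applying \cref{lem:disc-path} at $(v_\ell,t^\ast)$ yields a path $C^1,\dots,C^k$ in $G(\mc{C})$, which by passing to a shortest path within the subgraph of constraints whose $\on{vbl}\setminus\{v_\ell\}$ partially matches the projected bad assignment at $t^\ast-1$ (and then iteratively removing chords in $G(\mc{C})$) we may assume is induced, with $v_\ell\in\on{vbl}(C^1)$, each $C^i$ partially unsatisfied at $t^\ast-1$ in the required sense, and some $u\in\on{vbl}(C^k)\setminus\{v_\ell\}$ with $(u,t^\ast-1)\in D$ and $X_{t^\ast-1}(u)=(\boldsymbol{C^k})_\pi(u)$ (or with $Y$).

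We next assemble $\mc{P}_{\ell+1}$. Let $j$ be the largest index for which $C^j$ shares a variable with some constraint in $\mc{P}_\ell$; such a $j\geq 1$ exists since $v_\ell\in\on{vbl}(C^1)\cap\on{vbl}(C_\ell)$. Set $\overline{C}_{\ell+1}:=C^j$, $C_{\ell+1}:=C^k$, and $\mc{P}_{\ell+1}:=(C^j,C^{j+1},\dots,C^k)$; the induced-path property is inherited, and by maximality of $j$, no constraint in $\mc{P}_{\ell+1}\setminus\{\overline{C}_{\ell+1}\}$ shares variables with $\mc{P}_\ell$. To produce $v_{\ell+1}\in\on{vbl}(C^k)$ with $(v_{\ell+1},T_{\ell+1})\in D$, we trace $u$'s discrepancy backwards: if $u$ is not updated in $(T_{\ell+1},t^\ast-1]$, then $v_{\ell+1}:=u$ works, with the matching $(\boldsymbol{C^k})_\pi(u)$ preserved by the frozen value; otherwise we iterate the same last-update/\cref{lem:disc-path} analysis on $u$, shrinking the time window at each step, until we identify a variable inside $\on{vbl}(C^k)$ whose discrepancy persists to $T_{\ell+1}$.

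Finally, for each $C'\in\mc{P}_{\ell+1}$ we verify the existence of $T'\in[T_{\ell+1},T_\ell)$ satisfying (a) or (b) of (D4). For constraints $C^i$ with $i\geq 3$, the induced-path property forces $v_\ell\notin\on{vbl}(C^i)$, so $\on{vbl}(C^i)$ fully matches $(\boldsymbol{C^i})_\pi$ in one chain at $T'=t^\ast-1$, giving (a). For $C^1$ and possibly $C^2$, which may contain $v_\ell$, either (a) already holds at $T'=t^\ast-1$, or else (b) holds with $v'=v_\ell$, $t'=t^\ast\in(t^\ast-1,T_\ell)$, and $C''=C_\ell$, the last bullet witnessed by the persistent matching $X_{t^\ast}(v_\ell)=(\boldsymbol{C_\ell})_\pi(v_\ell)$. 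The principal obstacle is the corner case $t^\ast=T_\ell$, where $t^\ast\notin(t^\ast-1,T_\ell)$ and (b) cannot use $t'=t^\ast$ directly: one then shifts $T'$ backwards and argues that the partial match of $\on{vbl}(C^i)\setminus\{v_\ell\}$ descends to earlier times (exploiting, when needed, that $v_\ell$'s own update at $T_\ell$ could not have affected other variables' values), replacing the witness variable if necessary. This interval bookkeeping, together with the backward tracking of $u$ above, is the technically delicate portion of the argument.
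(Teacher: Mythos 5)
Your first case ($v_\ell$ not updated in $(T_{\ell+1},T_\ell]$) and your opening move in the second case (take the last update time $t^\ast$, note $(v_\ell,t^\ast)\in U\cap D$ via \cref{lem:in_D}, and apply \cref{lem:disc-path} at that time) agree with the paper. The genuine gap is in how you bring the discrepancy down to the block boundary $T_{\ell+1}$. You fix $C_{\ell+1}:=C^k$ from this single application of \cref{lem:disc-path} and then assert that, when the witness $u$ is updated again inside $(T_{\ell+1},t^\ast-1]$, iterating the last-update/\cref{lem:disc-path} analysis will ``identify a variable inside $\on{vbl}(C^k)$ whose discrepancy persists to $T_{\ell+1}$.'' That is not what the iteration yields: if the last update of $u$ before $t^\ast-1$ occurs at some time $\ol{T}>T_{\ell+1}$, then \cref{lem:disc-path} applied at $(u,\ol{T})$ explains the discrepancy by a \emph{new} path of partially violated constraints, starting at a constraint containing $u$ and ending at a new discrepant variable which in general lies nowhere near $\on{vbl}(C^k)$. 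There need not exist any variable of $C^k$ that is still discrepant at time $T_{\ell+1}$, so the terminal constraint of the leg must be allowed to change. The paper's proof handles exactly this by iteratively concatenating the current path with each newly produced path, passing to an induced subpath, and maintaining the invariants (its (R1)--(R3)) for the growing path; only when the terminal variable's last update drops to $\leq T_{\ell+1}$ does the iteration stop, and only then is the path truncated at the last constraint meeting $\mc{P}_\ell$ to define $\ol{C}_{\ell+1}$, $\mc{P}_{\ell+1}$, $C_{\ell+1}$ and $v_{\ell+1}$. Your order of operations (truncate first, then track backwards inside a fixed terminal constraint) therefore cannot produce the first bullet of (D4).

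Two further consequences of this ordering are also unaddressed. First, since the path may grow during the backward tracking, choosing $\ol{C}_{\ell+1}$ as the last constraint of the \emph{initial} path meeting $\mc{P}_\ell$ is premature: constraints appended later could also intersect $\mc{P}_\ell$, violating the second bullet of (D4). Second, the third bullet of (D4) must be verified for every constraint appended during the iteration, with witness times confined to $[T_{\ell+1},T_\ell)$ and with the discrepancy-producing-update condition supplied by the remark following \cref{lem:disc-path}; your verification only covers the constraints of the first path obtained at time $t^\ast-1$. These items are precisely what the paper's inductive hypotheses carry through the concatenation, so what is missing is not interval bookkeeping but the central mechanism of the argument.
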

\begin{proof}
If $v_\ell$ is not updated in $(T_{\ell+1},T_\ell]$, then the choice $f_{\ell+1} = 0$, $v_{\ell+1} = v_{\ell}$, $C_{\ell+1} = C_{\ell}$ and $\mc{P}_{\ell+1} = \{c_{\ell+1}\}$ satisfies (D3) and we are done.\\ 

Otherwise, $v_{\ell}$ is updated in $(T_{\ell+1}, T_{\ell}]$. We claim that there exists an induced oriented path of constraints $\cal{P} = (C^0,\dots, C^k)$ with the following properties:
\begin{enumerate}[(Q1)]
    \item $v_{\ell} \in \on{vbl}(C^0)$.
    \item There exists some $v_{\ell+1} \in \on{vbl}(C^k)$ such that $(v_{\ell+1}, T_{\ell+1}) \in D$ and either $X_{T_{\ell+1}}(v_{\ell+1})$ or $Y_{T_{\ell+1}}(v_{\ell+1})$ is equal to $(\boldsymbol{C^k})_{\pi}(v_{\ell+1})$.
    \item For each constraint $C \in \mc{P}$, there exists some $T' \in [T_{\ell+1}, T_{\ell})$ such that at least one of the following holds. 
    \begin{itemize}
    \item The subset $\on{vbl}(C)$ does not satisfy $C$ in at least one of $X_{T'}$ and $Y_{T'}$.
    \item There exists $v' \in \on{vbl}(C)$ such that 
        \begin{itemize}
        \item The subset $\on{vbl}(C)\setminus \{v'\}$ does not satisfy $C$ in at least one of $X_{T'}$ and $Y_{T'}$, and
        \item $v'$ is updated in $(T',T_\ell)$, and 
        \item The first update at time $t'>T'$ of $v'$ results in a discrepancy, and 
        \item There exists some $C'' \in \mc{C}$ such that either $X_{t'}(v')$ or $Y_{t'}(v')$ is equal to $\boldsymbol{C''}_\pi(v')$. 
        \end{itemize}
    \end{itemize}
\end{enumerate}
We now show that such a path exists. This is the main step in the proof.\\ 

Let $T'_\ell$ be the last update of $v_\ell$ in the interval $(T_{\ell+1}, T_{\ell}]$. Since $(v_{\ell}, T_{\ell}) \in D$ (\cref{lem:in_D}), it follows that $(v_{\ell}, T'_{\ell}) \in D$ and hence in $(v_{\ell}, T'_{\ell}) \in U\cap D$. 
Therefore, letting $T^0 = T'_{\ell}-1$, it follows by \cref{lem:disc-path} that there exists a path  $\cal{P}^{0}=(C^{0}_1,\dots,C^{0}_{s_0})$ in $G(\mc{C})$ such that $v_{\ell} \in \on{vbl}(C^{0}_1)$, each $C^{0}_j$ is not satisfied by at least one of $X_{T^0}^{-v_{\ell}}$ and $Y_{T^0}^{-v_{\ell}}$, and there exists $v^0 \neq v_{\ell}$ with $v^0 \in \on{vbl}(C^0_{s_0})$ and $(v^0, T^0) \in D$. In particular, either $X_{T^0}(v^0)$ or $Y_{T^0}(v^0)$ is equal to $(\boldsymbol{C^{0}_{s_0}})_{\pi}(v^0)$. 
By choosing such a path of minimum length, we may assume that $\mc{P}^0$ is an induced oriented path in $G(\mc{C})$. We have the following two cases.\\ 

\textbf{Case I: }$T^0 = H(K-\ell-1) = T_{\ell+1}$. Then, $\mc{P}_{\ell+1} = \mc{P}^{0}$ satisfies (Q1), (Q2), and (Q3). Indeed, (Q1) and (Q2) (with $v_{\ell+1} = v^0$) are clear. 
Moreover, the constraints $C^{0}_{j}$ for $j\geq 2$ satisfy the first bullet point of (Q3). Finally, the constraint $C^{0}_{1}$ satisfies the second bullet point of (Q3) with $v' = v_{\ell}$, $T' = T_{\ell+1} = T^{0}$, $t' = T^{0} + 1 = T'_{\ell}$, and $C'' = C_{\ell}$. Indeed, we know by the first bullet point of (D4) that either $X_{T_{\ell}}(v_{\ell})$ or $Y_{T_{\ell}}(v_{\ell})$ is equal to $(\boldsymbol{C_{\ell}})_{\pi}(v_{\ell})$ and since $t' = T'_{\ell}$ is the time of the last update to $v_{\ell}$ before $T_{\ell}$, it must be the case that either $X_{t'}(v')$ or $Y_{t'}(v')$ is equal to $(\boldsymbol{C_\ell})_{\pi}(v')$.\\


\textbf{Case II: } $T^0 > H(K-\ell-1) = T_{\ell+1}$. By induction, suppose that for $j \geq 0$, we have an induced oriented path $\mc{P}^{j}$ in $G(\mc{C})$ with $\mc{P}^{j} = (C^{j}_1,\dots, C^{j}_{s_j})$, a variable $v^{j} \in V$, and $T_{\ell} \geq T^{j} > H(K-\ell-1)$ with the following properties:
\begin{enumerate}[(R1)]
    \item $v_{\ell} \in \on{vbl}(C^{j}_{1})$.
    \item $v^j \in \on{vbl}(C^j_{s_j})$, $(v^j, T^j) \in D$, and either $X_{T^{j}}(v^j)$ or $Y_{T^j}(v^j)$ is equal to $(\boldsymbol{C^j_{s_j}})_{\pi}(v^j)$. 
    \item For each constraint $C \in \mc{P}^{j}$, there exists some $T' \in [T_{\ell+1} ,T_{\ell})$ such that at least one of the following holds.
    \begin{itemize}
        \item The subset $\on{vbl}(C)$ does not satisfy $C'$ in at least one of $X_{T'}$ or $Y_{T'}$.
        \item There exists $v' \in \on{vbl}(C)$ such that
        \begin{itemize}
        \item The subset $\on{vbl}(C)\setminus \{v'\}$ does not satisfy $C$ in at least one of $X_{T'}$ and $Y_{T'}$, and
        \item $v'$ is updated in $(T',T_\ell)$, and \item The first update at time $t'>T'$ of $v'$ results in a discrepancy, and 
        \item There exists some $C'' \in \mc{C}$ such that either $X_{t'}(v')$ or $Y_{t'}(v')$ is equal to $\boldsymbol{C''}_\pi(v')$. 
        \end{itemize}
    \end{itemize}
\end{enumerate}

Let $\ol{T}^j$ be the last update of $v^j$ with $\ol{T}^j \le T^j$. We have two cases.\\

\textbf{Case 1}: If $\ol{T}^j \le T_{\ell+1}$, then the path $\cal{P}^j$ satisfies the required properties (Q1), (Q2), (Q3). Indeed, (R1) implies (Q1), (R3) implies (Q3), and (R2) implies (Q2) since $\ol{T}^{j} \leq T_{\ell+1}$ implies that $X_{T^j}(v^j) = X_{T_{\ell+1}}(v^j)$ and $Y_{T^j}(v^j) = Y_{T_{\ell+1}}(v^j)$.\\  

\textbf{Case 2}: $\ol{T}^j > T_{\ell+1}$. Since $(v^{j}, T^{j}) \in D$ by (R2), we must have $(v^{j}, \ol{T}^{j}) \in U \cap D$. Therefore, by \cref{lem:disc-path}, there exists an induced path $\cal{P}'^j = ({C'}^{j}_{1},\dots, {C'}^{j}_{s'_j})$ with $v^j \in \on{vbl}({C'}^{j}_1)$, each ${C'}^{j}_{i}$ is not satisfied by at least one of $X_{T^{j+1}}^{-v^{j}}$ and $Y_{T^{j+1}}^{-v^{j}}$, where $T^{j+1} = \ol{T}^{j}-1$, and there exists $v^{j+1} \neq v^{j}$ with $v^{j+1} \in \on{vbl}({C'}^{j}_{s'_j})$ and $(v^{j+1},T^{j+1})\in D$. Concatenating $\cal{P}^j$ with $\cal{P}'^j$ gives an oriented path from $C^{j}_{0}$ to ${C'}^{j}_{s'_j}$. By taking a sub-path between these endpoints which is an induced oriented path in $G(\mc{C})$, we get $\mc{P}^{j+1}$, which satisfies properties (R1), (R2), and (R3) with $j+1$. Indeed, (R1) follows from the assumption (R1) for $\mc{P}^j$, (R2) follows from \cref{lem:disc-path} and the remark following it whereas (R3) follows from \cref{lem:disc-path} and the assumptions (R2) and (R3) for $\mc{P}_j$. 

Note that, by construction, we have $T_{\ell+1} \leq T^{j+1}<T^{j}$. If $T^{j+1} = T_{\ell+1}$, then, as before, $\cal{P}^{j+1}$ satisfies the properties (Q1), (Q2), and (Q3). If $T^{j+1} > T_{\ell+1}$, then we repeat, noting that the process is guaranteed to terminate in finitely many steps since the sequence $(T^j)_{j\geq 0}$ is strictly decreasing before termination. This completes the proof of our claim about the existence of an induced oriented path satisfying (Q1), (Q2), and (Q3).\\

Now, let $\mc{P} = (C^{0},\dots, C^{k})$ be an induced oriented path in $G(\mc{C})$ satisfying (Q1), (Q2), and (Q3). Let $\ol{C}_{\ell+1}$ be the last (according to the orientation) constraint in $\mc{P}$ which shares a variable with any constraint in $\mc{P}_{\ell}$. Let $\mc{P}_{\ell+1}$ denote the part of $\mc{P}$ starting from $\ol{C}_{\ell+1}$. Also, denote the last constraint in $\mc{P}_{\ell+1}$ by $C_{\ell+1}$ and let $\boldsymbol{C}_{\ell+1} = (C_{\ell+1}, T_{\ell+1})$. We claim that $\mc{P}_{\ell+1}$ satisfies the properties of the $(\ell+1)$-leg of the discrepancy check. Indeed, the first bullet point in (D4) follows from (Q2), the second bullet point of (D4) follows from the construction of $\mc{P}_{\ell+1}$, and the third bullet point of (D4) follows from (Q3). \qedhere

\end{proof}

Given the preceding two lemmas, the proof of \cref{prop:disc-check-exists} follows easily.

\begin{proof}[Proof of \cref{prop:disc-check-exists}]
Since $(v_0, T_0) \in U \cap D$, it follows by \cref{lem:disc-path} that there exists an induced oriented path $\mc{P}^{0} = (C_{1}^{0},\dots, C_{s_0}^{0})$ in $G(\mc{C})$ such that $v_0 \in \on{vbl}(C_{1}^{0})$, each $C_{j}^{0}$ is not satisfied by at least one of $X_{T_0 - 1}^{-v_0}, Y_{T_0 - 1}^{-v_0}$, and there exists $v^{0} \neq v_{0}$ with $v^{0} \in \on{vbl}(C^{0}_{s_0})$ and $(v^{0}, T_0 - 1) \in D$. Then, by the same argument as in the proof of \cref{lem:disc-leg} (the only difference is that we slightly weaken the condition (R3) and require it only for $C \in \mc{P}^{j} \setminus \{C^{j}_{1}\}$), we can show that there exists an induced oriented path of constraints $\mc{P}_{1} = (C^{0},\dots, C^{k})$ with $C^{0} = C_0$ and $C^{k} = C_{1}$ and such that the following properties hold. 
\begin{enumerate}[(Q'1)]
    \item $v_0 \in \on{vbl}(C_0)$.
    \item There exists some $v_1 \in \on{vbl}(C_1)$ such that $(v_1, T_1) \in D$ and either $X_{T_1}(v_1)$ or $Y_{T_1}(v_1)$ is equal to $(\boldsymbol{C_1})_{\pi}(v_1)$.
    \item For each constraint $C \in \mc{P}_1 \setminus \{C_0\}$, there exists some $T' \in [T_{1}, T_0)$ such that at least one of the following holds.
    \begin{itemize}
    \item The subset $\on{vbl}(C)$ does not satisfy $C$ in at least one of $X_{T'}$ and $Y_{T'}$.
    \item There exists $v' \in \on{vbl}(C)$ such that 
        \begin{itemize}
        \item The subset $\on{vbl}(C)\setminus \{v'\}$ does not satisfy $C$ in at least one of $X_{T'}$ and $Y_{T'}$, and
        \item $v'$ is updated in $(T',T_\ell)$, and 
        \item The first update at time $t'>T'$ of $v'$ results in a discrepancy, and 
        \item There exists some $C'' \in \mc{C}$ such that either $X_{t'}(v')$ or $Y_{t'}(v')$ is equal to $\boldsymbol{C''}_\pi(v')$. 
        \end{itemize}
    \end{itemize}
\end{enumerate}
For such a path, note that $\boldsymbol{C_1} = (C_1, T_1) \in \mf{C}$, $f_1 = 1$, $v_1 \in V$, and $\mc{P}_1$ satisfy the properties in \cref{def:disc-check}. Now, a direct (repeated) application of \cref{lem:disc-leg} gives a discrepancy check starting at $(v_0, T_0)$. 
\end{proof}

\subsection{Minimal discrepancy checks}
In order to carry out the union bound argument later (both to control the size of the union as well as to control the probabilities of individual events in the union), it will be convenient to focus on {\it minimal discrepancy checks}. 
\begin{definition}\label{def:minimal-disc-check}
Let $v_0 \in V$ and $T_0$ an integer in $[HK,H(K+1))$ (in particular, $T_0$ is in time block $K$).  For $1\leq i \leq K-1$, let $T_{i} = H(K-i)$. A {\it minimal discrepancy check $\cal{M}$ starting at $(v_0,T_0)$} consists of a sequence of induced oriented paths $\mathsf{P}_1,\dots,\mathsf{P}_{K-1}$ in $G(\cal{C})$ and a collection of Boolean variables $f_1 = 1,f_2,\dots,f_{K-1}$ 
such that the following properties are satisfied. 
\begin{enumerate}[(M1)]
\item $(v_0, T_0) \in D$.
\item The first constraint of $\mathsf{P}_1$, which we denote by $C_0$, satisfies $v_0 \in \on{vbl}(C_0)$. Moreover,  
$\on{vbl}(C_0)\setminus \{v_0\}$ does not satisfy $C_0$ in at least one of $X_{T_0-1}^{-v_0}$ and $Y_{T_0-1}^{-v_0}$. 
\item For $i\ge 0$ satisfying $f_{i+1} = 0$, let $j(i+1) = \max\{j: j\leq i+1, f_j = 1\}$. Then, there exists some $v_{j(i+1)} \in \on{vbl}(C_{j(i+1)})$, where $C_{j(i+1)}$ is the last constraint in $\mathsf{P_{j(i+1)}}$, such that $(v_{j(i+1)}, T_i) \in D$ and $v_{j(i+1)}$ is not updated in $(T_{i+1}, T_i)$. Moreover, $\mathsf{P}_{i+1} = \{C_i\}$. 
\item For $i\geq 0$ satisfying $f_{i+1} = 1$, the following properties hold. 
    \begin{itemize}
        \item The last constraint of $\mathsf{P_i}$ and the first constraint of $\mathsf{P_{i+1}}$ have non-empty intersection. Any other pair of constraints in $\mathsf{P}_i$ and $\mathsf{P}_{i+1}$ are disjoint. 
        \item For any $C' \in \mathsf{P}_1 \setminus \{C_0\}$ (in case $i = 0$) and for any $C' \in \mathsf{P}_{i+1}$ (in case $i\geq 1$), let
        \[\on{vbl}^0(C') := \on{vbl}(C')\cap V^0((T_{i+2},T_{i+1})),\] 
        and 
        \[\on{vbl}^+(C') := \on{vbl}(C')\cap V^{+}((T_{i+1},T_{i})).\] 
        Then, there exists some $T' \in [T_{i+1}, T_i)$ such that at least one of the following holds.  
            \begin{itemize}
                \item The subset 
                \[\ol{\on{vbl}}(C')\setminus (\on{vbl}^0(C') \cup \on{vbl}^+(C'))\] 
                does not satisfy $C'$ in at least one of $X_{T'}$ or $Y_{T'}$. 
                \item There exists $v' \in \ol{\on{vbl}}(C') \setminus (\on{vbl}^{0}(C') \cup \on{vbl}(C'))$ such that 
                \begin{itemize}
                \item The subset 
                \[\ol{\on{vbl}}(C') \setminus (\{v'\} \cup \on{vbl}^0(C') \cup \on{vbl}^+(C'))\] 
                does not satisfy $C'$ in at least one of $X_{T'}$ and $Y_{T'}$, and
                \item $v'$ is updated in $(T',T_i)$, and 
                \item The first update at time $t'>T'$ of $v'$ results in a discrepancy, and 
                \item There exists some $C'' \in \mc{C}$ such that either $X_{t'}(v')$ or $Y_{t'}(v')$ is equal to $\boldsymbol{C''}_{\pi}(v')$. 
                \end{itemize}
            \end{itemize}
    \end{itemize}
\end{enumerate}
For a minimal discrepancy check $\mc{M}$, we denote the number of constraints in $\mathsf{P}_i$ by $r_i$ and refer to it as the \emph{length} of leg $i$.
\end{definition}

The next lemma shows how to modify a discrepancy check in order to obtain a minimal discrepancy check.
\begin{lemma}
\label{lem:minimal}
Suppose there exists a discrepancy check $\mc{D}$ starting at $(v_0, T_0)$. Then, there exists a minimal discrepancy check $\mc{M}$ starting at $(v_0, T_0)$.
\end{lemma}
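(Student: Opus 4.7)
The plan is to take the minimal discrepancy check to have exactly the same data as the given discrepancy check: set $\mathsf{P}_i := \mc{P}_i$ and keep the Boolean variables $f_1,\dots,f_{K-1}$ unchanged, and then verify (M1)--(M4) directly. The key point throughout will be monotonicity of the ``does not satisfy'' relation: if $S' \subseteq S$ and $S$ does not satisfy $C'$ in $Z$ (i.e.\ $Z(v) = \boldsymbol{C'}_\pi(v)$ for all $v \in \on{vbl}(C') \cap S$), then so does $S'$.

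First, (M1) and (M2) are immediate restatements of the content of (D1). For (M3), note that when $f_{i+1}=0$, (D3) forces $v_i = v_{i-1}$ and $C_i = C_{i-1}$; iterating back to the most recent index $j \leq i+1$ with $f_j = 1$ yields $v_{j(i+1)} = v_i$ and $C_{j(i+1)} = C_i$. The clause in (D3) that $v_i$ is not updated in $(T_{i+1},T_i]$ then becomes the corresponding non-update condition on $v_{j(i+1)}$ in (M3). The induced-path requirement in (M4) and the non-overlap condition on consecutive legs transfer verbatim from (D4).

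The real content is verifying the ``does not satisfy'' clauses of (M4). Here the minimal set
\[\ol{\on{vbl}}(C') \setminus \bigl(\on{vbl}^0(C') \cup \on{vbl}^+(C')\bigr)\]
is a subset of $\on{vbl}(C')$, so monotonicity immediately converts the first bullet of (D4) into the first bullet of (M4). For the second bullet, we case split on the location of the witness variable $v'$ provided by (D4):
\begin{itemize}
\item If $v' \notin \ol{\on{vbl}}(C')$, then $|Q_{v'}|=1$ and $v'$ trivially matches $\boldsymbol{C'}_\pi(v')$; hence ``$\on{vbl}(C')\setminus\{v'\}$ does not satisfy $C'$'' upgrades to ``$\on{vbl}(C')$ does not satisfy $C'$'', giving the first bullet of (M4) by monotonicity.
\item If $v' \in \ol{\on{vbl}}(C') \setminus (\on{vbl}^0 \cup \on{vbl}^+)$, the second bullet of (M4) holds with the same $v'$, since the auxiliary conditions on updating and on the existence of $C''$ are inherited verbatim from (D4).
\item If $v' \in \ol{\on{vbl}}(C') \cap (\on{vbl}^0 \cup \on{vbl}^+)$, then $v'$ is already removed in the minimal set, and the inclusion $\ol{\on{vbl}}(C') \setminus (\on{vbl}^0 \cup \on{vbl}^+) \subseteq \on{vbl}(C') \setminus \{v'\}$ combined with monotonicity gives the first bullet of (M4).
\end{itemize}

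The step requiring the most care is the case analysis in the third bullet, where one must confirm that no witness variable is needed in (M4) and that the first bullet suffices; beyond this, the argument is essentially bookkeeping. I expect no new probabilistic input is required: \cref{lem:minimal} is a purely combinatorial restatement in which the structural requirements of a minimal check are weaker than, or directly derivable from, those of a discrepancy check.
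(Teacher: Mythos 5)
Your handling of the ``does not satisfy'' clauses (the monotonicity observation and the three-way case split on the witness $v'$, including the case $v' \notin \ol{\on{vbl}}(C')$ where $|Q_{v'}|=1$) is fine and matches what the paper leaves implicit. However, there is a genuine gap in the structural part: you cannot take $\mathsf{P}_i := \mc{P}_i$ and claim the non-overlap condition ``transfers verbatim.'' The second bullet of (D4) only guarantees that $\ol{C}_{i+1}$, the first constraint of $\mc{P}_{i+1}$, shares a variable with \emph{some} constraint of $\mc{P}_i$, whereas the first bullet of (M4) demands that the first constraint of $\mathsf{P}_{i+1}$ intersect precisely the \emph{last} constraint of $\mathsf{P}_i$ and be disjoint from every other constraint of $\mathsf{P}_i$. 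Nothing in the construction of the legs (see the proof of \cref{lem:disc-leg}, where $\ol{C}_{\ell+1}$ is chosen as the last constraint of the new path meeting $\mc{P}_\ell$ anywhere) forces the intersection to occur at the end of $\mc{P}_i$, nor to occur only once; so with $\mathsf{P}_i = \mc{P}_i$ the defining adjacency/disjointness property of a minimal discrepancy check can simply fail. This property is not cosmetic: it is exactly what makes consecutive legs of $\mc{M}$ trace a path in the bounded-degree graph $G(\mf{C})$, which drives the enumeration in \cref{lem:disc-check-union}, and the pairwise disjointness of constraints in adjacent legs is used to build the independent set $I(\mc{M})$ in \cref{prop:badprob-check}.

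The paper's proof repairs this with one extra step you are missing: for each $i$, let $\tilde{C}_i$ be the \emph{first} (in the orientation) constraint of $\mc{P}_i$ whose variable set meets $\on{vbl}(\ol{C}_{i+1})$, and set $\mathsf{P}_i$ to be the initial segment of $\mc{P}_i$ up to and including $\tilde{C}_i$. Then the last constraint of $\mathsf{P}_i$ meets the first constraint of $\mathsf{P}_{i+1}$ by the choice of $\tilde C_i$, no earlier constraint of $\mathsf{P}_i$ meets $\ol{C}_{i+1}$ by minimality of $\tilde{C}_i$, and no other constraint of $\mathsf{P}_{i+1}$ meets $\mathsf{P}_i$ by (D4) together with the fact that the $\mc{P}_j$ are induced paths. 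Since $\mathsf{P}_{i+1} \subseteq \mc{P}_{i+1}$ and its first constraint is unchanged, the per-constraint ``does not satisfy'' conditions (your monotonicity argument) and (M1)--(M3) still go through for the truncated legs, so the rest of your verification can be kept essentially as written once this truncation is added.
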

\begin{proof}
Let $\cal{D}$ be a discrepancy check starting at $(v_0,T_0)$. For each $i \ge 1$, let $\tilde{C}_i$ denote the first (according to the orientation) constraint in $\mc{P}_i$ for which $\on{vbl}(\tilde{C}_i) \cap \on{vbl}(\ol{C}_{i+1}) \neq \emptyset$. Let $\mathsf{P_i}$ denote the part of $\mc{P}_i$ from the starting point until and including $\tilde{C}_i$. We claim that the paths $\mathsf{P}_1,\dots, \mathsf{P}_{K-1}$ and the Boolean variables $f_1,\dots, f_{K-1}$ from $\mc{D}$ constitute a minimal discrepancy check starting at $(v_0, T_0)$, where we use the variables $v_i$ from $\mc{D}$ for each $i$ satisfying $f_{i+1} = 0$ in order to check condition (M3).

Indeed, (M1) and (M2) follow from (D1). (M3) follows from (D3) and \cref{lem:in_D}.
The first bullet point of (M4) follows from the the second bullet point of (D4), the construction of $\mathsf{P}_i$s, and the fact that the $\mc{P}_j$s are induced paths. The second bullet point of (M4) follows from the second bullet point of (D2) and the third bullet point of (D4).  
\end{proof}

Combining this lemma with \cref{prop:disc-check-exists}, we have the following. 
\begin{proposition}\label{pro:min-disc-check-exists}
Let $(v_0,T_0) \in U\cap D$. Then there exists a minimal discrepancy check $\cal{M}$ starting at $(v_0,T_0)$.
\end{proposition}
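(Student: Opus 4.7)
The statement is a direct composition of the two main results established earlier in this subsection: \cref{prop:disc-check-exists}, which produces a discrepancy check starting at $(v_0,T_0)$ from the hypothesis $(v_0,T_0)\in U\cap D$, and \cref{lem:minimal}, which prunes any discrepancy check into a minimal one starting at the same point. The plan is simply to chain these two facts.

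First, I would apply \cref{prop:disc-check-exists} to obtain some discrepancy check $\cal{D}$ starting at $(v_0,T_0)$, i.e.\ data $(C_i,T_i)$, $v_i$, $f_i$, and induced oriented paths $\cal{P}_i$ satisfying (D1)--(D4). The engine behind this step is the interplay between \cref{lem:disc-path} and \cref{lem:disc-leg}: a coupled-update discrepancy at a freshly updated variable can always be traced back, through a chain of constraints that are unsatisfied under $X$ or $Y$, to an earlier discrepancy, and this backward tracing can be iterated across time blocks until one has assembled all $K$ legs.

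Next, I would invoke \cref{lem:minimal} on $\cal{D}$ to extract a minimal discrepancy check $\cal{M}$: for each leg $i\ge 1$, truncate $\cal{P}_i$ at the earliest constraint $\tilde C_i$ sharing a variable with the next leg's starting constraint $\ol{C}_{i+1}$, producing $\mathsf{P}_i$. One then verifies (M1)--(M4) for the resulting data; these verifications are handled in the statement and proof of \cref{lem:minimal} itself, with (M1), (M2) following from (D1), (M3) from (D3) combined with \cref{lem:in_D}, and (M4) from (D2), (D4), and the fact that each $\cal{P}_i$ was already induced. Since both ingredients are in hand, the proof reduces to applying them in sequence, and no further obstacle arises.
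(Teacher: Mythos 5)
Your proposal is correct and matches the paper's own argument exactly: the paper obtains this proposition by combining \cref{prop:disc-check-exists} (which yields a discrepancy check starting at $(v_0,T_0)$) with \cref{lem:minimal} (which converts it into a minimal discrepancy check starting at the same point). No further verification is needed beyond what those two results already supply.
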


To prepare for the next subsection, we introduce some more notation. To every minimal discrepancy check $\mc{M}$, we associate a graph $G(\mc{M}) = (V(\mc{M}), E(\mc{M}))$ defined as follows. The vertex set $V(\mc{M})$ consists of pairs $(C,i)$ where $1\leq i \leq K-1$ and $C \in \mathsf{P}_i$. The vertices $(C,i)$ and $(C',i)$ are connected to each other if and only if $C$ and $C'$ are adjacent in $\mathsf{P}_i$. Moreover, if $C$ is the last constraint in $\mathsf{P}_i$ and $C'$ is the first constraint in $\mathsf{P}_{i+1}$, then $(C,i)$ and $(C', i+1)$ are adjacent. Given the Boolean variables $f_1,\dots, f_{K-1}$ of $\mc{M}$, we can find some $1\leq \ell \leq K -1$ and disjoint intervals $L_1,\dots, L_\ell \subset \{1,\dots, K-1\}$ with the following properties. 
\begin{itemize}
    \item For all $a < b$, $\max{L_a} < \min{L_b}$.
    \item For every $1\leq s \leq \ell$ and for every $i \in L_s$, $f_i = 1$.
    \item For every $i$ such that $f_i = 1$, there exists some $1\leq s \leq \ell$ such that $i \in L_s$.
\end{itemize}
Given $L_1,\dots, L_\ell$, we define oriented induced paths $\wh{P}_1,\dots, \wh{P}_{\ell}$ in $G(\mc{M})$ where $\wh{P}_1$ contains all the points $\{(C,a): a \in L_1, C \in \mathsf{P}_a\}\setminus (C_0,1)$ and for $2\leq j \leq \ell$, $\wh{P}_j$ contains all the points $\{(C,a): a \in L_j, C \in \mathsf{P}_a\}$. For $1\leq j \leq \ell$, let 
\[\wh{r}_j = \sum_{a \in L_j}r_a.\]
We say that $\ell$ is the \emph{effective parameter} of the minimal discrepancy check and $\wh{r}_1,\dots, \wh{r}_{\ell}$ are its \emph{effective leg lengths}. 

Finally, for each $1 \leq j \leq \ell$, by taking every other point $\wh{P}_j$ starting with the very last point, we obtain an independent set $I_1(\mc{M})$ in $G(\mc{M})$ such that 
\[|I_1(\mc{M})| \geq \sum_{j=1}^{\ell}\wh{r}_j/2.\]
We have two cases.
\begin{enumerate}[({I}nd1)]
    \item $\sum_{j=1}^{\ell}\wh{r}_j \geq \sum_{i=1}^{K-1}r_i/7$. In this case, we define $I_0(\mc{M}) := \emptyset$ and $I(\mc{M}) := I_1(\mc{M})$.
    \item $\sum_{j=1}^{\ell}\wh{r}_j < \sum_{i=1}^{K-1}r_i/7$. In this case, observe that we can find an independent set $I_0(\mc{M})$ in $G(\mc{M})$ such that every element $(C,i) \in I_0(\mc{M})$ satisfies $f_i = 0$, $|I_0(\mc{M})| \geq \sum_{i=1}^{K-1}r_i/7$, and $I(\mc{M}) := I_0(\mc{M}) \cup I_1 (\mc{M})$ is also an independent set in $G(\mc{M})$.
\end{enumerate}
\subsection{Probability of a minimal discrepancy check}
In the previous subsection, we showed that if $(v_0, T_0) \in U \cap D$, then there must exist a minimal discrepancy check starting at $(v_0, T_0)$. In this subsection, we will bound the probability (under the randomness driving the Glauber dynamics) of seeing a minimal discrepancy check with given leg lengths. This will then be combined with a union bound argument in the next subsection.\\

It will be convenient to use the following description of the Glauber dynamics for $\mu_{\pi}$. To each vertex $v$ and each time $t$, we associate an independent uniform random variable $U(v,t) \sim \on{Unif}[0,1]$. At time $t$ (recall that this means that the current states are $X_{t-1}, Y_{t-1}$), we choose a variable $v$ to update from the uniform distribution on $V$ and choose $X_t(v)$ and $Y_t(v)$ according to the one-step maximal coupling of the corresponding conditional marginal distributions, with the realisation of $(X_t(v), Y_t(v))$ determined using $U(v,t)$ in the natural manner. With this notation, observe the following.
\begin{enumerate}[(O1)]
    \item For each $t, X_{t-1}, Y_{t-1}, v$, there exist subsets $I_d(v, X_{t-1}, Y_{t-1}) \subseteq [0,1]$ of measure at most $2q$ such that if $(v,t)\in U\cap D$, then $U(v,t)\in I_d(v)$. Here, as was defined in \cref{sec:projection},
    \[
    q = \max_{v\in V, Y\in \prod_{v\in V}Q_v} d_{\on{TV}}(\mb{P}_{\pi}[\on{value}(v) = \cdot], \mu_\pi[\on{value}(v) =  \cdot \mid Y^{-v}]).
    \]
    \item Suppose $e\cdot b \cdot \Delta \leq 1$. Then, by \cref{lem:marginal}, for each $v, C$, there exist subsets $I_s(v,C) \subseteq [0,1]$ of measure at most \[(1-3b)^{-\Delta}\p_{\pi}[\on{value}(v)=\boldsymbol{C}_\pi(v)]\] such that if $v$ does not satisfy $C$ in at least one of $X_t$ and $Y_t$ and if $t'$ is the last update of $v$ before time $t$, then $U(v,t') \in I_s(v,C)$. Here, recall that
    \[b = \max_{C\in \mc{C}}b(C),\]
    where
    \[b(C) = \prod_{u \in \on{vbl}(C)}|\pi_{u}^{-1}(\boldsymbol{C}_{\pi}(u))|^{-1}.\]
\end{enumerate}

Let $\mathbf{P}$ denote the probability measure corresponding to the randomness of the Glauber dynamics i.e. the choice of vertex to update at time $t$ and the i.i.d.~random variables $U(v,t) \sim \on{Unif}[0,1]$. The main result of this subsection is the following. 
\begin{proposition} \label{prop:badprob-check}
Let $\Phi$ be an atomic CSP and let $\pi$ be an admissible projection scheme. 
Let $\cal{M}$ be a minimal discrepancy check starting from $(v_0,T_0)$, where $T_0\ge HK$ ($H = 100\kappa n$), and with leg lengths $(r_1,\dots,r_{K-1})$. Further, let $\ell$ be the effective parameter of $\mc{M}$ and let its effective leg lengths be $(\wh{r}_1,\dots, \wh{r}_{\ell})$. 
Then,
\[
\mathbf{P}(\cal{M}) \le (3000\Delta)^{-\sum_{i=1}^{\ell} \hat{r}_i}\cdot (3000\Delta)^{-2|I_0(\mc{M})|} 
\]
\end{proposition}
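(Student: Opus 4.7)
The overall plan is to decompose the event $\{\mc{M}\}$ by exploiting the independent set $I(\mc{M}) = I_1(\mc{M}) \cup I_0(\mc{M})$ in $G(\mc{M})$ constructed above. Any two distinct vertices of $I(\mc{M})$ either lie in different legs (so they are associated with disjoint time intervals $(T_{i+1}, T_i]$) or they lie in the same leg but at non-adjacent positions of the induced path $\mathsf{P}_i \subseteq G(\mc{C})$ (so they correspond to constraints with disjoint variable sets). Consequently, the local bad events at distinct vertices of $I(\mc{M})$ are functions of disjoint collections of the independent pairs $\{(v_t, U_t)\}_t$ driving the chain (where $v_t$ is the vertex updated at time $t$ and $U_t \sim \on{Unif}[0,1]$ drives the one-step optimal coupling), so their probabilities multiply (or at worst are negatively correlated, which suffices for an upper bound).

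For each $(C, i) \in I_1(\mc{M})$ (so $f_i = 1$), I would bound the probability of the event in the second bullet of (M4) by $(60000\Delta)^{-2}$, matching the right-hand side of (A2). This is a careful union bound: (a) one union bound over the witness time $T' \in [T_{i+1}, T_i]$, which effectively costs a factor of $\kappa |\ol{\on{vbl}}(C)|$ since $X_{T'}|_{\on{vbl}(C)}$ and $Y_{T'}|_{\on{vbl}(C)}$ change only at the $O(\kappa |\on{vbl}(C)|)$ updates of $\on{vbl}(C) \setminus \on{vbl}^+(C)$ within the interval; (b) a factor $|\ol{\on{vbl}}(C)|$ for choosing the exceptional variable $v'$; (c) for each remaining $v \in \ol{\on{vbl}}(C) \setminus \{v'\}$, a contribution of $(1-3b)^{-\Delta}\mb{P}_\pi[\on{value}(v) = \boldsymbol{C}_\pi(v)] + e^{-\kappa/3}$, where the first term comes from (O2) applied to the last update of $v$ before $T'$, and the second term absorbs the two sporadic cases $v \in \on{vbl}^0(C)$ (bounded by $(1-1/n)^H \leq e^{-100\kappa}$ since $H = 100\kappa n$) and $v \in \on{vbl}^+(C)$ (bounded by a Chernoff estimate on a $\on{Binomial}(H, 1/n)$ random variable exceeding its mean by a factor of $\kappa$); (d) for $v'$, the factor $\zeta(C)$, which jointly accounts for the required discrepancy (probability at most $2q$ by (O1)) and the specific value $\boldsymbol{C''}_\pi(v')$ for some $C'' \in \mc{C}$ sharing $v'$ with $C$ (comparable to $\mb{P}_\pi[\on{value}(v') = \boldsymbol{C}_\pi(v')]$ by (A3)), taking the better of the two bounds. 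Multiplying these contributions reproduces precisely the left-hand side of the admissibility inequality (A2) up to constants, giving $(60000\Delta)^{-2} \leq (3000\Delta)^{-2}/400$ per constraint. Since $|I_1(\mc{M})| \geq (\sum_j \hat{r}_j)/2$, the resulting product over $I_1(\mc{M})$ is at most $(3000\Delta)^{-\sum_j \hat{r}_j}$ (with generous slack from the factor $400$).

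For each $(C, i) \in I_0(\mc{M})$ (so $f_i = 0$), the required event from (M3) is that a specific variable $v_{j(i+1)}$ is not updated anywhere in the interval $(T_{i+1}, T_i]$ of length $H = 100\kappa n$. The probability is at most $(1-1/n)^H \leq e^{-100\kappa} \leq (3000\Delta)^{-2}$, using $\kappa \geq 4\log(3000\Delta)$ from (A2). Multiplying over $I_0(\mc{M})$ contributes exactly $(3000\Delta)^{-2|I_0(\mc{M})|}$, completing the bound.

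The main technical obstacle will be rigorously justifying the factorization of probabilities along $I(\mc{M})$: although the local events depend on disjoint pieces of the randomness, the single vertex-selection variable $v_t$ is shared across elements of $I(\mc{M})$ falling in the same leg. The resolution is that the induced ``sub-sequences'' of update pairs $(v_t, U_t)$ with $v_t$ landing in disjoint variable sets are mutually exclusive for each fixed $t$ (giving negative correlation), and hence the product bound is valid for our purposes. The other delicate piece is the step (c) decomposition, where one must argue that the contribution of each $v \in \ol{\on{vbl}}(C) \setminus \{v'\}$ splits into exactly the three cases whose probabilities combine into the form $(1-3b)^{-\Delta}\mb{P}_\pi[\on{value}(v) = \boldsymbol{C}_\pi(v)] + e^{-\kappa/3}$ appearing in (A2); this is precisely why the admissibility condition is formulated with this particular combination of quantities.
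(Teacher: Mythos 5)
Your overall architecture matches the paper's (bound each element of the independent set $I(\mc{M})$ by the quantity in (A2) for the $I_1$-elements, and by a no-update estimate for the $I_0$-elements, then multiply), but the step you flag as the "main technical obstacle" is exactly where your argument breaks, and your proposed fix is not correct. The events attached to distinct elements of $I(\mc{M})$ are \emph{not} functions of disjoint pieces of the randomness: the vertex-selection sequence in a single time interval simultaneously determines the frozen sets $\on{vbl}^0$ of one leg, the hot sets $\on{vbl}^+$ and the candidate witness times of the adjacent leg, and the no-update events of the $I_0$-elements. Moreover the relevant correlation is \emph{positive}, not negative: conditioned on a set $V^0$ of variables receiving no updates in an interval, every other variable is updated more often, so hot-events become more likely, and this degradation is uncontrolled once $|V^0|+|V^+|$ is comparable to $n$ (leg lengths can be as large as $n$, so the total number of variables being conditioned frozen/hot across $I(\mc{M})$ can be that large). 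Consequently your per-variable bound "$(1-3b)^{-\Delta}\p_{\pi}[\on{value}(v)=\boldsymbol{C}_\pi(v)]+e^{-\kappa/3}$, multiply over $v$ and over $I(\mc{M})$" does not follow from the three marginal estimates you cite. The paper's proof is organized precisely to fix this: it conditions on the full vertex-selection sigma-algebras $\mf{T}_{\le i}$ (so that the $U(v,t)$-events genuinely factor via (O1)/(O2)), it proves the joint bound $\mathbf{P}[\mf{V}^0_{i-1}=V^0,\mf{V}^+_i=V^+]\le e^{-10\kappa(|V^0|+|V^+|)}$ only for \emph{unexceptional} intervals (\cref{eqn:unexceptional}), handles the exceptional intervals separately by a Chernoff bound $e^{-8\kappa n}$ per interval (\cref{eqn:exceptional}) together with the restriction to the index set $\hat I$, and then recovers the "$+e^{-\kappa/3}$" form of (A2) only after summing over all possible frozen/hot sets weighted by these probabilities (the expansion \cref{eqn:theta-large} of $\hat\Theta_i$). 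None of this machinery is present, or replaceable by "negative correlation", in your sketch.

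Two smaller points. For the $I_0$-elements, (M3) only asserts that \emph{some} $v\in\on{vbl}(C_{j(i)})$ is un-updated, so you need a union bound over that choice, a factor of up to $k$ that is not controlled by $\Delta$ alone (the lower bound in (A2) is only $\kappa\ge 4\log(3000\Delta)$); the paper charges this factor to the last constraint of the preceding leg with $f=1$, which by construction lies in $I_1(\mc{M})$, using the $|\ol{\on{vbl}}(C)|^2$ in (A2) (see (N2) and \cref{eqn:theta-small}). Also, your witness-time count is off: a non-hot variable may be updated up to $\kappa H/n=100\kappa^2$ times in an interval, so the union is over $O(\kappa^2|\ol{\on{vbl}}(C)|)$ times, not $O(\kappa|\ol{\on{vbl}}(C)|)$; this is harmless only because (A2) carries the factor $|\ol{\on{vbl}}(C)|^2\kappa^2$, but it should be stated correctly.
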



\begin{proof}
Recall the definition of the graph $G(\mc{M})$ associated to $\mc{M}$ and the independent set $I(\mc{M}) = I_0(\mc{M}) \cup I_1(\mc{M})$ in this graph. Recall also that $I(\mc{M})$ does not contain $(C_0, 1)$.
For $1\leq i \leq K-1$, let
\[\mf{C}_i = \{(C,i) \in I(\mc{M})\}.\]
Also, let
\begin{align*}
    \mc{V}^{0}(C,i) &= V^{0}((T_{i+1}, T_{i}))\cap \ol{\on{vbl}}(C), \\
    \mc{V}^{+}(C,i) &= V^{+}((T_{i}, T_{i-1}))\cap \ol{\on{vbl}}(C),
\end{align*}
and let
\begin{align*}
    \mf{V}^0_i &= \cup_{(C,i) \in \mf{C}_i}\mc{V}^{0}(C,i),\\
    \mf{V}^+_i &= \cup_{(C,i) \in \mf{C}_i}\mc{V}^{+}(C,i).
\end{align*}
Note that for $1 \leq i \leq K-1$, the sets $\mf{V}^+_{i}$ and $\mf{V}^0_{i-1}$ are completely determined by the choice of vertices selected to be updated between times $(T_{i}, T_{i-1}) =: \on{Int}_{i-1}$. We say that $\on{Int}_{i-1}$ is \emph{exceptional} if 
\[|\mf{V}^+_{i}| + |\mf{V}^0_{i-1}| \geq \frac{n}{2}.\]
It follows from a straightforward application of the Chernoff bound that for any $1 \leq i \leq K-1$,
\begin{align}
    \label{eqn:exceptional}
    \mathbf{P}[\on{Int}_{i-1} \text{ is exceptional}] \leq 2\exp(-H/12) \leq \exp(-8\kappa n).
\end{align}
Moreover, for disjoint subsets $V^0$ and $V^+$ of $V$ such that $|V^0| + |V^+| \leq n/2$, we have
\begin{align}
\label{eqn:unexceptional}
    \mathbf{P}[\mf{V}^{0}_{i-1} = V^0, \mf{V}^+_{i} = V^+]
    &= \mathbf{P}[\mf{V}^+_{i} = V^+ \mid \mf{V}^0_{i-1} = V^0]\mathbf{P}[\mf{V}^0_{i-1} = V^0]\\ \nonumber 
    &\leq \exp\left(-\frac{H|V^+|}{6n}\right)\cdot\left(1-\frac{|V^0|}{n}\right)^{H}\\ \nonumber 
    &\leq e^{-10\kappa\cdot (|V^0| + |V^+|)},
\end{align}
where in the second line, we have used that conditioned on $\mf{V}_{i-1}^0 = V^0$, the vertex to be updated at each step is chosen uniformly from a set of size at least $n/2$, and further, membership in $\mf{V}_{i}^+$ for different vertices is negatively dependent.\\ 

Let $\mf{I} \subset \{1,\dots, K-1\}$ denote the (random) set of $i$ such that $\on{Int}_{i-1}$ is exceptional. By the law of total probability, it suffices to show that
\[\mathbf{P}[\mc{M} \cap \{\mf{I} = \hat{J}\}] \leq (3000\Delta)^{-\sum_{i=1}^{\ell}\wh{r}_i}\cdot (3000\Delta)^{-2|I_0(\mc{M})|}.\] 
for every subset $\hat{J}$ of $\{1,\dots, K-1\}$. Therefore, for the remainder of the proof, we fix such a choice of $\hat{J}$ and let
\[\hat{I} = \{i \in \{1,\dots, K-1\}: \{i-1, i, i+1\} \notin \hat{J}\}.\]
Also, for each $i \in \hat{I}$, let $\mf{T}_{i-1}$ denote the sigma-algebra generated by the random variables which record the choice of vertex to update for times between $[T_i, T_{i-1})$. Note that $\mf{V}^{0}_i$ and $\mf{V}^+_i$ are measurable with respect to $\mf{T}_{i}$ and $\mf{T}_{i-1}$ respectively. We denote the realizations of these random sets, given the relevant sigma algebras, by $V^0_i(\mf{T}_{i})$ and $V^+_i(\mf{T}_{i-1})$ respectively. Note that, by the definition of $\hat{I}$, it is necessarily the case that $|V^+_i(\mf{T}_{i-1})| + |V^0_{i-1}(\mf{T}_{i-1})|\leq n/2$. We also let
\[\hat{I}_1 := \{i \in \hat{I}: f_i = 1\},\]
and
\[\hat{I}_0 := \{i \in \hat{I}: f_i = 0, \exists C \in \mc{C} \text{ such that }(C,i) \in I_0(\mc{M})\}.\]
Observe that, by construction of the set $I_0(\mc{M})$, we have that for any $i \in \hat{I}_0$ and any $j \in \hat{I}_1$,
\begin{equation}
\label{eqn:disjoint-interval}
\on{Int}_{i-1} \cap (\on{Int}_{j-1} \cup \on{Int}_j) = \emptyset.
\end{equation}

\medskip

\medskip 

Now, consider $i \in \hat{I}_1$. Conditioning on $\mf{T}_{i-1}, \mf{T}_{i}$ fixes $V^+_i = V^+_i(\mf{T}_{i-1})$ and $V^0_i = V^0_{i}(\mf{T}_{i})$. Moreover, conditioning on $\mf{T}_i$ fixes, for each $\wh{C} = (C,i) \in \mf{C}_i$, the set $\mc{T}(\wh{C}) \subseteq [T_i, T_{i-1})$ consisting of $T_i$ and all the update times of each variable $v \in \ol{\on{vbl}}(C) \setminus V_i^+$. Observe that 
\[|\mc{T}(\wh{C})| \leq 2\kappa H/n \cdot |\ol{\on{vbl}}(C)|\] and that the following holds. 

\begin{itemize}
    \item If there exists $T' \in [T_i,T_{i-1})$ such that $C$ is not satisfied in at least one of $X_{T'}$ and $Y_{T'}$ by $\ol{\on{vbl}}(C) \setminus ({V}^{0}_i \cup {V}^{+}_i)$, then there exists some $t\in \cal{T}(\wh{C})$ such that in the last update $t_v\le t$ of each variable $v\in \ol{\on{vbl}}(C)\setminus (V^0_i \cup V^+_i)$ before time $t$, we necessarily have $U(v,t_v) \in I_s(v,C)$. Indeed, $t$ can be taken to simply be the maximum of $T_{i}$ and the last time before (and including) $T'$ that any variable in $\ol{\on{vbl}}(C)\setminus (V^0_i \cup V^+_i)$ is updated. Call this event $\cal{E}_0(\wh{C})$.
        
        Since conditioned on $\mf{T}_i$ and $\mf{T}_{i-1}$, $t_{v}$ is determined by $t$ for each $v \in \ol{\on{vbl}}(C) \setminus V^0_i$, we have
        \begin{align*}
            \mathbf{P}[\mc{E}_{0}(\wh{C}) \mid \mf{T}_i, \mf{T}_{i-1}]
            & \leq |\cal{T}(\wh{C})|\prod_{v\in \ol{\on{vbl}}(C) \setminus ({V}^0_i \cup {V}^+_i)} \mathbf{P}[U(v,t_v)\in I_s(v,C) \mid \mf{T}_i, \mf{T}_{i-1}, t]\\
            &\leq 200 |\ol{\on{vbl}}(C)|\kappa^{2}\cdot \prod_{v\in \ol{\on{vbl}}(C) \setminus (V^0_i\cup V^+_i)} (1-3b)^{-\Delta} \p_{\pi}[\on{value}(v)=\boldsymbol{C}_\pi(v)].
        \end{align*}
   \item If there exists $T' \in [T_i,T_{i-1})$ and $v' \in \ol{\on{vbl}}(C) \setminus (V^0_i\cup V^+_i)$ such that
    \begin{itemize}
        \item  $C$ is not satisfied in $X_{T'}$ or $Y_{T'}$ by $\ol{\on{vbl}}(C)\setminus \{v'\}$, and
        \item the first update $t' > T'$ of $v'$ results in a discrepancy, and
        \item there exists some $C' \in \mc{C}$ with $v' \in \on{vbl}(C')$ such that either $X_{t'}(v')$ or $Y_{t'}(v')$ is equal to $\boldsymbol{C'}_{\pi}(v')$,
    \end{itemize}
    then there exists some $t \in \mc{T}(\wh{C})$ (again, $t$ can be taken to be the maximum of $T_i$ and the last time before (and including) $T'$ that any variable in $\ol{\on{vbl}}(C) \setminus (V^0_i\cup V^+_i)$ is updated) such that
    \begin{itemize}
        \item in the last update $t_v \leq t$ of each variable $v\in \ol{\on{vbl}}(C) \setminus (\{v'\} \cup V^0_i\cup V^+_i)$, we necessarily have $U(v,t_v) \in I_s(v,C)$, and
        \item $U(v',t') \in I_{d}(v') \cap I_{s}(v', C')$.
    \end{itemize}
  Call this event $\cal{E}_1(\wh{C})$.\\ 
  
  Let $\mf{T}_{\leq i}$ denote the sigma algebra generated by $\mf{T}_1,\dots, \mf{T}_i$.
 Note that, conditioned on $\mf{T}_{\leq i}$, $t \in \mc{T}(\wh{C})$ determines $t_{v}$ (for all $v \in \ol{\on{vbl}}(C)\setminus V_i^0$), $v'$ and $t'$. Therefore, letting
  \[V'(\wh{C}) := \ol{\on{vbl}}(C)\setminus (V^0_i\cup V^+_i \cup \{v'\}),\]
  we have
%
\begin{align*}
    &\mathbf{P}[\mc{E}_{1}(\wh{C}) \mid \mf{T}_{\leq i}] \\
    & \leq |\cal{T}(\wh{C})|\cdot \mathbf{P}[U(v',t') \in I_d(v') \cap I_s(v',C') \mid \mf{T}_{\leq i}, t]\cdot \prod_{v\in V'(\wh{C})} \mathbf{P}[U(v,t_v)\in I_s(v,C) \mid \mf{T}_{\leq i}, t] \\
    &\le 200|\ol{\on{vbl}}(C)|\kappa^2\cdot \min\left(q, \Delta(1-3b)^{-\Delta}\max_{C': v'\in \on{vbl}(C')}\p_{\pi}[
    \on{value}(v')= \boldsymbol{C'}_\pi(v')]\right)\\ 
    &\quad \quad \quad \cdot \prod_{v \in V'(\wh{C})} (1-3b)^{-\Delta} \p_{\pi}[\on{value}(v)=\boldsymbol{C}_\pi(v)].
    \end{align*}
\end{itemize}
Here, the factor of $\Delta$ in the second term in the parentheses is to account for the choice of $C'$ given $v'$. 

Note that by the definition of a minimal discrepancy check, if $f_{i} = 1$ (in particular, if $i \in \hat{I}_1$), then for every such $\wh{C} \in I(\mc{M})$ 
at least one of the events $\mc{E}_0(\wh{C})$ and $\mc{E}_1(\wh{C})$ holds. Let
\[\mc{E}(\wh{C}) = \mc{E}_0(\wh{C}) \cup \mc{E}_1(\wh{C}).\]
Then, by the definition of $\zeta(C)$ and property (A3) of \cref{assumptions-1}, we have
\begin{align}
\label{eqn:boundf1}
    \mathbf{P}[\mc{E}(\wh{C}) \mid \mf{T}_{\leq i}]
    &\leq 400|\ol{\on{vbl}}(C)|\kappa^{2}\cdot \zeta(C)\cdot \prod_{v \in \ol{\on{vbl}}(C)\setminus \{V^0_i \cup V^+_i\}}(1-3b)^{-\Delta} \p_{\pi}[\on{value}(v)=\boldsymbol{C}_\pi(v)].
\end{align}
Let 
\[\Theta(V^0_i, V^+_i) := \prod_{\mf{C}_i} \left( 400|\ol{\on{vbl}}(C)|\kappa^{2}\cdot \zeta(C)\cdot \prod_{v \in \ol{\on{vbl}}(C)\setminus \{V^0_i \cup V^+_i\}}(1-3b)^{-\Delta} \p_{\pi}[\on{value}(v)=\boldsymbol{C}_\pi(v)] \right),\]
and let
\[\wh{\Theta}_i := \prod_{\mf{C}_i} \left( 400|\ol{\on{vbl}}(C)|\kappa^{2}\cdot \zeta(C)\cdot \prod_{v \in \ol{\on{vbl}}(C)}\left((1-3b)^{-\Delta} \p_{\pi}[\on{value}(v)=\boldsymbol{C}_\pi(v)] + e^{-\kappa/2}\right) \right).\]
Since for any $(C,i), (C',i) \in \mf{C}_i$, we necessarily have $\on{vbl}(C) \cap \on {vbl}(C') = \emptyset$, it follows by expanding the term inside the parentheses in the definition of $\hat{\Theta}_i$ that
\begin{align}
\label{eqn:theta-large}
    \hat{\Theta}_i \geq \sum_{V^0_i, V^+_i \subseteq V, V^0_i \cap V^+_i = \emptyset}\Theta(V_i^0, V_i^+)\cdot e^{-\kappa(|V^0_i|+|V^+_i|)/2}.
\end{align}
Moreover, by property (A2) of \cref{assumptions-1}, we have that
\begin{equation}
\label{eqn:theta-small}
\hat{\Theta}_i \cdot \left(\prod_{\mf{C}_i} |\on{\ol{vbl}}(C)| \right)\leq (3000\Delta)^{-2|\mf{C}_i|}.
\end{equation}
\newline

At this point, we are almost done. Note that
\begin{enumerate}[(N1)]
\item The time intervals $\on{Int}_{j-1}$ for $j \in \hat{J}$ are disjoint from the time intervals $\on{Int}_{i}, \on{Int}_{i+1}, \on{Int}_{i-1}$ for $i \in \hat{I}$. 
\item For each $i \in \hat{I}_0$, we have $f_i = 0$. Recall the definition of the sets $L_1,\dots, L_{\ell} \subseteq \{1,\dots, K-1\}$ associated to $\mc{M}$. Let $j(i)$ be the largest index such that $j(i) \leq i$ and $j(i) \in L_a$ for some $1\leq a \leq \ell$. Let $C^*_{j(i)}$ denote the last constraint in $\mathsf{P}_{j(i)}$. Then, by construction of $I_1(\mc{M})$, we necessarily have $(C^*_{j(i)}, j(i)) \in I_1(\mc{M}) \subseteq I(\mc{M})$. Also, by (M3), there must exist some $v_{j(i)}^* \in \on{\ol{vbl}}(C^*_{j(i)})$ such that $v_{j(i)}^*$ is not updated in    
$(T_{i}, T_{i-1}) = \on{Int}_{i-1}$. 
\item For each $i \in \hat{I}_1$, for each $(C,i) \in I(\mc{M})$, for each $v \in \ol{\on{vbl}}(C)$, at least one of the following holds: (i) $v \in V^+_i$ (ii) $v\in V^0_i$, (iii) there is a term corresponding to $v$ in the expression for $\Theta(\wh{C}, V_i^0, V_i^+)$.
\item By construction, for every $(C,i), (C',j) \in I(\mc{M})$ with $|j-i|\leq 1$, we have $\on{vbl}(C) \cap \on{vbl}(C') = \emptyset$.
\end{enumerate}

\medskip 

\medskip 

Let $\hat{I}_1^{e}$ denote the set of indices $1\leq i \leq K-1$ for which there exists some $j \in \hat{I}_1$ with $|i-j|\leq 1$, and let $\mc{V}$ denote the collection of all subsets $\{V^0_{i-1}, V^+_i\}_{i \in \hat{I}_1^e}$ subject to the restriction that \[|V^+_{i}| + |V^0_{i-1}|  \leq n/2 \quad \quad \quad \forall i \in \hat{I}_1^{e}. \] 
Also, let $\mc{W}$ denote the collection of the variables $v_{j(i)}^*$ (with notation as in (N2)) for each $i \in \hat{I}_0$.
Then, from the above discussion, we have
\begin{align*}
    \mathbf{P}[\mc{M} \cap \{\mf{I} = \hat{J}\}]
    &\leq \sum_{\mc{V}}\sum_{\mc{W}} e^{-8\kappa n|\hat{J}|}\cdot \left(\prod_{i \in \hat{I}_1}\Theta(V^0_i, V^+_i)\cdot e^{-10\kappa\cdot (|V^0_i| + |V^+_i|)}\right)\cdot \prod_{\hat{I}_0}e^{-10\kappa}\\
    &= e^{-8\kappa n |\hat{J}|}\cdot \sum_{\mc{V}}\left(\prod_{i \in \hat{I}_1}\Theta(V^0_i, V^+_i)\cdot e^{-10\kappa\cdot (|V^0_i| + |V^+_i|)}\right)\cdot \sum_{\mc{W}}\prod_{\hat{I}_0}e^{-10\kappa}\\
    &\leq  e^{-8\kappa n |\hat{J}|}\cdot \prod_{i \in \hat{I}_1}\hat{\Theta}_i\cdot \left(\prod_{j\in \{j(i):i\in \hat{I}_0\}} |\ol{\on{vbl}}(C^*_{j})|\right)\cdot 
    \left(\prod_{i \in \hat{I}_0}e^{-10\kappa}\right)\\
    &\leq  e^{-8\kappa n |\hat{J}|}\cdot \prod_{i \in \hat{I}_1}\left(\hat{\Theta}_i \cdot  \left(\prod_{\mf{C}_i} |\on{\ol{vbl}}(C)| \right)\right)\cdot \prod_{i \in \hat{I}_0}e^{-10\kappa} \\
    &\leq e^{-8\kappa n |\hat{J}|}\cdot e^{-10\kappa |\hat{I}_0|}\cdot \prod_{i \in \hat{I}_1}(3000\Delta)^{-2|\mf{C}_i|}\  \\
    &\leq (3000\Delta)^{-2|I(\mc{M})|}\\
    &\leq (3000\Delta)^{-\sum_{i=1}^{\ell}\wh{r}_i}\cdot (3000\Delta)^{-2|I_0(\mc{M})|}.
\end{align*}
Let us explain this chain of inequalities. In the first line, we have used \cref{eqn:exceptional}, \cref{eqn:unexceptional}, \cref{eqn:disjoint-interval}, \cref{eqn:boundf1}, (N1)-(N4), and the law of total probability; the third line follows from \cref{eqn:theta-large} and (N2); the fourth line follows again from (N2); the fifth line follows from \cref{eqn:theta-small}; the sixth line follows upon noting that each of the leg lengths $r_1,\dots, r_{K-1}$ is at most $n$ since $\mathsf{P}_1,\dots, \mathsf{P}_{K-1}$ are induced paths and that $\kappa \geq 4\log(3000\Delta)$ by (A2) on \cref{assumptions-1}; and the last line follows by the construction of $I(\mc{M})$. \qedhere
\end{proof}

\subsection{Rapid mixing of the Glauber dynamics}
We are now in a position to prove \cref{prop:TV-projection}, which will follow as a consequence of the next lemma. 
\begin{lemma} 
\label{lem:disc-check-union}
Let $K \geq 1$ and let $\cal{B}_K$ be the event that there exists a minimal discrepancy check starting from $(v_0,T_0)$ with $T_0$ in time block $K$. Then,
\[
\mathbf{P}(\cal{B}_K) \le 3\Delta K \cdot (4/3)^{-(K-1)}.
\]
\end{lemma}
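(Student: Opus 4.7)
My plan is to union bound over all minimal discrepancy checks starting at $(v_0, T_0)$ and apply \cref{prop:badprob-check} to each. I parametrize such an $\mathcal{M}$ by its pattern $(f_1, \ldots, f_{K-1})$, its leg lengths $(r_1, \ldots, r_{K-1})$, and the underlying constraint structure. Writing $L_1 := \sum_{j=1}^\ell \hat{r}_j = \sum_{i: f_i=1} r_i$ and $R := \sum_i r_i \geq K-1$, the number of constraint structures compatible with a fixed $(f,r)$ is at most $\Delta^{L_1}$: at most $\Delta$ choices for $C_0$ (among constraints containing $v_0$); for each leg $\mathsf{P}_i$ with $f_i = 1$, at most $\Delta$ choices for its first constraint (which shares a variable with the last constraint of $\mathsf{P}_{i-1}$ by (M4)) and at most $\Delta^{r_i - 1}$ induced-path continuations; and legs with $f_i = 0$ are forced by (M3). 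Combined with \cref{prop:badprob-check},
\[
\sum_{\mathcal{M} : (f,r)} \mathbf{P}(\mathcal{M}) \;\leq\; \Delta^{L_1}(3000\Delta)^{-L_1 - 2|I_0(\mathcal{M})|} \;=\; 3000^{-L_1}\,(3000\Delta)^{-2|I_0(\mathcal{M})|}.
\]

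Next I would split into the two cases preceding \cref{prop:badprob-check}. In (Ind1), $|I_0(\mathcal{M})|=0$ and $L_1 \geq R/7 \geq (K-1)/7$, giving per-$(f,r)$ bound $3000^{-L_1}$ with $L_1 \geq (K-1)/7$. In (Ind2), $|I_0(\mathcal{M})| \geq R/7$, giving $3000^{-L_1}(3000\Delta)^{-2R/7}$, which decays strictly faster in $R$. To sum over $(f, r)$ I would use the Vandermonde-type identity
\[
\sum_{R_1 \ge 1}\binom{K-1}{R_1}\binom{L_1-1}{R_1-1} = \binom{K+L_1-2}{L_1}
\]
(where $R_1 = \#\{i : f_i=1\}$) to count patterns and compositions with given $L_1$, then invoke the generating function $\sum_{L \ge 0}\binom{K+L-2}{L}x^L = (1-x)^{-(K-1)}$ at $x = 1/3000$ to close the Ind1 sum (with truncation at $L_1 \ge (K-1)/7$) and the Ind2 sum (where the extra factor $(3000\Delta)^{-2R/7}$ makes the sum converge even more rapidly).

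The main bookkeeping step is controlling the truncated Ind1 series: using the crude bound $\binom{K+L-2}{L} \le 2^{K+L-2}$, the tail starting at $L_1 \ge (K-1)/7$ is dominated by
\[
2^{K-2}\,(2/3000)^{(K-1)/7}\,(1-2/3000)^{-1} \;\lesssim\; (2^{8}/3000)^{(K-1)/7},
\]
which decays at rate $(256/3000)^{1/7} < 3/4$ and thus is comfortably below $(4/3)^{-(K-1)}$. The Ind2 contribution is lower order thanks to the additional $(3000\Delta)^{-2R/7}$ factor. I expect the main obstacle to be the clean tracking of constants through these truncated geometric series so that the polynomial prefactor $3\Delta K$ (which absorbs the coarse combinatorial count and the at most $\Delta$ choices for $C_0$) suffices; the geometric rate $(4/3)^{-(K-1)}$ itself has substantial slack and follows directly from the truncation $L_1 \ge (K-1)/7$ forced by (Ind1).
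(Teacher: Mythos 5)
Your proposal is correct and follows essentially the same route as the paper: a union bound over the combinatorial structure of the minimal discrepancy check (at most $\Delta$ choices per constraint along the effective legs), multiplied against the bound of \cref{prop:badprob-check}, with the dichotomy (Ind1)/(Ind2) and $\sum_i r_i \ge K-1$ supplying the geometric decay in $K$. The only difference is bookkeeping: the paper pays a crude $2^{K-1}$ for the pattern $(f_1,\dots,f_{K-1})$ and a $3\Delta$ degree bound in the time-indexed graph $G(\mf{C})$, then beats it with per-leg geometric sums and the $(3000\Delta)^{-2|I(\mc{M})|}$ factor, whereas you enumerate $(f,r)$ exactly via the Vandermonde/negative-binomial count and truncate at $L_1 \ge (K-1)/7$ — both yield a bound comfortably below $3\Delta K\cdot(4/3)^{-(K-1)}$.
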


\begin{proof}
This follows from a union bound. First, note that the graph $G(\mf{C})$ whose vertices are $\mf{C} = \mc{C}\times \{0,1,\dots, K\}$ and which has an edge between $(C,i)$ and $(C',j)$ if and only if $|j-i|\leq 1$ and $\on{vbl}(C) \cap \on{vbl}(C') \neq \emptyset$ has maximum degree at most $3\Delta$. From this, it readily follows that the number of minimal discrepancy checks starting from time block $K$,  with effective parameter $\ell$, and with effective leg lengths $(\wh{r}_1,\dots, \wh{r}_{\ell})$ is at most
\[(3\Delta)\cdot (3\Delta)^{\wh{r}_1+\dots + \wh{r}_{\ell}}\cdot 2^{K-1},\]
where the factor $2^{K-1}$ accounts for the choice of $f_i$ and the first factor of $(3\Delta)$ accounts for the choice of the vertex in $G(\mf{C})$ adjacent to $(C_0, 1)$. By \cref{prop:badprob-check}, we know that for a minimal discrepancy check with effective parameter $\ell$ and effective leg lengths $(\wh{r}_1,\dots, \wh{r}_{\ell})$,
\[\mathbf{P}[\mc{M}] \leq (3000\Delta)^{-\sum_{i=1}^{\ell}\wh{r}_i}\cdot (3000\Delta)^{-2|I_0(\mc{M})|}.\]
Therefore, by the union bound, we have
\begin{align*}
    \mathbf{P}[\mc{B}_K]
    &\leq (3\Delta) \cdot \sum_{\ell = 1}^{K-1}\sum_{\wh{r}_1,\dots, \wh{r}_{\ell}}
    (3000\Delta)^{-\sum_{i=1}^{\ell}\wh{r}_i}\cdot (3\Delta)^{\wh{r}_1+\dots + \wh{r}_{\ell}} \cdot(3000\Delta)^{-2|I_0(\mc{M})|}\cdot 2^{K-1}\\
    &\leq 2^{K-1}\cdot (3\Delta) \cdot \sum_{\ell = 1}^{K-1}\sum_{\wh{r}_1,\dots, \wh{r}_{\ell}}
    2^{-\sum_{i=1}^{\ell}\wh{r}_i}\cdot 500^{-\sum_{i=1}^{\ell}\wh{r}_i} \cdot(3000\Delta)^{-2|I_0(\mc{M})|} \\
    &\leq 2^{K-1}\cdot (3\Delta) \cdot \sum_{\ell = 1}^{K-1}\sum_{\wh{r}_1,\dots, \wh{r}_{\ell}}
    2^{-\sum_{i=1}^{\ell}\wh{r}_i}\cdot 500^{-2|I(\mc{M})|}\\
    &\leq 2^{K-1}\cdot (3\Delta) \cdot \sum_{\ell = 1}^{K-1}\sum_{\wh{r}_1,\dots, \wh{r}_{\ell}}
    2^{-\sum_{i=1}^{\ell}\wh{r}_i}\cdot 500^{-2(K-1)/7}\\
    &\leq (2.95)^{-(K-1)}\cdot (3\Delta) \cdot K\cdot 2^{K-1}\\
    &\leq 3\Delta K \cdot (4/3)^{-(K-1)}.
\end{align*}
Here, the third line follows by the construction of $I(\mc{M})$ and the fourth line follows from (Ind1), (Ind2), and $\sum_{i=1}^{K-1}r_i \geq K-1$. \qedhere

\end{proof}


We are now in a position to prove \cref{prop:TV-projection}.
\begin{proof}[Proof of \cref{prop:TV-projection}]
Consider arbitrary initial states $X_0, Y_0$ and couple the Markov chains $(X_t)_{t\geq 0}, (Y_t)_{t\geq 0}$ as above.
Let 
\[t_* = 10H\log(n\Delta /\delta) + 10n\log(1/\delta),\]
where recall that $H = 100\kappa n$.
If $\tau_{\on{couple}} \geq t_*$
then in particular, there exists some $v \in V$ such that $(v,t_*) \in D$. Let $t_v$ denote the last time before (and including) $t_*$ that $v$ was updated. Note that $(v, t_v) \in U \cap D$. Therefore, by \cref{prop:disc-check-exists}, there exists a minimal discrepancy check $\mc{M}$ starting at $(v,t_v)$.
Note that for any $v \in V$,
\[\mathbf{P}[t_v \leq t_* - 10n\log(1/\delta)] \leq \delta^{10}.\]
On the other hand, if $t_v > t_* - 10n\log(1/\delta)$, then in particular, 
\[t_v \geq 10H\log(n\Delta /\delta)\] so that $t_v$ is in time block $K$ for 
\[K \geq 10\log(n\Delta /\delta).\] 
By \cref{lem:disc-check-union}, the probability of having a minimal discrepancy check starting from some $(v,t_v) \in V\times [t_*]$ with $K-1$ legs is at most
\begin{align*}
6\Delta \cdot n^2\cdot t_*\cdot (4/3)^{-10\log(n\Delta/\delta)} \leq \delta^{5}.
\end{align*}
The desired conclusion now follows from \cref{eqn:mixing}.
\end{proof}

\section{Finishing the proof of \cref{prop:projection}}\label{sec:finish-projection}
We conclude by completing the proof of \cref{prop:projection}.
\begin{proof}[Proof of \cref{prop:projection}]
\noindent {\bf Case 2.} 
Let $\alpha \in [0,1]$ be a parameter to be chosen momentarily via an optimization problem. We will use the marked/unmarked scheme of \cite{moitra2019approximate}. Namely, for each $v\in V$, independently, with probability $\alpha$, we set $Q_v = \{1\}$, and with probability $1-\alpha$, we set $Q_v = [A] = [2]$. Clearly, this satisfies (A3) and (A4). Let $\cal{V}_1$ be the set of $v\in V$ for which $|Q_v|=1$. Let $\cal{V}_f$ be the set of $v\in V$ for which $|Q_v|=A = 2$. 

Let $\theta_1,\theta_f \in (0,1)$ and $\gamma > 0$ be parameters such that 
\begin{enumerate}
    \item $\gamma \leq \theta_1 < \alpha$.
    \item $2\gamma \leq \theta_f < 1-\alpha$.
    \item $D(\theta_1,\alpha) \ge \gamma \log A$.
    \item $D(\theta_f,1-\alpha) \ge \gamma \log A$.
\end{enumerate}
Here, 
\[D(x,y) = x \log(x/y)+(1-x)\log((1-x)/(1-y))\]
is the Kullback-Leibler divergence. Our goal is to maximize $\gamma$. Solving this optimization problem for $A = 2$, we can find parameters $\theta_1, \theta_f, \alpha, \gamma$ such that $\gamma \ge 0.1742$.

Assume that $\Delta \le c A^{\gamma k}/k^2$ for a sufficiently small constant $c$ depending only on $\eta$. We next show that there exists a choice of $\cal{V}_1$ and $\cal{V}_f$ such that for all $C\in \cal{C}$, $|\cal{V}_1 \cap \on{vbl}(C)| \ge \gamma k$ and $|\cal{V}_f \cap \on{vbl}(C)| \ge 2\gamma k$. 

For this, we will use the LLL. Note that if $\mc{V}_1$ is a random set in which each variable is included independently with probability $\alpha$, then by the Chernoff-Hoeffding bound, 
\[\Pr[|\cal{V}_1 \cap C|< \theta_1 |\on{vbl}(C)|] \leq e^{-D(\theta_1,\alpha)|\on{vbl}(C)|}\]
and
\[\Pr[|\cal{V}_f \cap C|< \theta_f |\on{vbl}(C)|] \leq e^{-D(\theta_f,1-\alpha)|\on{vbl}(C)|}.\]
By our assumptions on the parameters (i.e.~they satisfy the conditions of the optimization problem), we have
\[
    \max\left(e^{-D(\theta_1,\alpha)|\on{vbl}(C)|},e^{-D(\theta_f,1-\alpha)|\on{vbl}(C)|}\right) \le e^{-k\log A \cdot \gamma} < (2e\Delta)^{-1}.
\]
Thus, by the LLL, there exists a choice of assignments of $v\in V$ to $\cal{V}_1$ and $\cal{V}_f$ such that for all $C\in \cal{C}$,
\[
|\cal{V}_1 \cap \on{vbl}(C)| \ge \theta_1 k, \quad  |\cal{V}_f \cap \on{vbl}(C)| \ge \theta_f k.
\]
Moreover, by \cref{thm:alg-LLL}, with probability at least $1-\delta$, this assignment can be constructed in time $O(n\Delta k\log(1/\delta))$.

Under a choice of $\cal{V}_1$ and $\cal{V}_f$ with the above properties, for $C\in \cal{C}$, 
\[
b(C) = A^{-|\cal{V}_1 \cap \on{vbl}(C)|} \le A^{-\gamma k} \le \min\left((300\Delta/\eta)^{-1},(600k\Delta)^{-1}\right),
\]
so that (A1) holds. It remains to verify (A2). Note that 
\[
\zeta(C) \le \max(1, A) \le 4.
\]
Let $\kappa = 12\log(3000(k+\Delta))$. Then, we have 
\begin{align*}
    &|\ol{\on{vbl}}(C)|^2 \kappa^2 \cdot \zeta(C) \cdot \prod_{v\in \ol{\on{vbl}}(C)} \left((1-3b)^{-\Delta} \p_\pi[\on{value}(v)=\boldsymbol{C}_\pi(v)] + e^{-\kappa/3}\right)\\
    &\le |\cal{V}_f\cap \on{vbl}(C)|^2 \cdot (12\log(3000(k+\Delta)))^2 \cdot \left(\frac{1+1/(100k)}{A}+\frac{1}{2kA}\right)^{|\cal{V}_f\cap \on{vbl}(C)|} \\
    &\le k^2 \cdot (12\log(3000(k+\Delta)))^2 \cdot \left(\frac{1+1/(2k)}{A}\right)^{|\cal{V}_f \cap \on{vbl}(C)|}\\
    &\le k^2 \cdot (12\log(3000(k+\Delta)))^2 \cdot 2 \cdot A^{-2\gamma k}\\
    &\le (60000\Delta)^{-2},
\end{align*}
by the assumed upper bound on $\Delta$.

\bigskip 

\noindent {\bf Case 3.} 
For each $v\in V$, let $Q_v := \{1,2\}$ and choose a uniformly random projection $\pi_v$ from $[A]$ to $Q_v$ such that $|\pi_v^{-1}(1)|=1$ and $|\pi_v^{-1}(2)|=2$.  Clearly, this satisfies (A3) and (A4). For each $C\in \cal{C}$, let $\cal{V}_1(C)$ be the set of $v\in \on{vbl}(C)$ for which $|\pi_v^{-1}(\boldsymbol{C}_\pi(v))|=1$ and let $\cal{V}_2(C)$ be the set of $v\in \on{vbl}(C)$ for which $|\pi_v^{-1}(\boldsymbol{C}_\pi(v))|=2$. Note that for each $v\in \on{vbl}(C)$, the probability (over the choice of $\pi_v$) that $v\in \cal{V}_1(C)$ is $1/3$ and the probability that $v\in \cal{V}_2(C)$ is $2/3$. 

We have 
\[
b(C) = 2^{-|\cal{V}_2(C)|}. 
\]
Note that 
\[
\zeta(C) \le \max(1, A) = 3.
\]
Let $\kappa = 12\log(3000(\Delta+k))$. Then, we have 
\begin{align*}
    &|\ol{\on{vbl}}(C)|^2 \kappa^2 \cdot \zeta(C) \cdot \prod_{v\in \ol{\on{vbl}}(C)} \left((1-3b)^{-\Delta} \p_\pi[\on{value}(v)=\boldsymbol{C}_\pi(v)] + e^{-\kappa/3}\right)\\
    &\le 3k^2 \cdot (12\log(3000(\Delta+k)))^2 \cdot \left(1+1/(100k)+1/(2k)\right)^{k} \left(\frac{1}{3}\right)^{|\cal{V}_1(C)|}\left(\frac{2}{3}\right)^{|\cal{V}_2(C)|} \\
    &\le 3k^2 \cdot (12\log(3000(\Delta+k)))^2 \cdot 2 \cdot 3^{-k} \cdot 2^{|\cal{V}_2(C)|}.
\end{align*}

Let $\gamma = 0.2$. By Markov's inequality, we have 
\begin{align*}
\Pr[b(C) > 3^{-\gamma k}] &= \Pr[(3/2)^{|\cal{V}_2(C)|} \cdot 3^{|\cal{V}_1(C)|} > 3^{(1-\gamma)k}]\\
&\le \min_{\theta > 0} \left((3^{(1-\gamma) k})^{-\theta} \left(\frac{2}{3} (3/2)^{\theta} + \frac{1}{3} 3^{\theta}\right)^k\right).
\end{align*}
Similarly, we have
\begin{align*}
\Pr[2^{|\cal{V}_2(C)|} > 3^{(1-2\gamma) k}] 
&\le \min_{\theta > 0} \left((3^{(1-2\gamma) k})^{-\theta} \left(\frac{2}{3} 2^{\theta} + \frac{1}{3}\right)^k\right).
\end{align*}
For $\gamma = 0.2$, solving the above optimization problem in $\theta$, one finds that 
\[
\Pr[b(C) > 3^{-\gamma k}] \le 3^{-\gamma k},
\]and 
\[
\Pr[2^{|\cal{V}_2(C)|} > 3^{(1-2\gamma) k}] \le 3^{-\gamma k}.
\]
Using the above bounds, assuming that $\Delta \le c3^{\gamma k}/k^2$, by the LLL, there exists a choice of the projection so that for all $C\in \cal{C}$, 
\[
b(C) \le \min\left((300\Delta/\eta)^{-1}, (600k\Delta)^{-1}\right),
\]and 
\[
|\ol{\on{vbl}}(C)|^2 \kappa^2 \cdot \zeta(C) \cdot \prod_{v\in \ol{\on{vbl}}(C)} \left((1-3b)^{-\Delta} \p_\pi[\on{value}(v)=\boldsymbol{C}_\pi(v)] + e^{-\kappa/3}\right) \le (60000\Delta)^{-2}.
\]
Furthermore, by \cref{thm:alg-LLL}, such projection maps can be constructed in time $O(n\Delta k\log(1/\delta))$ with probability at least $1-\delta$. 

\bigskip 

\noindent {\bf Case 4}. First, we consider the case $A\notin \{5,7\}$. By taking the constant $c > 0$ in the statement of the proposition to be sufficiently small, we may assume that $A^{k}$ is sufficiently large for various inequalities below to go through. 
We will use the inequality 
\[
\inf_{A\ge 4, A\notin \{5,7\}} \left\{ \max_{R\in [2,A]} \left[ \min \left(\frac 1 2 \frac{\log(A/\lceil A/R\rceil)}{\log A}, \frac{\log (\lfloor A/R\rfloor)}{\log A}\right)\right]\right\} := \alpha_* \ge 0.25,
\]
which may be verified numerically. 
For $A \geq 4$, $A\notin \{5,7\}$, let 
\[R := \argmax_{R'\in [2,A]}\left[\min \left(\frac 1 2 \frac{\log(A/\lceil A/R'\rceil)}{\log A}, \frac{\log (\lfloor A/R'\rfloor)}{\log A}\right)\right].\]

As in Case 1, let $Q_{v} = [R]$ for all $v \in V$ and define $\pi_{v}$ arbitrarily so that the preimage of each element in $Q_v$ has size either $\lfloor A/R \rfloor$ or $\lceil A/R \rceil$. As before, this satisfies (A4). 

Recall that $\Delta \le cA^{\alpha_* (k-1)}/(k^2 \log A)$ for some small absolute constant $c$ (depending only on $\eta$). Then, (A1) holds since
\[b  \leq \left(\frac{1}{\lfloor A/R\rfloor}\right)^{k} \leq A^{-\alpha_* k} \le \eta/(300\Delta),\]
and (A3) holds since by the choice of $R$ and since $A$ is sufficiently large,
\[
    \frac{1}{2}A^{-2\alpha_*} \leq \frac{\lfloor A/R\rfloor}{A}  \leq \mb{P}_\pi[\on{value}(v) = \boldsymbol{C}_{\pi}(v)] \leq \frac{\lceil A/R\rceil}{A} \leq A^{-2\alpha_*}.
\]

It remains to verify (A2). By the choice of $R$ and the upper bound on $\Delta$, we have
\[
    (1-3b)^{-\Delta} \le 1+6b\Delta \le 1+1/(100k).
\]
Therefore, as in Case 1, we have that
\[\zeta(C) \leq 3A^{2\alpha_*}.\]
Let $\kappa = 12\log(3000(\Delta+Ak))$. Then, 
\begin{align*}
    &\zeta(C)\cdot \prod_{v\in \ol{\on{vbl}}(C)} \left((1-3b)^{-\Delta} \p_\pi[\on{value}(v)=C_\pi(v)] + e^{-\kappa/3}\right) \\
    &\leq 3A^{2\alpha_*}\cdot (1+1/(2k))^{k}\cdot A^{-2\alpha_*k}\\
    &\leq 6\cdot A^{-2\alpha_*(k-1)}.
\end{align*}
Thus, 
\begin{align*}
    &|\ol{\on{vbl}}(C)|\cdot \kappa^2 \cdot \zeta(C) \cdot \prod_{v\in \ol{\on{vbl}}(C)} \left((1-3b)^{-\Delta} \p_\pi[\on{value}(v)=C_\pi(v)] + e^{-\kappa/3}\right)\\
    &\le 6\cdot k^2 \cdot (12\log(3000(\Delta+Ak)))^{2}A^{-2\alpha_*(k-1)} \\
    &\le (60000\Delta)^{-2},
\end{align*}
where in the last inequality, we used the assumption $\Delta \le cA^{\alpha_*(k-1)}/(k^2 \log A)$. 

Next, we consider the case $A=5$. Here, our construction is similar to Case 3. Let $x,y \geq 0$ be parameters to be chosen later via an optimization problem, which satisfy $x+y = 1$. For each $v\in V$, with probability $x$, we choose the projection $\pi_v$ to be a uniformly random partition of $[A]$ into two parts of sizes $3$ and $2$, and with probability $y$, we choose the projection $\pi_v$ to be a uniformly random partition of $[A]$ into three parts of sizes $2$, $2$, and $1$. For $C\in \cal{C}$, let $\cal{V}_{x,3}(C)$ be the set of variables $v\in \on{vbl}(C)$ for which $\pi_v$ is a partition of $[A]$ into parts of size $3,2$, and $|\pi_v^{-1}(\boldsymbol{C}_\pi(v))|=3$. Similarly, we define $\cal{V}_{x,2}(C)$, $\cal{V}_{y,2}(C)$ and $\cal{V}_{y,1}(C)$. Note that for $v\in \on{vbl}(C)$, $\Pr[v\in \cal{V}_{x,3}(C)] = 3x/5$, $\Pr[v\in \cal{V}_{x,2}(C)] = 2x/5$, $\Pr[v\in \cal{V}_{y,2}(C)] = 4y/5$, and $\Pr[v\in \cal{V}_{y,1}(C)] = y/5$.

We have 
\[
b(C) = 3^{-|\cal{V}_{x,3}(C)|}\cdot 2^{-|\cal{V}_{x,2}(C)|}\cdot 2^{-|\cal{V}_{y,2}(C)|}.
\]
Note that 
\[
\zeta(C) \le \max(1, A) = 5.
\]
Let $\kappa = 12\log(k/c+3000\Delta)$. Then, we have 
\begin{align*}
    &|\ol{\on{vbl}}(C)|^2 \kappa^2 \cdot \zeta(C) \cdot \prod_{v\in \ol{\on{vbl}}(C)} \left((1-3b)^{-\Delta} \p_\pi[\on{value}(v)=\boldsymbol{C}_\pi(v)] + e^{-\kappa/3}\right)\\
    &\le 5k^2 \cdot (12\log(k/c + 3000\Delta))^2 \cdot \left(1+6c/k+1/(2k)\right)^{k} \left(\frac{3}{5}\right)^{|\cal{V}_{x,3}(C)|}\left(\frac{2}{5}\right)^{|\cal{V}_{x,2}(C)|}\left(\frac{2}{5}\right)^{|\cal{V}_{y,2}(C)|}\left(\frac{1}{5}\right)^{|\cal{V}_{y,1}(C)|} \\
    &\le 5k^2 \cdot (12\log(k/c + 3000\Delta))^2 \cdot 2 \cdot 5^{-k} \cdot 3^{|\cal{V}_{x,3}(C)|}\cdot 2^{|\cal{V}_{x,2}(C)|}\cdot 2^{|\cal{V}_{y,2}(C)|}.
\end{align*}

Let $\gamma = 0.221$. By Markov's inequality, we have 
\begin{align*}
\Pr[b(C) > 5^{-\gamma k}] &= \Pr\left[(5/3)^{|\cal{V}_{x,3}(C)|}\cdot (5/2)^{|\cal{V}_{x,2}(C)|}\cdot (5/2)^{|\cal{V}_{y,2}(C)|} \cdot 5^{|\cal{V}_{y,1}(C)|} > 5^{(1-\gamma)k}\right]\\
&\le \min_{\theta > 0} \left((5^{(1-\gamma) k})^{-\theta} \left(\frac{3x}{5} (5/3)^{\theta} + \frac{2x}{5} (5/2)^{\theta} + \frac{4y}{5}(5/2)^{\theta}+\frac{y}{5}5^{\theta}\right)^k\right).
\end{align*}
Similarly, we have
\begin{align*}
\Pr[3^{|\cal{V}_{x,3}(C)|}\cdot 2^{|\cal{V}_{x,2}(C)|}\cdot 2^{|\cal{V}_{y,2}(C)|} > 5^{(1-2\gamma) k}] 
&\le \min_{\theta > 0} \left((5^{(1-2\gamma) k})^{-\theta} \left(\frac{3x}{5} 3^{\theta} + \frac{2x}{5} 2^{\theta} + \frac{4y}{5} 2^{\theta}\right)^k\right).
\end{align*}
For $\gamma = 0.221$, solving the above optimization problem in $\theta$, one finds that 
\[
\Pr[b(C) > 5^{-\gamma k}] \le 5^{-\gamma k},
\]and 
\[
\Pr[3^{|\cal{V}_{x,3}(C)|}\cdot 2^{|\cal{V}_{x,2}(C)|}\cdot 2^{|\cal{V}_{y,2}(C)|} > 5^{(1-2\gamma) k}] \le 5^{-\gamma k}.
\]
Using the above bounds, assuming that $\Delta \le c5^{\gamma k}/k^2$, by the LLL, there exists a choice of the projection so that for all $C\in \cal{C}$, 
\[
b(C) \le \min\left((300\Delta/\eta)^{-1}, (600k\Delta)^{-1}\right),
\]and 
\[
|\ol{\on{vbl}}(C)|^2 \kappa^2 \cdot \zeta(C) \cdot \prod_{v\in \ol{\on{vbl}}(C)} \left((1-3b)^{-\Delta} \p_\pi[\on{value}(v)=\boldsymbol{C}_\pi(v)] + e^{-\kappa/3}\right) \le (60000\Delta)^{-2}.
\]
Furthermore, by \cref{thm:alg-LLL}, such projection maps can be constructed in time $O(n\Delta k\log(1/\delta))$ with probability at least $1-\delta$. 

The case $A=7$ can be done similarly, using partitions of $[A]$ into sets of size $(3,2,2)$ or $(2,2,2,1)$. By using the same analysis as above, one can show that an admissible projection scheme exists when $\Delta \le c7^{\gamma k}/k^2$ for $\gamma = 0.236$, and moreover, that such a projection scheme can be constructed in time $O(n\Delta k\log(1/\delta))$ with probability at least $1-\delta$. 

\bigskip 

\noindent {\bf Case 5.} In the general case, we will combine the constructions used in the previous cases. 

For $v$ with $|\Omega_v|\ge 4$ and $|\Omega_v| \notin \{5,7\}$, let 
\[R_v := \argmax_{R\in [1,|\Omega_v|]} \min \left(\frac 1 2 \frac{\log(|\Omega_v|/\lceil |\Omega_v|/R\rceil)}{\log |\Omega_v|}, \frac{\log (\lfloor |\Omega_v|/R\rfloor)}{\log |\Omega_v|}\right),\]
and recall that 
\[
\min \left(\frac 1 2 \frac{\log(|\Omega_v|/\lceil |\Omega_v|/R_v\rceil)}{\log |\Omega_v|}, \frac{\log (\lfloor |\Omega_v|/R_v\rfloor)}{\log |\Omega_v|}\right) \geq \alpha_* = 0.25.
\] For $v\in V$ with $|\Omega_v| \ge 4$ and $|\Omega_{v}| \notin \{5,7\}$, we let $Q_v = [R_v]$, and define the projection $\pi_v$ arbitrarily so that the preimage of each element in $Q_v$ has size $\lfloor |\Omega_v|/R_v\rfloor$ or $\lceil |\Omega_v|/R_v\rceil$. Let $\cal{V}_L$ be the set of such variables $v\in V$. 

For each $C \in \cal{C}$, define $p_L(C) = \prod_{v\in \cal{V}_L \cap \on{vbl}(C)} \frac{1}{|\Omega_v|}$. Let $\cal{V}_S = V\setminus \cal{V}_L$ and let $p_S(C) = \prod_{v\in \cal{V}_S\cap \on{vbl}(C)} \frac{1}{|\Omega_v|}$. For each $A\in \{2,3,5,7\}$, 
as before, $\on{vbl}_A(C)$ denotes those variables $v \in V$ for which $|\Omega_{v}| = A$. We denote
\[
b_L(C) = \prod_{v\in \cal{V}_L\cap \on{vbl}(C)} \frac{1}{|\pi_v^{-1}(\boldsymbol{C}_\pi(v))|}, \quad \quad b_S(C) = \prod_{v\in \cal{V}_S\cap \on{vbl}(C)} \frac{1}{|\pi_v^{-1}(\boldsymbol{C}_\pi(v))|}.
\]
We also define
\[
t_L(C) = \prod_{v\in \cal{V}_L\cap \on{vbl}(C)} \frac{|\pi_v^{-1}(\boldsymbol{C}_\pi(v))|}{|\Omega_v|}, \quad t_S(C) = \prod_{v\in \cal{V}_S\cap \on{vbl}(C)} \frac{|\pi_v^{-1}(\boldsymbol{C}_\pi(v))|}{|\Omega_v|}, \quad t(C) = t_L(C)t_S(C).
\]
By the definition of $\alpha_*$, we have that 
\[
b_L(C) \le p_L(C)^{\alpha_*}, \quad \text{and} \quad  t_L(C) \le p_L(C)^{2\alpha_*}. 
\]

For $A\in \{2,3,5,7\}$ and for each $v \in V$ with $|\Omega_v| = A$, we define the projection $\pi_v$ randomly using the same partitions of $[A]$ as in previous cases. Note that the choice of the projection depends on certain probability parameters that will be chosen later from an optimization problem. We denote the collection of these parameters by $\mathbf{p}$. Then $b_S(C)^{-1}$ is a product of independent random variables indexed by $v\in \on{vbl}(C) \cap \cal{V}_S$. Similarly, $t_S(C)^{-1}$ is a product of independent random variables indexed by $v\in \on{vbl}(C) \cap \cal{V}_S$. Note that $t_S(C) b_S(C) = p_S(C)$. 

Let $\gamma > 0$ be a parameter to be chosen later from an optimization problem. As in previous cases, we can use Markov's inequality and obtain 
\begin{align*}
\Pr[b(C)^{-1} < p^{-\gamma}] &\le \Pr[b_S(C)^{-1} < p_L(C)^{\alpha_*-\gamma}p_S(C)^{-\gamma}] \\
&= \Pr[b_S(C)/p_S(C) > p_L(C)^{\gamma-\alpha_*}p_S(C)^{\gamma - 1}] \\
&\le \min_{\theta>0}\left(\left(p_L(C)^{\alpha_*-\gamma}p_S(C)^{1-\gamma}\right)^{\theta} \mb{E}[(b_S(C)/p_S(C))^{\theta}]\right).
\end{align*}
Let $\mathfrak{b}_A$ be the random variable $1/|\pi_v^{-1}(\boldsymbol{C}_\pi(v))|$ for (some) $v\in \on{vbl}_A(C)$; note that the distribution of this random variable is the same for all $v \in \on{vbl}_A(C)$ so that $\mathfrak{b}_A$ is indeed well defined. 
Let 
\[
\theta_b = \arg\min_{\theta>0} \max_{A\in \{2,3,5,7\}} \log_A\left(A^{-(1-\gamma)\theta}\mb{E}[(A\mathfrak{b}_A)^{\theta}]\right),
\]and let 
\[
m_b = -\max_{A\in \{2,3,5,7\}} \log_A\left(A^{-(1-\gamma)\theta_b}\mb{E}[(A\mathfrak{b}_A)^{\theta_b}]\right).
\]
Then 
\[
\Pr[b(C)^{-1}<p^{-\gamma}] \le p_L^{(\alpha_*-\gamma)\theta_b} p_S^{m_b} \le p^{\min(m_b,(\alpha_*-\gamma)\theta_b)}.
\]
Similarly, we have 
\begin{align*}
\Pr[t(C)^{-1} < p^{-3\gamma}] &\le \Pr[t_S(C)^{-1} < p_L(C)^{3(\alpha_*-\gamma)}p_S(C)^{-3\gamma}] \\
&= \Pr[b_S(C)^{-1} > p_S(C)^{-1+3\gamma}p_L(C)^{3(\gamma-\alpha_*)}]\\
&\le \min_{\theta>0}\left(\left(p_L(C)^{3(\alpha_*-\gamma)}p_S(C)^{1-3\gamma}\right)^{\theta} \mb{E}[(b_S(C)^{-\theta}]\right).
\end{align*}
Let 
\[
\theta_t = \arg\min_{\theta>0} \max_{A\in \{2,3,5,7\}} \log_A\left(A^{-(1-3\gamma)\theta}\mb{E}[(\mathfrak{b}_A)^{-\theta}]\right),
\]and let 
\[
m_t = -\max_{A\in \{2,3,5,7\}} \log_A\left(A^{-(1-3\gamma)\theta_t}\mb{E}[(\mathfrak{b}_A)^{-\theta_t}]\right).
\]
Then 
\[
\Pr[t(C)^{-1}<p^{-3\gamma}] \le p_L^{3(\alpha_*-\gamma)\theta_t} p_S^{m_t} \le p^{\min(m_t,3(\alpha_*-\gamma)\theta_t)}.
\]

We will maximize $\gamma$ subject to the constraints that 
\begin{align*}
\max_{\mathbf{p}}\left(\min(m_b,(\alpha_*-\gamma)\theta_b)\right) &\ge \gamma,\\
\max_{\mathbf{p}}\left(\min(m_t,3(\alpha_*-\gamma)\theta_t)\right) &\ge \gamma.
\end{align*}
Numerical optimization shows that one can take $\gamma = 0.142$. 

For such $\gamma$ and for sufficiently small $\eta>0$, assuming further that $\Delta \le p^{-\gamma - o_p(1)}$, it follows from the LLL that there exists a choice of projections $\pi_{v}$ for $v \in \mc{V}_S$ so that for all $C\in \cal{C}$, 
\[
b(C) \le (600\Delta/\eta)^{-1},
\]
and
\begin{equation}
t(C) \le p^{3\gamma}. \label{eqn:bound-t}
\end{equation}
Furthermore, by \cref{thm:alg-LLL}, such projection maps can be constructed in time $O(n\Delta k\log(1/\delta))$ with probability at least $1-\delta$. 

For such a projection scheme, the properties (A1), (A3) and (A4) are easily verified. We now show that (A2) is also satisfied. Let $\kappa = 12\log(3000(\Delta+100))$. Since $\zeta(C) \leq 2\Delta$, we have
\begin{align*}
    &|\ol{\on{vbl}}(C)|^2 \kappa^2 \cdot \zeta(C) \cdot \prod_{v\in \ol{\on{vbl}}(C)} \left((1-3b)^{-\Delta} \p_\pi[\on{value}(v)=\boldsymbol{C}_\pi(v)] + e^{-\kappa/3}\right)\\
    &\le (12\log(3000(\Delta+100)))^2 \cdot 2\Delta \cdot |\ol{\on{vbl}}(C)|^2 \prod_{v\in \ol{\on{vbl}}(C)} \left((1-3b)^{-\Delta} \p_\pi[\on{value}(v)=\boldsymbol{C}_\pi(v)] + e^{-\kappa/3}\right).
\end{align*}
By (A1) and our choice of $\kappa$, we have
$(1-3b)^{-\Delta} \p_\pi[\on{value}(v)=\boldsymbol{C}_\pi(v)] + e^{-\kappa/3} \le \frac{3}{4}$ for all $C \in \mc{C}$ and $v\in \on{\ol{vbl}}(C)$. We have two cases.

(1) If there exists some $v \in \on{\ol{vbl}}(C)$ for which $e^{-\kappa/3} > \frac{1}{|\ol{\on{vbl}}(C)|}\p_\pi[\on{value}(v)=\boldsymbol{C}_\pi(v)]$, then 
\begin{align*}
    &(12\log(3000(\Delta+100)))^2 \cdot 2\Delta \cdot |\ol{\on{vbl}}(C)|^2 \prod_{v\in \ol{\on{vbl}}(C)} \left((1-3b)^{-\Delta} \p_\pi[\on{value}(v)=\boldsymbol{C}_\pi(v)] + e^{-\kappa/3}\right)\\
    &\le (12\log(3000(\Delta+100)))^2 \cdot 2\Delta \cdot 2|\ol{\on{vbl}}(C)|^2\cdot (3/4)^{\ol{\on{vbl}}(C)-1}\cdot e^{-\kappa/3}\\
    &\le (60000\Delta)^{-2},
\end{align*}
where the last inequality follows by our choice of $\kappa$.

(2) On the other hand, if for all $v \in \on{\ol{vbl}}(C)$, $e^{-\kappa/3} \le \frac{1}{|\ol{\on{vbl}}(C)|}\p_\pi[\on{value}(v)=\boldsymbol{C}_\pi(v)]$, and further since
\begin{align}
\label{eqn:eta'}
6b\Delta \le \eta/100,
\end{align}
we have
\begin{align}
\label{eqn:wantA2}
    &(12\log(3000(\Delta+100)))^2 \cdot 2\Delta \cdot |\ol{\on{vbl}}(C)|^2 \prod_{v\in \ol{\on{vbl}}(C)} \left((1-3b)^{-\Delta} \p_\pi[\on{value}(v)=\boldsymbol{C}_\pi(v)] + e^{-\kappa/3}\right) \nonumber \\
    &\le (12\log(3000(\Delta+100)))^2 \cdot 6\Delta \cdot |\ol{\on{vbl}}(C)|^2 \prod_{v\in \ol{\on{vbl}}(C)} \left((1-3b)^{-\Delta} \p_\pi[\on{value}(v)=\boldsymbol{C}_\pi(v)]\right) \nonumber \\
    &\le (12\log(3000(\Delta+100)))^2 \cdot \Delta \cdot C_{\eta} \prod_{v\in \ol{\on{vbl}}(C)} \p_\pi[\on{value}(v)=\boldsymbol{C}_\pi(v)]^{1-\eta}.
\end{align}
Here we have used the inequality $x^2 (1-\delta)^x < C_\delta$, which holds for all $\delta \in (0,1)$, $x \geq 1$, and for a sufficiently large $C_\delta$ depending only on $\delta$. 
Now, since $\p_\pi[\on{value}(v) = \boldsymbol{C}_\pi(v)] = \frac{|\pi_v^{-1}(\boldsymbol{C}_\pi(v))|}{|\Omega_v|}$, it follows from \cref{eqn:bound-t} that \cref{eqn:wantA2} is at most $(60000\Delta)^{-2}$, which verifies (A2). \qedhere

\end{proof}

\bibliographystyle{alpha}
\bibliography{main.bib}

\end{document}